\documentclass[11ptUSenglish,onecolumn]{article}
\usepackage[a4paper, total={6in, 8in}]{geometry}

\usepackage[utf8]{inputenc}
\usepackage{natbib}
\usepackage{times}
 
\usepackage{hyperref}
\usepackage{xcolor}		
\definecolor{darkblue}{rgb}{0, 0, 0.5}
\hypersetup{colorlinks=true,citecolor=darkblue, linkcolor=darkblue, urlcolor=darkblue}
\usepackage{amsmath}
\usepackage{graphicx}
\usepackage{tikz}
\usepackage{amssymb}
\usetikzlibrary{arrows}
\usepackage{hyperref} 
\usepackage{amsthm}
\usepackage{float}
\usepackage{mathrsfs} 
\usepackage[final]{pdfpages}
\usepackage[normalem]{ulem}
\usepackage{setspace}
\usepackage{enumitem}   
\usepackage{algorithm,algorithmic}
\usepackage{caption}


\doublespacing

\makeatletter


      \theoremstyle{plain}
      \newtheorem{assumption}{Assumption}
      \newtheorem{theorem}{Theorem}
       \newtheorem{lemma}{Lemma}
       \newtheorem{condition}{Condition}
       \newtheorem{remark}{Remark}

\newcommand{\expit}{\text{expit}}

\newcommand\@erelb@r[1]{%
  \mathrel{\tikz[baseline=-.5ex]\draw[#1] (0,0)--(0.3,0);}
}

\newcommand{\erelbar}[1]{\@erelbar#1}
\def\@erelbar#1#2{%
  \ifcase\numexpr#1*4+#2\relax
    \@erelb@r{-}\or     
    \@erelb@r{->}\or    
    \@erelb@r{-|}\or    
    \@erelb@r{->|}\or   
    \@erelb@r{<-}\or    
    \@erelb@r{<->}\or   
    \@erelb@r{<-|}\or   
    \@erelb@r{<->}\or   
    \@erelb@r{|-}\or    
    \@erelb@r{|->}\or   
    \@erelb@r{|-|}\or   
    \@erelb@r{|<->|}\or 
    \@erelb@r{|<-}\or   
    \@erelb@r{|<->}\or  
    \@erelb@r{|<-|}\or  
    \@erelb@r{|<->|}    
  \else
    \@wrong
  \fi
}
\makeatother

\newcommand{\gop}{
\overset{p}{\to}
}

\newcommand{\gol}{
\overset{\mathcal{L}}{\to}
}

\newcommand{\fd}[3]{\frac{\partial }{\partial #1} #2\bigg|_{#3}}
\newcommand{\sds}[3]{\frac{\partial^2 }{\partial #1^2} #2\bigg|_{#3}}

\newcommand{\indep}{\perp \!\!\! \perp}

\newlength\myindent
\setlength\myindent{2em}
\newcommand\bindent{%
  \begingroup
  \setlength{\itemindent}{\myindent}
  \addtolength{\algorithmicindent}{\myindent}
}
\newcommand\eindent{\endgroup}

\newcommand{\expan}[2]{
E_n
\fd{\beta}{\left(\frac{v}{E_n r} - \gamma kl\right)}{\substack{\beta=#1\\ b=#2}}
&\approx
E_n\frac{\partial}{\partial\beta}\left(\frac{v}{E_n r} - \gamma kl\right) \bigg|_{\substack{\beta=\beta_{0,\gamma} \\ b=b_0}}\\
&+  (#1-\beta_{0,\gamma})^TE_n \frac{\partial^2}{\partial \beta^2}\left(\frac{v}{E_n r} - \gamma kl\right)\bigg|_{\substack{\beta=\beta_{0,\gamma}\\b=b_0}}\\
&+
(#2-b_0)^T E_n\fd{b}
{\frac{\partial}{\partial \beta}\left(\frac{v}{E_n r} - \gamma kl\right)}
{{\substack{\beta=\beta_{0,\gamma}\\b=b_0}}}
}

\newcommand{\AC}{\mathcal{A}}

\newcommand{\fdr}[2]{\frac{\partial }{\partial #1} #2}
\newcommand{\sdsr}[2]{\frac{\partial^2 }{\partial #1^2} #2}
\newcommand{\sddr}[3]{\frac{\partial }{\partial #1}\frac{\partial}{\partial#2} #3}

\newcommand{{\citerelativesparsity}}{\cite{relsparSIM}}
\newcommand{{\citeinference}}{our work}
\usepackage{authblk}

  \title{\huge Inference for relative sparsity
  }
\author[1,2]{Samuel J. Weisenthal}
\author[1]{Sally W. Thurston}
\author[1]{Ashkan Ertefaie}
\affil[1]{Department of Biostatistics and Computational Biology, University of Rochester School of Medicine and Dentistry, Rochester, NY}
\affil[2]{Medical Scientist Training Program, University of Rochester School of Medicine and Dentistry, Rochester, NY}
\affil[ ]{\textit {\{Samuel\_Weisenthal,\ 
Sally\_Thurston,\
Ashkan\_Ertefaie\}@URMC.Rochester.edu}}

\begin{document}











\maketitle

 \begin{abstract}
{In healthcare, there is much interest in estimating policies, or mappings from covariates to treatment decisions. Recently, there is also interest in constraining these estimated policies to the standard of care, which generated the observed data. A relative sparsity penalty was  proposed to derive policies that have sparse, explainable differences from the standard of care, facilitating justification of the new policy.  However, the developers of this penalty only considered estimation, not inference. 
Here, we develop inference for the relative sparsity objective function, because characterizing uncertainty is crucial to applications in medicine.   Further, in the relative sparsity work, the authors only considered the single-stage decision case; here, we consider the more general, multi-stage case.
Inference is difficult, because the relative sparsity objective depends on the unpenalized value function, which is unstable and has infinite estimands in the binary action case. Further, one must deal with a non-differentiable penalty.
To tackle these issues, we nest a weighted Trust Region Policy Optimization function within a relative sparsity objective, implement an adaptive relative sparsity penalty, and propose a sample-splitting framework for post-selection inference. We study the asymptotic behavior of our proposed approaches, perform extensive simulations, and analyze a real, electronic health record dataset.}
\end{abstract}


\section{Introduction}

Treatment policies, or mappings from patient covariates to treatment decisions, can help healthcare providers and patients make more informed, data-driven decisions, and there is great interest in both the  statistical and reinforcement learning communities  in developing methods for deriving these policies \citep{Chakraborty2013,futoma2020popcorn,uehara2022review}. 
  There is particularly recent interest in deriving constrained versions of these policies. While there has been work on the general theory of constrained  reinforcement learning \citep{le2019batch,geist2019theory}, several methodologies can be more specifically  categorized as ``behavior-constrained'' policy optimization, an umbrella term used in \cite{wu2019behavior} to encompass the  array of methods that constrain the new, suggested policy to be similar to the ``behavioral'' policy that generated the data.   Examples of `behavior-constrained' policy optimization  include  entropy-constrained policy search  \citep{haarnoja2017reinforcement,ziebart2008maximum,peters2010relative};  what \cite{le2019batch} calls ``conservative''   policy improvement methods, such as guided search and Trust Region Policy Optimization (TRPO)   \citep{levine2014learning,schulman2015trust,schulman2017proximal,achiam2017constrained,le2019batch}, which constrain optimization such that large divergences  from the previous policy are discouraged; and other approaches with similar goals, such as likelihood weighting, entropy penalties  \citep{fujimoto2019off,ueno2012weighted,dayan1997using, peters2010relative,haarnoja2017reinforcement,ziebart2008maximum},  imitation learning   \citep{le2016smooth}, value constrained model-based reinforcement learning  \cite{futoma2020popcorn,farahmand2017value}, and tilting  \citep{kennedy2019nonparametric,kallus2020efficient}. 
In {\citerelativesparsity}, a relative sparsity penalty was developed, which differs from behavior constraints in existing studies 
in that it focuses on explainability and relative interpretability between the suggested policy and the standard of care. 
In {\citerelativesparsity}, however, only estimation, not inference, was considered for the relative sparsity objective function.  In {\citeinference}, therefore, we consider the challenging problem of inference for the relative sparsity objective. Further, in    {\citerelativesparsity}, the authors only considered the single-stage decision setting; here, we consider the more general, multi-stage decision setting.
The objective function in relative sparsity  combines a raw value (expected reward) function with a relative Lasso penalty, where both components pose challenges for inference.   In the binary action case, under a parameterized policy, the raw value function is optimized by estimands that are infinite or arbitrarily large in magnitude, a consequence of the the fact that the policy that solves the raw value objective function is deterministic \citep{lei2017actor,puterman2014markov,relsparSIM}. 

The contribution of {\citeinference} beyond the existing literature, can be summarized in the following five ways. First, to address the issue of estimands that are of infinite or arbitrarily large magnitude, we propose a double behavior constraint, nesting a weighted TRPO behavior constraint within the relative sparsity objective.
Second, based on work in \cite{zou2006adaptive}, we develop  methodology for an adaptive relative sparsity formulation,  which improves discernment of the penalty.   Third, we provide a sample splitting framework that is free of issues  associated with post-selection inference \citep{cox1975note,leamer1974false,kuchibhotla2022post}. Fourth, we rigorously study the asymptotic properties of all frameworks: inference for existing TRPO methods has not been studied due to the general focus on pure prediction in robotics applications. To fill this gap, we develop
novel theory for inference in the TRPO framework and, in particular, for the weighted \citep{thomas2015safe,mcbook} TRPO estimator. Further, we rigorously study the asymptotic  theory for the adaptive relative sparsity penalty and develop theory for confidence intervals in the sample splitting setting.  We take special care to develop theory around the nuisance, which appears not only in the denominator of the inverse probability weighting expression  but, also, in the sample splitting case, within the suggested policy itself.
Fifth, 
we consider the more general multi-stage, Markov Decision Process (MDP), setting.
We  perform simulation studies, revealing how the magnitude of the tuning parameter  impacts inference, and showing where the proposed methodology and theory succeeds, and where it might fail.  
We conclude {\citeinference} with a data analysis of a real, observational dataset, derived from the MIMIC III database \citep{johnson2016mimic2, johnson2016mimic1, goldberger2000physiobank}, performing inference on a relatively sparse decision policy for vasopressor administration.
Although similar, routinely collected health data has been used for prediction (e.g., \citep{futoma2015comparison,lipton2015learning,weisenthal2018predicting}), there has been less work toward developing rigorous decision models with this data, and we fill this gap as well.

Developing the statistical inference properties of  the relative sparsity penalty allows us to better port this useful technique to healthcare, where the uncertainty associated with the new, suggested policy is important in order to guide healthcare providers, patients, and other interested parties as they choose whether or not to adopt these new treatment strategies.
This work ultimately facilitates translation of  data-driven treatment strategies from the  laboratory to the clinic,  where they might substantially improve health outcomes.   

\section{Notation}
 \label{sec:not}
 Throughout {\citeinference},  we  use subscript $0$ and $n$ to denote a true parameter and an estimator derived from a sample of size $n$ (i.e., $\beta_n$ is an estimator for $\beta_0$ based on a sample of size $n$), respectively.
If we have a vector, $v_t,$ indexed at time $t,$ we index  dimension $k$ as $v_{t,k}.$  We index the components of a parameter vector, $\theta,$ as $\theta_k$. In many cases, we will further subscript parameters according to penalty tuning parameters (e.g., $\lambda$), as in $\theta_{\gamma,\lambda}$, and, in this case, we will index dimension $k$ as $\theta_{\gamma,\lambda,k}$.
Let $\beta$ denote a parameter that indexes an arbitrary policy, $\pi_{\beta}(a|s),$ where $a$ and $s$ are a binary action (treatment) and state (patient covariates), respectively. Let similarly $b$ denote an arbitrary  nuisance parameter that indexes the existing, behavioral policy, which corresponds to the standard of care.  Let  $E,E_n$ be the true and empirical expectation operators, where, for some function $f,$ $E_n f = \frac{1}{n} \sum_i f(X_i).$ 
 We will sometimes write, for arbitrary  functions  $f$ and $g$, \[E_n \frac{f}{E_n g}=\frac{1}{n}\sum_i \frac{f_i}{E_n g}=\frac{1}{n}\sum_i \frac{f_i}{\frac{1}{n}\sum_i g_i}.\] 
We will often use capital letters to refer to an average and their lowercase counterparts to refer to the elements being averaged; i.e., we write $F_n=E_n f.$
Let $\pi_{\beta,b}$ denote a policy in which some components of $\beta$ are fixed to components in $b$.  Assuming random variable $A$ is discrete and random variable $S$ is continuous, let $E_b(f)=\sum_a \int_s f(a,s)\pi_b(a|s)p(s)ds$ denote an expectation of some deterministic (non-random) function $f$ with respect to policy $\pi_b.$
 Note that, in line with our subscript conventions for estimands and estimators mentioned above, $\pi_{\beta_n}$ is an estimator for $\pi_{\beta_0}$.

\section{Background}
\label{rs.background.multi.notation}

\subsection{Markov decision processes (MDPs)}
\label{inf:backgroundmdp}
Consider the multi-stage, discrete-time, Markov decision process (MDP), a general model for data that evolves over time based on the actions of some agent as it interacts with an environment \citep{bellman1957markovian}.  The MDP and its extensions have been used to model many problems in the medical domain; for examples related to diabetes, hypotension, and mobile health, see \cite{Chakraborty2013,ertefaie2018constructing,futoma2020popcorn,lei2017actor,luckett2019estimating}. We aim to address similar healthcare problems here. Let us have a continuous, $K$-dimensional state, $S\in \mathbb{R}^K$, which may, e.g., contain a patient's covariates.
 Let us also have a binary action, $A\in\{0,1\}$ which may be, e.g., the administration of a medication.
Let us sample $n$ independent and identically distributed length-($T+1$) patient trajectories of states and actions (hence, all trajectories must be of the same length).  
The random sample from a single  trajectory is then of the form
$\left\{S_{i,0},A_{i,0},S_{i,1},A_{i,1},\dots,S_{i,T},A_{i,T},S_{i,T+1}\right\}_{i=1,\dots,n}$,
where $S_{i,t}$ is the stage-$t$ state of patient $i$ and $A_{i,t}$ is the stage-$t$ action for patient $i.$   A trajectory is sampled from a fixed distribution denoted by $P_0$, which can be factored into an initial state distribution, $P_0(S_0)$, the  transition probability, $P_{0,t+1}(S_{t+1}|A_t,S_t,\dots,A_0,S_0),$ and the data-generating policy, 
$P_{0,t}(A_t|S_t,\dots,A_0,S_0).$  The latter is called a ``policy,'' because we can imagine that a trajectory is constructed by a healthcare provider (and/or patient) drawing an action conditional on the patient history.
We assume that the policy is Markov and does not change over time, which are common assumptions in these problems \citep{sutton2018reinforcement}. 
\begin{assumption}{Markov property:}
\label{assum:markov}
 $P_{0,t}(A_t|S_t,\dots,A_0,S_0)=P_{0,t}(A_t|S_t).$
\end{assumption}
\begin{assumption}{Stationarity:}
\label{assum:stationarity}
 $P_{0,t}(A_t|S_t)=P_{0}(A_t|S_t).$
\end{assumption}
To facilitate interpretation, we parameterize $P(A_t|S_t)$ with arbitrary vector parameter $\theta$ (which can refer to $\theta=\beta$ or $\theta=b$ depending on whether we are referring to the parameter that we are optimizing over or the behavioral policy parameter),  so that it is  $P_{\theta}(A_t|S_t).$ We then denote $P_{\theta}(A_t|S_t)$ as $\pi_{\theta}(A_t|S_t),$ as is convention \citep{sutton2018reinforcement}. For this parameterization, we propose the model
\begin{align}
\label{eq:thetapolicyInf}
&\pi_{\theta}(A_t=1|S_t=s)=\expit(\theta^T s)=
\frac{\exp({\theta^Ts})}{1+\exp({\theta^Ts})}.
\end{align}
Under Assumption \ref{assum:stationarity}, Assumption \ref{assum:markov}, and the model in (\ref{eq:thetapolicyInf}), we have the sampling distribution \[P_0(S_0)\pi_{b_0}(A_0|S_0)\prod_{t=1}^{T}\pi_{b_0}(A_t|S_t)P_{0,t}(S_t|A_{t-1},S_{t-1},\dots,A_0,S_0),\] where $P_0(A_t|S_t)$,  is parameterized by $b_0$ under (\ref{eq:thetapolicyInf}) and becomes $\pi_{b_0}(A_t|S_t)$.

Let there also be a deterministic, stationary reward function 
\begin{equation}
\label{eq:R}
R(S_t,A_t,S_{t+1}),
\end{equation}
which maps the state at some time to a utility.
 The total return for one trajectory (or episode) is the sum of  rewards from the first time step to the end of the trajectory
$\sum_{t=0}^{T}R(S_t,A_t,S_{t+1}).$ Because we focus on  finite-horizon cases, there is no need to consider discounted cumulative rewards, in which the contribution of states that occur later in time is down-weighted \citep{sutton2018reinforcement}.  

Let us also parameterize an arbitrary policy $\pi_{\beta}$ with a vector of coefficients $\beta$  using (\ref{eq:thetapolicyInf}).
Define the value of a policy, which is the expected return when acting under the policy $\pi_{\beta}$, as
\begin{equation}
\label{eq:V0}
V_0(\beta)=E_{\beta}\sum_{t=0}^{T}R(S_t,A_t,S_{t+1}).
\end{equation}
The reward-maximizing policy  $\pi_{\beta_0}$ can be obtained by solving
\begin{equation}
\label{eq:beta0inference}
\beta_0=\arg\max_{\beta}V_{0}(\beta).
\end{equation} 
\begin{remark}
\label{rem:beta0determ}
One cannot perform inference for $\beta_0.$ In Lemma \ref{lemma:det} (Appendix \ref{supp:proverDeterm}), we show that, under the model for the policy in  (\ref{eq:thetapolicyInf}), because the optimal policy $\pi_{\beta_0}$ is deterministic,  $\beta_0$ diverges in magnitude.     
\end{remark}
The divergence issue precludes inference in 
{\citerelativesparsity}, where the authors optimize the ``relative sparsity'' objective,
$V_0(\beta)-\lambda||\beta-b_{0}||_1.$
We overcome this issue by replacing 
 the relative sparsity  ``base'' objective, $V_0,$ with the full TRPO objective, as we will now describe.

\subsection[On inference with TRPO]{On inference with Trust Region Policy Optimization (TRPO)}
\label{sec:behcon}
 
Inference for Trust Region Policy Optimization (TRPO) \citep{schulman2015trust} has not previously been considered, to our knowledge, because the robotics applications for which TRPO was developed might not benefit from inference in the way that medical applications might. The remainder of this  section, therefore, contains what we believe to be novel insights.

The objective $V_0$ in (\ref{eq:V0}) is not behavior constrained, which precludes inference. To mitigate this issue, we add a Kullback-Leibler ($KL$) \citep{kullback1951information} behavior constraint \citep{schulman2015trust}.   We choose $KL$ divergence because it is an expectation and therefore has favorable asymptotic properties. 
 More specifically, for fixed $\gamma,$ we will employ the following objective as a new base objective for relative sparsity,
\begin{equation}
\label{eq:M0}
    M_{0}(\beta,b,\gamma) = V_0(\beta) - \gamma KL_0(\beta,b),
\end{equation}
where 
\begin{equation}
\label{eq:KL0}
KL_0(\beta,b)=E_b\log\left({\prod_{t=0}^T\pi_{b}(A_t|S_t)}\biggr/{\prod_{t=0}^T\pi_{\beta}(A_t|S_t)}\right).
\end{equation}
In general, finding a policy with minimal $KL$ divergence from the behavioral policy is equivalent to finding a policy that maximizes likelihood \citep{van2000asymptotic}. In the objective function ({\ref{eq:M0}}), this would be achieved by setting $\gamma=\infty.$

Now, the following estimand, $\beta_{0,\gamma},$ is \textit{behavior-constrained}
\begin{equation}
\label{eq:beta0gamma}
\beta_{0,\gamma}=\arg\max_{\beta}M_0(\beta,b_0,\gamma).
\end{equation}
One can perform inference for $\beta_{0,\gamma},$ which will be finite in magnitude, even though, as discussed in Remark \ref{rem:beta0determ}, one cannot perform inference for $\beta_0,$ because $\beta_0$ is infinite in magnitude.
\begin{remark}
\label{rem:finitebeta0gamma}
If $\gamma>0,$ the behavior-constrained solution $\beta_{0,\gamma}=\arg\max_{\beta} M_0(\beta,b_0,\gamma)$ is finite in magnitude, which allows for inference.  
\end{remark}
 Our estimand, $\beta_{0,\gamma}$, depends on a parameter $\gamma;$ we are targeting a behavior-constrained estimand. This is in contrast to typical penalization, where one targets some estimand that does not depend on a parameter and penalizes in order to reduce the variance of the estimator. 
Having addressed the issue of infinite estimands, we can now augment the behavior constrained objective function in (\ref{eq:M0}) with a relative sparsity objective.
 
\section{Methodological Contributions}
 \label{sec:contributions}
\subsection[Adding relative sparsity to TRPO]{Adding (adaptive) relative sparsity to  Trust Region Policy Optimization (TRPO)}
\label{sec:method:adaptiveRe}

Having stated the Trust Region Policy Optimization (TRPO) objective in (\ref{eq:M0}), we will now add to it relative sparsity.  As discussed in Remark \ref{rem:finitebeta0gamma},  maximizing the TRPO objective function defined by (\ref{eq:M0}) allows for inference for a behavior-constrained policy. Behavior constraints create closeness to the standard of care, which facilitates adoption, since changes to practice guidelines can pose challenges for healthcare providers \citep{gupta2017physician,lipton2018mythos,rudin2019stop}.  However, in a healthcare setting, one must convince the healthcare provider (and possibly also the patient) to adopt a new treatment policy. This is facilitated if the number of parameters that differ between the two policies is small, which is related to considerations such as cognitive burden \citep{miller2019explanation,du2019techniques,relsparSIM}.
To achieve relative sparsity, we propose
\begin{align}
\label{eq:W0}
\notag   W_0(\beta,\beta_{0,\gamma},b_0)&= M_0(\beta,b_0)-\lambda\sum_{k=1}^K{w}_{0,k} |\beta_k-b_{0,k}|\\
   &= V_0(\beta)-\gamma KL_0(\beta,b_0)-\lambda\sum_{k=1}^K{w}_{0,k} |\beta_k-b_{0,k}|, 
\end{align}
where $M_0$ is defined in (\ref{eq:M0}) and
$w_{0,k}=1/|\beta_{0,\gamma,k}-b_{0,k}|^{\delta}.$
Accordingly, we define our estimand as
\begin{equation}
\label{eq:beta0gammalambda}
\beta_{0,\gamma,\lambda}=\arg\max_{\beta} W_0(\beta,\beta_{0,\gamma},b_0).
\end{equation} 
\newcommand{\deltameaning}{
In \cite{zou2006adaptive}, the authors try $\delta=\{.5, 1, 2\}.$
Recall that the weight of the penalty term is $w_n=1/|\beta_{n,\gamma}-b_n|^{\delta}.$
If $\delta=1,$ then the weight for coefficient $k$ is $|(\beta_{n,\gamma,k})-(b_{n,k})|,$ and coefficients that are truly non-behavioral, $|(\beta_{0,\gamma,k})-(b_{0,k})|>0$, will have a smaller penalty while the coefficients that are truly behavioral, $|(\beta_{0,\gamma,k})-(b_{0,k})|=0$, will have a larger penalty  (\cite{zou2006adaptive} notes that this is related to Breiman's non-negative Garotte). If $\delta>1,$ if  $|(\beta_{n,\gamma,k})-(b_{n,k})|>1,$ then the penalty weight will the inverse of the power of a number greater than one, and it will therefore be small.  If $|(\beta_{n,\gamma,k})-(b_{n,k})|<1,$ the penalty weight will be the inverse of a power of a number less than one, and it will therefore be large. Hence, setting $\delta>1$ should make the penalty more ``adaptive,'' and, as $\delta\rightarrow\infty,$ the adaptivity should increase. If $\delta<1,$ and $|(\beta_{0,\gamma,k})-(b_{0,k})|>1,$ then the denominator of the penalty will be less large, and the penalty will be larger than it would be if $\delta=1$; in the reverse case,  if $|(\beta_{0,\gamma,k})-(b_{0,k})|<1,$ the denominator will be larger, and the penalty will be smaller  than it would be if $\delta=1.$  Hence, the penalty will be tempered for all coefficients (the adaptive penalty becomes less adaptive).  As $\delta \rightarrow 0$ we approach  $|(\beta_{n,\gamma,k})-(b_{n,k})|^0=1,$ in which case the penalty weight does not depend on the coefficients at all (it is completely non-adaptive), and we return to a standard relative Lasso.
}
The added Lasso penalty brings relative sparsity to our behavior constrained estimand $\beta_{0,\gamma}.$  In practice, increasing the weight of the Lasso penalty will also cause some behavior-constraint, but this is intended to  be minimal; unlike in the objective function of {\citerelativesparsity}, where the Lasso penalty jointly performs shrinkage to behavior and selection, the Lasso penalty in (\ref{eq:W0}) should only perform selection, while the $KL_0$ penalty performs shrinkage to behavior. In Equation (\ref{eq:W0}), The degree of shrinkage and selection will be controlled, respectively, by the tuning parameters $\gamma,$ which controls the degree of closeness to the behavioral policy, and $\lambda,$ which controls the degree of relative sparsity. Further, $\delta,$ which is proposed in \cite{zou2006adaptive}, controls the adaptivity of the adaptive Lasso penalty (and is discussed more in Appendix \ref{app:deltameaning}).  We will discuss how to choose these tuning parameters when we discuss estimation.


\subsection{Sample splitting in the relative sparsity framework}
\label{sam.split}

Let $\AC$ be a set containing the indices of selected (non-behavioral) covariates.  Let $1_{\AC}$ be an indicator for the selected (non-behavioral) covariates, so
 $1_{\mathcal{A}}=(1_{1\in \mathcal{A}},\dots,1_{K \in \mathcal{A}})^T$.  The Lasso penalty in (\ref{eq:W0}) performs selection, giving us $\AC.$  However, as with any selection, we must avoid issues with post-selection inference \citep{leamer1974false}. For this, we use sample splitting \citep{cox1975note}, where we perform selection on one split and then inference on a second, independent split.

We first optimize (\ref{eq:W0}) to obtain a selection, $\AC$. In standard sample splitting, one would eliminate the non-selected coefficients. In our case, we keep non-selected variables, but we fix their parameters to their behavioral counterparts, and then we perform inference only with respect to the non-behavioral parameters.  For this purpose, letting $\odot$ denote element-wise multiplication, we propose a novel representation of a policy as 
\begin{equation}
\label{eq:postSelectPolicy}
\pi_{\beta,b}(1|s)=\expit(\beta^T (s\odot 1_{\mathcal{A}})+b^T (s\odot (1_K-1_{\mathcal{A}})) ),
\end{equation}
where $1_K$ is a length-$K$ vector of ones.
We can then take partial derivatives of $\pi_{\beta,b}$ with respect to $\beta$ or $b,$ as necessary, while fixing some entries of $\beta$ to $b_0$ (in practice, we fix these entries not to $b_0$ but to an estimator of $b_0$).  The form of each partial derivative, amended for the  post-selection policy in (\ref{eq:postSelectPolicy}), is included in Appendix \ref{app:gradients}.  In particular, the cross derivatives  now have extra terms, because the nuisance, $b,$ appears in the suggested policy as well as in the behavioral policy.


\section{Estimation}

\subsection{Value}
\label{sec:estValue}
We now discuss estimation of the value, or expected return, under some arbitrary policy, indexed by $\beta$.
For this, we will need to take a counterfactual expectation, which can be done using importance sampling or inverse probability weighting \citep{kloek1978bayesian,precup2000eligibility,thomas2015safe,horvitz1952generalization,robins1994estimation,Chakraborty2013,precup2000eligibility}. 
 \newcommand{\IS}{
 Define \[\left\{A_{i,0:T},S_{i,0:T+1}\right\}=\
\{A_{i,0},\dots,A_{i,T},S_{i,0},\dots,S_{i,T+1}\}.\]
Define the likelihood of a trajectory under $P$ and the behavioral policy $\pi$ indexed by an arbitrary parameter $b$ is 
\[
P_{b}(A_{0:T},S_{0:T+1})=P_0(S_0)\pi_{b}(A_0|S_0)P_{b}(A_{1:T},S_{1:T})P_{0,T+1}(S_{T+1}|A_{T},S_{T},\dots,A_0,S_0),
\]
where 
\[P_{b}(A_{1:T},S_{1:T})=\prod_{t=1}^{T}\pi_{b}(A_t|S_t)P_{0,t}(S_t|A_{t-1},S_{t-1},\dots,A_0,S_0).\]
We obtain the potential value under policy $\pi_{\beta}$ by noting that
\begin{align*}
V_{0}(\beta)&=E_{\beta}\left[\sum_{t=0}^{T}R(S_t,A_t,S_{t+1})\right]\\&=E_{b}\left\{\rho(A_{0:T},S_{0:T+1};\beta,b) \sum_{t=0}^{T}R(S_t,A_t,S_{t+1})\right\},
\end{align*} where
\begin{align*}
\rho(A_{0:T},S_{0:T+1};\beta,b)&=\frac{P_{\beta}(A_{0:T},S_{0:T+1})}{P_{b}(A_{0:T},S_{0:T+1})}\\
&=\frac{P(S_0)\pi_{\beta}(A_0|S_0)P_{\beta}(A_{1:T},S_{1:T})P_{T+1}(S_{T+1}|A_{T},S_{T},\dots,A_0,S_0)}
{P(S_0)\pi_{b}(A_0|S_0)P_{b}(A_{1:T},S_{1:T})P_{T+1}(S_{T+1}|A_{T},S_{T},\dots,A_0,S_0)}\\
&=\frac{\pi_{\beta}(A_0|S_0)P_{\beta}(A_{1:T},S_{1:T})}
{\pi_b(A_0|S_0)P_{b}(A_{1:T},S_{1:T})}\\
&=\frac{\pi_{\beta}(A_0|S_0)\prod_{t=1}^{T}\pi_{\beta}(A_t|S_t)P_t(S_t|A_{t-1},S_{t-1},\dots,A_0,S_0)}
{\pi_b(A_0|S_0)\prod_{t=1}^{T}\pi_b(A_t|S_t)P_t(S_t|A_{t-1},S_{t-1},\dots,A_0,S_0)}\\
&=\frac{\pi_{\beta}(A_0|S_0)\prod_{t=1}^{T}\pi_{\beta}(A_t|S_t)}
{\pi_b(A_0|S_0)\prod_{t=1}^{T}\pi_b(A_t|S_t)}\\
&=\prod_{t=0}^{T}\frac{\pi_{\beta}(A_t|S_t)}{\pi_b(A_t|S_t)}.
\end{align*}
} 
  Starting  with expressions that only depend on the observed data, we rederive, in Appendix \ref{est.v0.Precup}, the well-known fact \citep{thomas2015safe} that an estimand for the potential value under a policy $\pi_{\beta}$, can be written, as long as the denominator is never zero (which is formalized as an assumption in Section \ref{sec:assumptions}), as 
\begin{align}
\label{eq:V0times1}
V_0(\beta)=E_{\beta}\sum_{t=0}^{T}R(S_t,A_t,S_{t+1})
=E_{b}\left\{\frac{\prod_{t=0}^T \pi_{\beta}(A_t|S_t)}{\prod_{t=0}^T \pi_{b}(A_t|S_t)}\ \sum_{t=0}^{T}R(S_t,A_t,S_{t+1})\right\}.
\end{align}
Then $V_0$ can be estimated using an inverse probability weighted estimator.
However, in the multi-stage case, the vanilla inverse probability weighted estimator is unstable.  We therefore use a weighted, as it is called in \cite{thomas2015safe}, or self-normalized, as it is called in  \cite{mcbook}, importance sampling estimator,
\begin{align}
\label{eq:vnestw}
{V}_n(\beta,b)
= \frac{\frac{1}{n}\sum_{i=1}^n \frac{\prod_{t=0}^T \pi_{\beta}(A_{i,t}|S_{i,t})}{\prod_{t=0}^T \pi_{b}(A_{i,t}|S_{i,t})}\sum_{t=0}^T R(S_{i,t},A_{i,t},S_{i,t+1})}{\frac{1}{n}\sum_{i=1}^n\frac{\prod_{t=0}^T \pi_{\beta}(A_{i,t}|S_{i,t})}{\prod_{t=0}^T \pi_{b}(A_{i,t}|S_{i,t})}.}
\end{align}

Note that (\ref{eq:vnestw}) has two arguments while (\ref{eq:V0}) has one, because (\ref{eq:vnestw}) depends on $b$ through the empirical inverse probability weighting ratio, whereas the non-empirical terms involving $b$ cancel in (\ref{eq:V0}). Let \[b_n=\arg\max_{b} \sum_{i=1}^n\sum_{t=0}^T\log \pi_{b}(A_{i,t}|S_{i,t})\] denote an estimator of $b_0$ (see Appendix \ref{app:estimatingBehavioral}). 
Then, replace $b$ in (\ref{eq:vnestw}) with $b_n$  to obtain $V_n(\beta,b_n),$  an estimator for the potential value, $V_0(\beta).$
For the parameter of the (unrestricted) optimal policy, $\beta_0$, define the estimator  
$\beta_n=\arg\max_{\beta} V_n(\beta,{b_n}).$

\subsection{Trust Region Policy Optimization (TRPO)}

Now that we have shown how to estimate $V_0$ in (\ref{eq:V0}), we discuss how to estimate $M_0$ in (\ref{eq:M0}).  For this, we write an estimator of of $KL_0$, defined in (\ref{eq:KL0}),  as 
\begin{equation}
\label{eq:KLn}
KL_n(\beta,b)=\frac{1}{n}\sum_i \log \frac{\prod_{t=0}^T \pi_{b}(A_{i,t}|S_{i,t})}{\prod_{t=0}^T \pi_{\beta}(A_{i,t}|S_{i,t})}.
\end{equation}
We showed how we can use ${V}_n$ to estimate $V_0$ in (\ref{eq:vnestw}). 
We hence estimate $M_0,$ defined in (\ref{eq:M0}), with 
\begin{align}
\label{eq:mn}
{M}_n(\beta,b_n,\gamma) &= {V}_n(\beta,b_n)-\gamma KL_n(\beta,b_n).
\end{align} 
 We estimate $\beta_{0,\gamma},$ defined in (\ref{eq:beta0gamma}), with 
 \begin{equation}
 \label{eq:betangamma}
 \beta_{n,\gamma}=\arg\max_{\beta}{M}_n(\beta,b_n,\gamma).
 \end{equation}
Increasing $\gamma$ increases the degree of behavior constraint of $\pi_{\beta_{0,\gamma}},$ which is important for obtaining ``closeness'' to the standard of care, as discussed in Section \ref{sec:behcon}. As mentioned in Section \ref{sec:contributions}, we benefit from closeness to the standard of care when we translate a suggested policy to the clinic, because  the suggested policy will be more likely to be adopted when its suggested treatment aligns with the established guidelines. 
Increasing $\gamma$ also  leads to stabilization of the objective, since $V_n$ is typically more unstable than $KL_n$, where minimizing $KL_n$, as discussed in Section \ref{sec:behcon}, is equivalent to maximizing likelihood.  This stabilizes inference as well as estimation.

\subsection{Adaptive relative sparsity}
\label{sec:adaptiveRelativeLasso}
We finally define an estimator for $W_0$, defined in (\ref{eq:W0}), as
\begin{align}
\label{eq:Wn}
 W_n(\beta,\beta_{n,\gamma},b_n)&={M}_n(\beta,b_n)-\lambda\sum_{k=1}^K w_{n,k} |\beta_k-b_{n,k}|,
\end{align}
where we estimate $w_{0,k},$ defined in Section \ref{sec:method:adaptiveRe}, with
$w_{n,k}={1}/{|\beta_{n,\gamma,k}-b_{n,k}|^{\delta}}.$
We then estimate $\beta_{0,\gamma,\lambda},$ defined in (\ref{eq:beta0gammalambda}), 
using
\begin{equation}
\label{eq:betangammalambda}
\beta_{n,\gamma,\lambda}=\arg\max_{\beta}W_n(\beta,\beta_{n,\gamma},b_n).
\end{equation}

\subsection{Tuning parameters}
\label{sec:estTuning}

We now discuss the three tuning parameters in Equation (\ref{eq:W0}): $\gamma,\lambda,$ and $\delta$. 
The tuning parameter $\gamma$ impacts the weight on the $KL$ divergence portion of the penalty in (\ref{eq:M0}) and will determine the closeness to the standard of care.   From an estimation standpoint, $\gamma$ impacts the stability of the objective function, and should be chosen based on
the stability of the estimation in a training dataset, which we will illustrate in simulations and in the real data analysis here. After selection,  in the post-selection inference step, there will be fewer free parameters, so the estimation stability caused by $\gamma$ will be even larger; hence, it is reasonable to choose a slightly smaller $\gamma$ but to still expect stability in inference.

Given $\gamma,$ one can choose $\lambda$ as 
\begin{equation}
\label{lambda.crit.diff.naught}
\lambda_0 = \max \{\lambda: V_0(\beta_{0,\lambda})\geq V^{min}\},
\end{equation}
where $V^{min} = V_0(b_0) + \sigma_V(b_0),$ and, if we use  ${V}_n,$  given in (\ref{eq:vnestw}), as an estimator for $V_0,$ then $\sigma_V$ is the asymptotic standard deviation of $\sqrt{n}{V}_n.$ 
We take a $\max$ in (\ref{lambda.crit.diff.naught}) to ensure maximum sparsity and closeness to behavior within the set of policies that have acceptable value of at least $V^{min}$. 
We  estimate $\lambda_0$ from (\ref{lambda.crit.diff.naught}) using 
\begin{equation}
\label{lambda.crit.diff}
\lambda_n = \max \{\lambda: {V}_n(\beta_{n,\lambda})\geq {V}_n^{min}\},
\end{equation}
where ${V}_n^{min} = {V}_n(b_n,b_n) + \sigma_{n,V}(b_n),$ and $\sigma_{n,V}$ now refers to standard error of ${V}_n$, where estimation of $\sigma_{n,V}$
is described in Appendix \ref{app:valuevar}. 
Note that  the standard error is not a general-purpose selection threshold, since the standard error will decrease with increasing sample size.  One might consider also the standard deviation of the behavioral value or a certain percentage increase in value.

The tuning parameter $\delta>0,$ which is proposed in \cite{zou2006adaptive}, impacts the adaptivity of the adaptive Lasso penalty, and increasing $\delta$ should, in theory, lead to a stronger penalty for the coefficients that are truly equal to their behavioral counterparts and a  weaker penalty for the coefficients that  truly diverge from their behavioral counterparts, as discussed in more detail in Appendix \ref{app:deltameaning}. 
The finite sample behavior of the adaptive Lasso penalty, however, is sometimes unpredictable, which has been discussed in e.g., \cite{potscher2009distribution}, so we recommend trying a few different values of $\delta$ in a training set, as is done in \cite{zou2006adaptive}, where the authors use $\delta\in \{.5, 1, 2\}.$

\section{Theory}
\label{theory}

We provide theory for inference for $\beta_{0,\gamma},$ which was defined in (\ref{eq:beta0gamma}). This is novel in its own right, since it applies to Trust Region Policy Optimization (TRPO) \citep{schulman2015trust} whose inferential properties have not been well characterized, and it further concerns a novel, weighted version of TRPO, 
defined in (\ref{eq:vnestw}). However, our overall goal is not to show results for TRPO, but to  operationalize this theory toward performing inference for relative sparsity in the post-selection setting. We also provide theory that will be used in the selection diagrams to visualize the variability our estimators.   

\subsection{Assumptions}
\label{sec:assumptions}
 We make the following causal identifiability assumptions. 

\begin{assumption}
 \leavevmode
\label{assum:pos} 
\begin{enumerate}[label=(\roman*)]

\item  {Positivity:} $\pi_{b_0}(A=a|S=s)>0\ \ \forall \ a,s.$

\item  {Consistency:}
$S_{t+1}(A_0,\dots,A_t)=S_{t+1}.$


\item {No interference:}
Let $S_{t+1}(a_0,\dots,a_t)$ to be the potential state under, possibly contrary to observation, $A_0,\dots,A_t$ being fixed to $a_0,\dots,a_t.$ If $i$ and $j$ index different patients, we have that
    \[S_{i,t+1}(a_{i,0},\dots,a_{i,t},a_{j,0},\dots,a_{j,t})=S_{i,t+1}(a_{i,0},\dots,a_{i,t}).\] 
  
\item {Sequential randomization:}
\[S_{t+1}(a_0,\dots,a_t),S_{t+2}(a_0,\dots,{a}_{t+1}),\dots,S_{T+1}(a_0,\dots,{a}_T)\indep A_t|{S}_t,{A}_{t-1}={a}_{t-1}.\]
\end{enumerate}

\end{assumption}
 
Under model \ref{eq:thetapolicyInf}, Assumption \ref{assum:pos} (i) implies that $|b|<\infty$ and, consequently, by Remark \ref{rem:finitebeta0gamma}, that $|\beta|<\infty$.
In our case, it is likely that Assumption \ref{assum:pos} (ii), consistency, and Assumption \ref{assum:pos} (iii), no intereference, are satisfied.  Assumption \ref{assum:pos} (iv), sequential randomization, can be more problematic, although we have included the same covariates that were used in the literature \cite{futoma2020popcorn}; one could conceivably adjust for more covariates in future work.  
 
 The following is necessary to establish asymptotic consistency.
\begin{assumption}
 \label{as:mContInBetabandcompact}   
Define $\bar{\mathcal{S}}=\mathcal{S}_0,\dots,\mathcal{S}_{T+1},$
$\bar{\mathcal{A}}=\mathcal{A}_0,\dots,\mathcal{A}_{T},$
$\bar{{S}}={S}_0,\dots,{S}_{T+1},$ and
$\bar{{A}}={A}_0,\dots,{A}_{T}.$
For the objective defined in (\ref{eq:m}), $m: \bar{\mathcal{S}}\times \bar{\mathcal{A}}\times B\times B'\mapsto \mathbb{R},$ where $\beta\in B$ and $b\in B',$ we have that $B\times B'$ is compact. Moreover, for all $(\beta,b)\in B\times B',$ we have that $m(\cdot,\beta,b)$ is Borel measureable on $\bar{\mathcal{S}}\times \bar{\mathcal{A}}$ and, for each $(\bar{S},\bar{A})\in \bar{\mathcal{S}}\times \bar{\mathcal{A}}$ we have that $m(\bar{S}, \bar{A},\cdot)$ is continuous on $B\times B'.$
\end{assumption}

\begin{assumption}
    \label{assum:Sbounded}
       The states are uniformly bounded; i.e., there exists $0<C<\infty$ such that $P(|S_t|\geq C)=0,$ for all $t.$ 
\end{assumption}

Quantities such as mean arterial blood pressure (MAP), creatinine, or urine output are physiologic quantities, and, therefore, random variables representing these quantities will be restricted to take on finite values.

\begin{assumption}
\label{assum:boundedR}
     The reward is bounded; i.e., $\left|R(s,a,s')\right|<\infty$  $\forall\ s, a,s'.$ 
\end{assumption}
Often, the reward will be based on state variables, which are themselves bounded by Assumption \ref{assum:Sbounded}.  It is therefore reasonable to assume boundedness of the reward (although this would be violated if one were to, e.g., include within the reward an infinite penalty for mortality). 

 Besides the Markov Decision Process (MDP) and causal assumptions (Assumption \ref{assum:stationarity} and Assumption \ref{assum:pos}), we also make assumptions that allow us to extend Theorem 5.41 in \cite{van2000asymptotic}, which assumes boundedness of  the partial derivatives of $M_0$. 
We now argue that these boundedness assumptions are reasonable, largely because of the causal assumption of positivity (Assumption \ref{assum:pos} (i)) and the
physiologic bounds on the state variables.

\begin{assumption}
\label{as:mbounded}
 Let $\zeta=(\beta,b)^T\in \mathbb{R}^{2K}.$ The following partial derivatives exist and, for any length-$T$ trajectory of states and actions, satisfy   \begin{align}
 \label{disp}
\frac{\partial^3}{\partial \zeta_i \zeta_j\zeta_k}\left(\frac{\prod_{t=0}^T \pi_{\beta}(A_t|S_t)}{\prod_{t=0}^T \pi_{b}(A_t|S_t)}\sum_{t=0}^{T}R(S_t,A_t,S_{t+1})-\gamma \log\left(\frac{\prod_{t=0}^T \pi_{b}(A_t|S_t)}{\prod_{t=0}^T \pi_{\beta}(A_t|S_t)}\right)\right) \leq \overset{\cdots}{\text{m}},
 \end{align}  for some integrable, measurable function $\overset{\cdots}{\text{m}},$ and for every $\zeta=(\beta,b)^T$ in a neighborhood of $\zeta_0=(\beta_{0,\gamma},b_0)^T.$ 
 \end{assumption}
  
 We will argue in the following section that $\overset{\cdots}{\text{m}}$ in Assumption \ref{as:mbounded} is a constant for the reinforcement learning problems that we are interested in solving, which depend on the reward and the states, both of which are usually bounded.  
 
 Let us consider the components of (\ref{disp}).
 We will start by discussing $\sum_{t=0}^{T}R(S_t,A_t,S_{t+1}).$
Note that since $T$ is finite, boundedness of the reward in Assumption \ref{assum:boundedR} implies that \[\left|\sum_0^T  R(s_t,a_t,s_{t+1})\right|<\infty, \forall (s_0,\dots,s_{T+1}, a_0,\dots,a_T).\] 
Moreover, because we are considering the binary action case, the policies are bounded above by 1 and below by 0. 
Following Assumption \ref{assum:pos} (i), we  have that the inverse of the product of the policies  is bounded (i.e., $\prod_{t=0}^T \pi_{b}(A_t|S_t)^{-1}<\infty$).  

The derivatives in (\ref{disp}) are various combinations of the reward, states, and policies and are therefore similarly bounded; one can see the forms of the first and second order derivatives in Appendix \ref{app:gradients}.  
Sometimes, a derivative of the logarithm of a policy appears.  In our generalized linear model setting, because of model (\ref{eq:thetapolicyInf}), we see that the differentiated logarithm turns into a generalized linear model score of the form 
$(a-\pi(A=1|s))s,$
which is bounded because the policies and state are bounded.  
The partial derivatives are over $T$ time steps, and, because $T$ is finite,  the summands or factors are bounded.

The $\log{\prod_{t=0}^T \pi_{b}(A_t|S_t)}\biggr/{\prod_{t=0}^T \pi_{\beta}(A_t|S_t)}$ term in (\ref{disp}) corresponds to the the Kullback-Leibler (KL) divergence between the suggested policy and the data-generating, behavioral policy, defined in (\ref{eq:KL0}), both of which are $\expit$ models according to (\ref{eq:thetapolicyInf}). We have that  minimizing KL divergence is equivalent to maximizing the log likelihood, as discussed in Section 5.5 of \cite{van2000asymptotic}. Hence, since the partial derivatives of the log likelihood are well behaved, we have that the partial derivatives of the $KL$ function are similarly well behaved. 

Uniqueness of the maximizer of an objective function, $M_0$, is important for establishing consistency of the maximizer of ${M}_n$.
\begin{assumption}
\label{as:uniqueMaxofM}
We have uniqueness of the maximizer of $M_0(\beta,b_0),$ defined in (\ref{eq:M0}); i.e., 
\[M_0(\beta_{0,\gamma},b_0)>M_0(\beta,b_0)\] for all $\beta\neq \beta_{0,\gamma}.$
\end{assumption}
Assumption \ref{as:uniqueMaxofM} could be relaxed to a local uniqueness assumption \citep{loh2013regularized,eltzner2020testing}.  
In some problems, if taking the action sequence $(A_1=1,A_2=0)$ and $(A_1=0,A_2=1)$ gives equivalent value, the policy that maximizes value may not be unique, but, in many problems of interest, the order of treatments matters.  
\begin{assumption}
\label{as:bconsist}
We assume that $b_n$ is consistent for $b_0,$ the data generating behavioral policy; i.e.  $b_n\gop b_0$. Moreover, we assume that the behavioral estimator  is $\sqrt{n}$-consistent; i.e.,
$\sqrt{n}(b_n-b_0)=O_{P}(1).$
\end{assumption}
Assumption \ref{as:bconsist} holds for the estimators we  use for the behavioral policy, assuming correct specification of the model in (\ref{eq:thetapolicyInf}). 



 Define the Jacobian (gradient), Hessian, and cross derivative of ${M}_n$ and $M_0,$ respectively, as
\begin{align}
\label{eq:derivsDefn}
    \notag &J_n=\fdr{\beta}{{M}_n} \in \mathbb{R}^K, H_n=\sdsr{\beta}{{M}_n} \in \mathbb{R}^{K\times K}, X_n=\sddr{b}{\beta}{{M}_n} \in \mathbb{R}^{K\times K},\\ 
    &J_0=\fdr{\beta}M_0\in \mathbb{R}^K, H_0= \sdsr{\beta}{M_0}\in \mathbb{R}^{K\times K}, X_0=\sddr{b}{\beta}{M_0}\in \mathbb{R}^{K\times K}.
\end{align}

\begin{assumption}
\label{as:hessianInvert}
We have that $H_0,$ which is defined in (\ref{eq:derivsDefn}), exists and is non-singular. 
\end{assumption}

Assumption \ref{as:hessianInvert} is standard and necessary to isolate  the policy coefficients from other terms in a Taylor expansion.  


\subsection{Preliminary remarks and results}

We first include some preliminary remarks and results to aid us in proving the theorems that will follow. 
\begin{remark}
\label{causal:ident}
Since the reward is a deterministic function of the states, under Assumptions \ref{assum:pos} (i)-(iv), one can make arguments similar to those presented in \cite{pearl2009causality,munoz2012population,van2018targeted,ertefaie2018constructing} and \cite{parbhoo2022generalizing}  to identify, as a function of the observed data, the potential value $V_0(\beta),$ which is the value we would obtain if we were to assign treatments based on the policy $\pi_{\beta}$. This corresponds to the unweighted version of the estimator in \ref{eq:vnestw}.
\end{remark}

\newcommand{\proveEr}{
For intuition, we will show first in the single stage case \[E r = E_b \frac{\pi_{\beta}}{\pi_b}=\int_s\sum_a \frac{\pi_{\beta}(a|s)}{\pi_b(a|s)}\pi_{b}(a|s)p(s)ds =\int_s\sum_a \pi_{\beta}(a|s)p(s)ds=1.\]
Now for the multi-stage case,
\begin{align*}
Er &= E_b \frac{\prod_{t=0}^T \pi_{\beta}(A_t|S_t)}{\prod_{t=0}^T \pi_{b}(A_t|S_t)}\\&=\int_{s_0} \sum_{a_0} \dots \int_{s_T} \sum_{a_T} \frac{\pi_{\beta}(a_0|s_0)\dots \pi_{\beta}(a_T|s_t)}{\pi_{b}(a_0|s_0)\dots \pi_{b}(a_T|s_t)} \pi_{b}(a_0|s_0)\\ &\ \ \ \times p(s_0)p(s_1|a_0,s_0)\dots P_t(s_t|a_{t-1},s_{t-1},\dots,a_0,s_0)\pi_{b}(a_T|s_t)ds_0\dots ds_T \\
&=\int_{s_0} \sum_{a_0} \dots \int_{s_T} \sum_{a_T}  \pi_{\beta}(a_0|s_0)p(s_0)p(s_1|a_0,s_0)\dots \pi_{\beta}(a_T|s_t)\\ & \ \ \ \times P_t(s_t|a_{t-1},s_{t-1},\dots,a_0,s_0)ds_0\dots ds_T \\
&=1.
\end{align*}
}

\begin{remark}
\label{rem:Er1}
As shown in \cite{thomas2015safe}, 
we have that \[E_n {\prod_{t=0}^T \pi_{\beta}(A_t|S_t)}\biggr/{\prod_{t=0}^T \pi_{b}(A_t|S_t)} \gop E {\prod_{t=0}^T \pi_{\beta}(A_t|S_t)}\biggr/{\prod_{t=0}^T \pi_{b}(A_t|S_t)}\] by the Law of Large Numbers. Note further that $E{\prod_{t=0}^T \pi_{\beta}(A_t|S_t)}\biggr/{\prod_{t=0}^T \pi_{b}(A_t|S_t)}=1,
$ as  shown in Appendix \ref{app:proveEr}. 
\end{remark}

\begin{remark}
\label{rem:Vnconsist}
By Slutsky's theorem, since $E_n {\prod_{t=0}^T \pi_{\beta}(A_t|S_t)}\biggr/{\prod_{t=0}^T \pi_{b}(A_t|S_t)} \gop 1$ by Remark \ref{rem:Er1}, we have that ${V}_n\gop V_0$. This is also shown in \cite{thomas2015safe}.
\end{remark}

The following consistency statements need to be verified, because we are using weighted importance sampling, as shown in  (\ref{eq:vnestw}).
\begin{lemma}
\label{lemma:MnConsist}
We have that ${M}_n$ is consistent for $M_0,$ or that ${M}_n\gop M_0.$
\end{lemma}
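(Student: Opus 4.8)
The plan is to split $M_n-M_0$ into a value piece and a $KL$ piece, upgrade the pointwise convergence of Remarks \ref{rem:Er1}--\ref{rem:Vnconsist} to a uniform law of large numbers (ULLN) over the \emph{joint} parameter $(\beta,b)\in B\times B'$, and then substitute the consistent nuisance $b_n$ by invoking continuity of the population limits in $b$. Write $\rho(\bar S,\bar A;\beta,b)=\prod_{t=0}^T\pi_{\beta}(A_t|S_t)/\prod_{t=0}^T\pi_{b}(A_t|S_t)$ and $R_{\mathrm{sum}}=\sum_{t=0}^T R(S_t,A_t,S_{t+1})$, so that $V_n(\beta,b)=E_n[\rho(\cdot;\beta,b)R_{\mathrm{sum}}]/E_n[\rho(\cdot;\beta,b)]$ and $KL_n(\beta,b)=E_n\sum_{t=0}^T\{\log\pi_{b}(A_t|S_t)-\log\pi_{\beta}(A_t|S_t)\}$.

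\emph{Step 1 (envelopes).} On the compact set $B\times B'$ (Assumption \ref{as:mContInBetabandcompact}) and with $|S_t|\le C$ a.s.\ (Assumption \ref{assum:Sbounded}), we have $|\theta^T s|\le C\sup_{\theta\in B\cup B'}\norm{\theta}<\infty$, so under model (\ref{eq:thetapolicyInf}) each $\pi_{\theta}(a|s)$ is bounded uniformly away from $0$ and $1$; hence $|\log\pi_{\theta}(a|s)|$ and $\rho$ are uniformly bounded, and $|R_{\mathrm{sum}}|<\infty$ is bounded by a constant since $T$ is finite and the reward is bounded (Assumption \ref{assum:boundedR}). Consequently the three families $\{\rho(\cdot;\beta,b)R_{\mathrm{sum}}\}$, $\{\rho(\cdot;\beta,b)\}$, $\{\sum_t(\log\pi_{b}-\log\pi_{\beta})\}$, indexed by $(\beta,b)\in B\times B'$, are pointwise-continuous in $(\beta,b)$ and each admits a constant (hence integrable) envelope.

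\emph{Step 2 (ULLN and continuity, then plug in $b_n$).} By the standard parametric Glivenko--Cantelli argument for a continuous, dominated family over a compact index set (cf.\ \cite{van2000asymptotic}, e.g.\ the uniform LLN discussed there), each of the three classes $g$ satisfies $\sup_{(\beta,b)\in B\times B'}|E_n g(\cdot;\beta,b)-Eg(\cdot;\beta,b)|\gop 0$, and $(\beta,b)\mapsto Eg(\cdot;\beta,b)$ is continuous, hence uniformly continuous, on $B\times B'$ by dominated convergence. Combining this with $b_n\gop b_0$ (Assumption \ref{as:bconsist}) and the bound $\sup_{\beta}|E_n g(\cdot;\beta,b_n)-Eg(\cdot;\beta,b_0)|\le \sup_{\beta,b}|E_n g-Eg|+\sup_{\beta}|Eg(\cdot;\beta,b_n)-Eg(\cdot;\beta,b_0)|$ yields $\sup_{\beta\in B}|E_n g(\cdot;\beta,b_n)-Eg(\cdot;\beta,b_0)|\gop 0$ for each of the three classes.

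\emph{Step 3 (assemble).} Applying Step 2 to the numerator and denominator of $V_n$: $E_n[\rho(\cdot;\beta,b_n)]\gop E[\rho(\cdot;\beta,b_0)]=1$ uniformly in $\beta$ (the limit being $1$ by Remark \ref{rem:Er1} and Appendix \ref{app:proveEr}), so with probability tending to one the denominator is bounded away from $0$ uniformly in $\beta$; by the continuous mapping theorem applied to the ratio map restricted to denominators bounded below, $\sup_{\beta\in B}|V_n(\beta,b_n)-V_0(\beta)|\gop 0$, the uniform version of Remark \ref{rem:Vnconsist}. Applying Step 2 to the $KL$ class and using that the data are generated under $\pi_{b_0}$, so that $E\sum_t\{\log\pi_{b_0}(A_t|S_t)-\log\pi_{\beta}(A_t|S_t)\}=E_{b_0}\log\big(\prod_t\pi_{b_0}(A_t|S_t)/\prod_t\pi_{\beta}(A_t|S_t)\big)=KL_0(\beta,b_0)$, gives $\sup_{\beta\in B}|KL_n(\beta,b_n)-KL_0(\beta,b_0)|\gop 0$. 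Since $\gamma$ is fixed, $M_n(\beta,b_n,\gamma)=V_n(\beta,b_n)-\gamma KL_n(\beta,b_n)\gop V_0(\beta)-\gamma KL_0(\beta,b_0)=M_0(\beta,b_0,\gamma)$, uniformly over $\beta\in B$, which is the claim.

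\emph{Main obstacle.} The delicate point is that the nuisance estimator $b_n$ sits inside both the self-normalized importance-sampling ratio and the $KL$ average; a pointwise LLN does not suffice, and the crux is establishing the joint ULLN over $(\beta,b)$ together with continuity of the population limits in $b$, which is exactly what licenses the substitution of $b_n$ for $b_0$. A secondary technical point is controlling the denominator of $V_n$ away from zero uniformly in $\beta$, without which the ratio/continuous-mapping step fails; this is handled by the uniform convergence of the denominator to $1$ established above.
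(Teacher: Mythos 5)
Your proof is correct, but it takes a genuinely different route from the paper's. The paper's proof (Appendix \ref{app:prf:lemma:MnConsist}) disposes of the nuisance by a first-order Taylor expansion of $M_n(\beta,b_n)$ in $b$ around $b_0$, writing the remainder as $(b_n-b_0)^T\fdr{b}{M_n}=o_P(1)O_P(1)=o_P(1)$ via Assumption \ref{as:bconsist} and the boundedness of the derivative (Assumption \ref{as:mbounded}), and then concludes pointwise in $\beta$ by citing Remark \ref{rem:Vnconsist} for the value piece and the law of large numbers for the $KL$ piece. You instead avoid differentiability in $b$ altogether: you establish a joint uniform law of large numbers over $(\beta,b)\in B\times B'$ for the three dominated, continuous classes, use uniform continuity of the population limits in $b$ to license the substitution $b_n\mapsto b_0$, and handle the self-normalization by showing the denominator converges to $1$ uniformly and is hence eventually bounded away from zero. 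What your approach buys is (a) a strictly stronger conclusion — uniform convergence over $\beta\in B$, which is in fact what the paper needs later as Condition \ref{cond:UWLLN} in the proof of Theorem \ref{thm:betanConsistNorm} (i), where a very similar decomposition of $V_n-\tilde V_n$ appears — and (b) an argument that only uses continuity and envelopes rather than third-order smoothness; it also makes explicit the uniform lower bound on the denominator, a point the paper leaves implicit. What the paper's Taylor route buys is brevity and a form that recycles directly into the asymptotic-normality expansion of Theorem \ref{thm:betanConsistNorm} (ii). One small caution on your Step 1: Assumption \ref{assum:boundedR} as literally stated gives pointwise finiteness of the reward rather than a uniform bound, so your constant envelope for $R_{\mathrm{sum}}$ tacitly strengthens it to uniform boundedness — but the paper makes exactly the same reading, so this is not a gap relative to the intended hypotheses.
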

\begin{proof}
See Appendix \ref{app:prf:lemma:MnConsist}
\end{proof}

\begin{lemma}
\label{lem:limDerivsRn}
We have that \[E_n\fdr{\beta}{{\prod_{t=0}^T \pi_{\beta}(A_t|S_t)}\biggr/{\prod_{t=0}^T \pi_{b}(A_t|S_t)}}\gop 0,\] \[E_n \fdr{b}{{\prod_{t=0}^T \pi_{\beta}(A_t|S_t)}\biggr/{\prod_{t=0}^T \pi_{b}(A_t|S_t)}}\gop 0,\] \[E_n \sddr{b}{\beta}{{\prod_{t=0}^T \pi_{\beta}(A_t|S_t)}\biggr/{\prod_{t=0}^T \pi_{b}(A_t|S_t)}}\gop 0.\]
\end{lemma}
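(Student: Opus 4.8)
The plan is to handle all three limits with a single template: isolate a population quantity that is constant in the variable(s) being differentiated, move the derivative(s) under the expectation, and then pass from the population expectation to the empirical average by the law of large numbers. Write $r(\beta,b)=\prod_{t=0}^{T}\pi_\beta(A_t\mid S_t)\big/\prod_{t=0}^{T}\pi_b(A_t\mid S_t)$ for the trajectory-level importance ratio. By Remark~\ref{rem:Er1} (established in Appendix~\ref{app:proveEr}) we have the identity $E\,r(\beta,b_0)=1$ for every $\beta$, whose right-hand side does not depend on $\beta$. Hence, assuming one may interchange $\partial/\partial\beta$ with $E$, $E\!\left[\tfrac{\partial}{\partial\beta}r(\beta,b_0)\right]=\tfrac{\partial}{\partial\beta}E[r(\beta,b_0)]=0$; and since $\tfrac{\partial}{\partial\beta}r$ is integrable, the law of large numbers (applied to the independent trajectory averages) gives $E_n\!\left[\tfrac{\partial}{\partial\beta}r(\beta,b_0)\right]\gop 0$, which is the first assertion. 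If the derivative is to be read at the estimated behavioral parameter $b_n$ rather than at $b_0$, one finishes by the standard uniform-law-of-large-numbers-plus-continuity argument, combining $b_n\gop b_0$ (Assumption~\ref{as:bconsist}) with the local uniform integrability on a neighborhood of $(\beta_{0,\gamma},b_0)$ discussed below.

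The second and third assertions follow from the same template after one further differentiation --- of the relevant identity in $b$, respectively in $b$ and then in $\beta$ --- followed again by the interchange and the law of large numbers. The partials appearing here are exactly the first- and second-order pieces of the derivatives recorded in Appendix~\ref{app:gradients}: finite sums of products of $r(\beta,b)$ with generalized-linear-model scores of the form $(a-\pi(1\mid s))\,s$ and with bounded reward and state terms. Consequently the same reasoning that, in Section~\ref{sec:assumptions}, makes the third-order dominating function of Assumption~\ref{as:mbounded} a constant --- positivity (Assumption~\ref{assum:pos}(i)) bounding $1/\prod_t\pi_b(A_t\mid S_t)$, and boundedness of the states (Assumption~\ref{assum:Sbounded}) and of the reward (Assumption~\ref{assum:boundedR}) --- applies verbatim to $\partial r/\partial\beta$, $\partial r/\partial b$ and $\partial^2 r/\partial b\,\partial\beta$, yielding on a neighborhood of $(\beta_{0,\gamma},b_0)$ a single $(\beta,b)$-free integrable majorant for each. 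That majorant serves simultaneously as the dominating function that licenses moving the derivatives inside $E$ (dominated convergence / Leibniz) and as the integrable envelope required by the law of large numbers.

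The only genuinely delicate step is the interchange of differentiation and integration: one must produce that $(\beta,b)$-free integrable majorant, valid on an open neighborhood of $(\beta_{0,\gamma},b_0)$, for $|\partial r/\partial\beta|$, $|\partial r/\partial b|$ and $|\partial^2 r/\partial b\,\partial\beta|$ --- which is Assumption~\ref{as:mbounded} read one and two orders below its stated third-order bound --- after which Leibniz's rule and the law of large numbers are routine. A secondary point, present only when the limits are evaluated at $b_n$, is the replacement of $b_n$ by $b_0$; this is handled by the $\sqrt n$-consistency of $b_n$ (Assumption~\ref{as:bconsist}) together with the equicontinuity furnished by the same envelope, exactly as in the passage from $M_n$ to $M_0$ in Lemma~\ref{lemma:MnConsist}.
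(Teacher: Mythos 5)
Your proof follows essentially the same route as the paper's own argument in Appendix \ref{app:limDerivs}: an integrable, parameter-free majorant obtained by reading Assumption \ref{as:mbounded} at lower order (the paper gets it by setting $\gamma=0$ and $R=1$) licenses interchanging the derivative with the expectation, the identity $Er=1$ from Remark \ref{rem:Er1} makes the resulting population quantity zero, and the law of large numbers transfers this to the empirical average --- the only cosmetic difference being that you interchange first and then invoke the LLN, whereas the paper invokes the LLN first and then dominated convergence. Like the paper, you dispatch the $\partial/\partial b$ and cross-derivative cases by asserting that the same template applies (the paper says ``identical arguments''), so the two proofs match in both substance and level of detail.
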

\begin{proof}
 See Appendix \ref{app:limDerivs}.
\end{proof}
We will use Lemma \ref{lem:limDerivsRn} to show that the partial derivatives of ${M}_n,$ defined in (\ref{eq:mn}), converge to the partial derivatives of $M_0$, defined in (\ref{eq:M0}), which requires verification because $M_n$ contains a weighting term, as shown in  (\ref{eq:vnestw}).  For the same reason, we verify the following Lemma.

\begin{lemma}
\label{lem:limDerivsMn}
We have that $J_n \gop J_0, H_n \gop H_0,$ and $X_n\gop X_0.$
\end{lemma}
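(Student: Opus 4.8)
The plan is to split the objective as $M_n = V_n - \gamma KL_n$ (and correspondingly $M_0 = V_0 - \gamma KL_0$, cf.\ (\ref{eq:M0}) and (\ref{eq:mn})), differentiate this decomposition once in $\beta$ (for $J_n$), twice in $\beta$ (for $H_n$), and once in each of $\beta$ and $b$ (for $X_n$), and then establish convergence of each of the two summands separately, assembling at the end. The $KL$ summand is the easy one: $KL_n(\beta,b)$ in (\ref{eq:KLn}) is an ordinary empirical mean, so each of its partial derivatives is again an empirical mean of the corresponding derivative of $\log\big(\prod_t\pi_b/\prod_t\pi_\beta\big)$. Under the logistic model (\ref{eq:thetapolicyInf}) these integrands are finite sums over the $T+1$ stages of generalized-linear-model score terms $(a-\pi(1|s))s$ and their higher derivatives, which are bounded by Assumption \ref{assum:Sbounded}, Assumption \ref{assum:pos}(i), and the envelope of Assumption \ref{as:mbounded}; the law of large numbers then gives convergence to the population expectations, dominated convergence (same envelope) lets me pass the derivative through the expectation so that those limits are exactly $\tfrac{\partial}{\partial\beta}KL_0$, $\tfrac{\partial^2}{\partial\beta^2}KL_0$, $\tfrac{\partial}{\partial b}\tfrac{\partial}{\partial\beta}KL_0$, and finally, since $b_n\gop b_0$ by Assumption \ref{as:bconsist} and the maps are continuous, the continuous mapping theorem absorbs the plugged-in nuisance.

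For the value summand I write $V_n=(E_n v)/(E_n r)$, where $r$ is the trajectory weight ratio $\prod_t\pi_\beta/\prod_t\pi_b$ and $v=r\sum_t R$. Applying the quotient rule the appropriate number of times expresses each of $\tfrac{\partial}{\partial\beta}V_n$, $\tfrac{\partial^2}{\partial\beta^2}V_n$, $\tfrac{\partial}{\partial b}\tfrac{\partial}{\partial\beta}V_n$ as a fixed rational function of the empirical means $E_n v,\ E_n r,\ E_n\tfrac{\partial}{\partial\beta}v,\ E_n\tfrac{\partial}{\partial\beta}r,\ E_n\tfrac{\partial}{\partial b}v,\ E_n\tfrac{\partial}{\partial b}r,\ E_n\tfrac{\partial^2}{\partial\beta^2}v,\ E_n\tfrac{\partial^2}{\partial\beta^2}r,\ E_n\tfrac{\partial}{\partial b}\tfrac{\partial}{\partial\beta}v,\ E_n\tfrac{\partial}{\partial b}\tfrac{\partial}{\partial\beta}r$, with denominator a positive power of $E_n r$. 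Each integrand here is a product of the bounded reward $\sum_t R$ (Assumption \ref{assum:boundedR}), bounded policies, a bounded inverse-policy product (Assumption \ref{assum:pos}(i)), and bounded states (Assumption \ref{assum:Sbounded}) over finitely many stages, so the law of large numbers applies to every mean, and $E_n r\gop E r=1>0$ by Remark \ref{rem:Er1}; hence by Slutsky and the continuous mapping theorem the rational expressions converge to the corresponding rational expressions of the limits. Now I invoke Lemma \ref{lem:limDerivsRn} — together with the identical argument applied to $E_n\tfrac{\partial^2}{\partial\beta^2}r$, which vanishes in probability because $E r\equiv 1$ as a function of $\beta$ — so that $E_n\tfrac{\partial}{\partial\beta}r$, $E_n\tfrac{\partial}{\partial b}r$, $E_n\tfrac{\partial^2}{\partial\beta^2}r$, and $E_n\tfrac{\partial}{\partial b}\tfrac{\partial}{\partial\beta}r$ are all $o_P(1)$. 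Every term in the quotient-rule expansions carrying one of these factors collapses, and what remains is, respectively, $\tfrac{\partial}{\partial\beta}V_0$, $\tfrac{\partial^2}{\partial\beta^2}V_0$, and — after the behavioral factors cancel at $b=b_0$ — the cross derivative of the value component of $M_0$. Composing once more with $b_n\gop b_0$ via the continuous mapping theorem substitutes the estimated nuisance.

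Adding the two summands with the fixed multiplier $-\gamma$ and using that sums and products of terms converging in probability converge in probability yields $J_n=\tfrac{\partial}{\partial\beta}V_n-\gamma\tfrac{\partial}{\partial\beta}KL_n\gop \tfrac{\partial}{\partial\beta}V_0-\gamma\tfrac{\partial}{\partial\beta}KL_0=J_0$, and identically $H_n\gop H_0$ and $X_n\gop X_0$. The main obstacle is the cross derivative $X_n$: because $b$ enters $V_n$ only through the empirical weights $r$, $\tfrac{\partial}{\partial b}\tfrac{\partial}{\partial\beta}V_n$ contains several terms with no counterpart in $\tfrac{\partial}{\partial b}\tfrac{\partial}{\partial\beta}M_0$ (indeed $V_0(\beta)$ has no $b$-dependence at all), and the whole point of proving this lemma rather than reading it off is to show those spurious terms disappear; that is exactly what Lemma \ref{lem:limDerivsRn} supplies, each such term being a bounded convergent factor times $E_n\tfrac{\partial}{\partial b}r$ or $E_n\tfrac{\partial}{\partial b}\tfrac{\partial}{\partial\beta}r$. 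Organizing the quotient-rule bookkeeping so that every $b$-artifact is visibly attached to a vanishing mean, and checking that the one surviving term $E_n\tfrac{\partial}{\partial b}\tfrac{\partial}{\partial\beta}v$ reduces to the intended population cross derivative once the behavioral factors cancel at $b=b_0$, is the step that takes real care; everything else is the law of large numbers plus Slutsky.
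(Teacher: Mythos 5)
Your proposal is correct and follows essentially the same route as the paper's proof: expand each derivative of the weighted objective by the quotient rule into rational functions of empirical means, apply the law of large numbers together with Remark \ref{rem:Er1} ($E_n r \gop 1$) and Lemma \ref{lem:limDerivsRn} to kill the terms carrying derivatives of the weight ratio, and absorb the plugged-in nuisance $b_n$ (the paper does this last step via a first-order Taylor expansion in $b$ with the bounded cross derivative, which is interchangeable with your continuous-mapping argument). Your explicit observation that $E_n \frac{\partial^2}{\partial \beta^2} r \gop 0$ is also needed for the Hessian term and follows by the same argument as Lemma \ref{lem:limDerivsRn} is a point the paper glosses over, and is worth stating.
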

\begin{proof}
    See Appendix \ref{app:lem:limDerivsMn:prf}
\end{proof}

\subsection[Weighted TRPO asymptotics]{Weighted Trust Region Policy Optimization (TRPO) consistency and asymptotic normality}
We will now ensure that  $\beta_{n,\gamma}$ is well behaved asymptotically. Recall that $\beta_{n,\gamma},$  is the maximizer of ${M}_n,$ which is defined in (\ref{eq:mn}). Then $M_n$ serves as the ``base'' objective of the double-penalized relative sparsity objective, $W_n,$ which is defined in (\ref{eq:Wn}). We need, essentially, that when the relative sparsity penalty goes away (in the adaptive Lasso case), or is taken away (in the sample splitting, post-selection case), we are left with an objective function that gives us a maximizer, $\beta_{0,\gamma}$, that is amenable to inference.

To show consistency of $\beta_{n,\gamma},$ after making extensions to take into account the weighting term in the importance sampling objective (\ref{eq:vnestw}), we can apply 
results from \cite{wooldridge2010econometric} on two-step M-estimators.  For asymptotic normality of $\beta_{n,\gamma},$   
we extend a classical proof, versions of which can be found in e.g. \cite{van2000asymptotic} and \cite{wooldridge2010econometric}.

\begin{theorem}  \leavevmode
\label{thm:betanConsistNorm}
\vspace{1pt}
    \begin{enumerate}[label=(\roman*)]    
    \item We have 
consistency of the TRPO estimator $\beta_{n,\gamma}$ for $\beta_{0,\gamma}$, or that
$\beta_{n,\gamma}\gop \beta_{0,\gamma}$.
\item We have asymptotic normality for the weighted TRPO estimator, $\beta_{n,\gamma}$, or that
\[
\sqrt{n}(\beta_{n,\gamma}-\beta_{0,\gamma})\gol -(H_0)^{-1}
\left(z_0+ X_0q_0\right),
\]
where, scaling the gradient of (\ref{eq:mn}), which is defined in (\ref{eq:derivsDefn}), and taking a limit, 
\small
\[\sqrt{n}E_n\frac{\partial}{\partial \beta} \left (\frac{\frac{\prod_{t=0}^T \pi_{\beta}(A_t|S_t)}{\prod_{t=0}^T \pi_{b}(A_t|S_t)}\sum_{t=0}^TR(S_t,A_t,S_{t+1})}{E_n \frac{\prod_{t=0}^T \pi_{\beta}(A_t|S_t)}{\prod_{t=0}^T \pi_{b}(A_t|S_t)}}-\gamma \log\left(\frac{\prod_{t=0}^T \pi_{b}(A_t|S_t)}{\prod_{t=0}^T \pi_{\beta}(A_t|S_t)}\right)\right )\bigg|_{\substack{\beta=\beta_{0,\gamma}\\b=b_0}}\gol z_0,\]
\normalsize
$\sqrt{n}(b_n-b_0) \gol q_0,$
and 
the Hessian, $H_0,$ and the cross derivative, $X_0,$ are defined in (\ref{eq:derivsDefn}).
Asymptotic normality of $\beta_{n,\gamma}$ then follows from the fact that $H_0$ and $X_0$ are constants and $z_0$ and $q_0$ are normally distributed random variables.
    \end{enumerate}
\end{theorem}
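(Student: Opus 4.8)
The plan is to handle the two assertions separately with classical $M$-estimation machinery, adapted to the two-step structure (the objective $M_n$ depends on the first-step estimator $b_n$) and to the self-normalized importance-sampling form of $V_n$.

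For part (i), I would verify the hypotheses of the two-step $M$-estimator consistency theorem of \cite{wooldridge2010econometric}. Compactness of $B\times B'$ and continuity of $m$ are Assumption \ref{as:mContInBetabandcompact}; uniqueness of the maximizer is Assumption \ref{as:uniqueMaxofM}; $b_n\gop b_0$ is Assumption \ref{as:bconsist}. The only nontrivial ingredient is a uniform law of large numbers, $\sup_{\beta\in B,\,b\in B'}|M_n(\beta,b)-M_0(\beta,b)|\gop 0$. Because $M_n$ is a ratio, I would prove a ULLN for the numerator and for the denominator of $V_n$ separately: both are sample averages of maps continuous in $(\beta,b)$ and dominated by an integrable envelope — the envelope comes from positivity (Assumption \ref{assum:pos}(i)), which bounds $\prod_t\pi_b(A_t|S_t)^{-1}$, together with the bounded states and reward (Assumptions \ref{assum:Sbounded}--\ref{assum:boundedR}) — so the standard ULLN applies; the denominator converges uniformly to a limit bounded away from $0$ near $b_0$ since $Er=1$ (Remark \ref{rem:Er1}), and a Slutsky-type argument upgrades this to a ULLN for the ratio, hence for $M_n$. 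Combining the ULLN with the well-separated maximum of Assumption \ref{as:uniqueMaxofM} yields $\beta_{n,\gamma}\gop\beta_{0,\gamma}$; the pointwise statement in Lemma \ref{lemma:MnConsist} is the scaffolding.

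For part (ii), consistency from part (i) together with finiteness of $\beta_{0,\gamma}$ (Remark \ref{rem:finitebeta0gamma}) puts $\beta_{n,\gamma}$ eventually in the interior of $B$, so the first-order condition $J_n(\beta_{n,\gamma},b_n)=0$ holds with probability tending to one. I would then Taylor-expand $J_n$ in $\beta$ and $b$ about $(\beta_{0,\gamma},b_0)$, using Assumption \ref{as:mbounded} (existence and integrable domination of the third derivatives of $m$) to control the remainder, obtaining $0 = J_n(\beta_{0,\gamma},b_0) + H_n(\bar\beta,\bar b)(\beta_{n,\gamma}-\beta_{0,\gamma}) + X_n(\bar\beta,\bar b)(b_n-b_0)$ for intermediate points $\bar\beta,\bar b$ (read componentwise). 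Multiplying by $\sqrt n$ and solving,
\[
\sqrt n(\beta_{n,\gamma}-\beta_{0,\gamma}) = -\bigl[H_n(\bar\beta,\bar b)\bigr]^{-1}\Bigl(\sqrt n\, J_n(\beta_{0,\gamma},b_0) + X_n(\bar\beta,\bar b)\,\sqrt n\,(b_n-b_0)\Bigr).
\]
Since $\bar\beta\gop\beta_{0,\gamma}$ and $\bar b\gop b_0$, a uniform version of Lemma \ref{lem:limDerivsMn} (again via the domination in Assumption \ref{as:mbounded}) gives $H_n(\bar\beta,\bar b)\gop H_0$ and $X_n(\bar\beta,\bar b)\gop X_0$, and $H_0$ is invertible by Assumption \ref{as:hessianInvert}, so $[H_n(\bar\beta,\bar b)]^{-1}\gop H_0^{-1}$. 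It remains to show $\bigl(\sqrt n\,J_n(\beta_{0,\gamma},b_0),\ \sqrt n(b_n-b_0)\bigr)\gol(z_0,q_0)$ jointly normal: the second coordinate is the usual asymptotically linear expansion of the logistic maximum-likelihood estimator of $b_0$, and for the first I would linearize the self-normalized ratio $V_n$ at $Er=1$ — writing $\partial_\beta(E_n v/E_n r) = E_n v_\beta/E_n r - (E_n v)(E_n r_\beta)/(E_n r)^2$ and using $E_n r\gop1$ with the derivative limits of Lemma \ref{lem:limDerivsRn} — so that $\sqrt n\,J_n(\beta_{0,\gamma},b_0)$ equals $\sqrt n$ times an average of i.i.d.\ mean-zero (because $\beta_{0,\gamma}$ solves the population first-order condition), finite-variance summands (by the boundedness assumptions). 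The multivariate Lindeberg--L\'evy CLT applied to the stacked influence functions over the i.i.d.\ trajectories gives joint normality of $(z_0,q_0)$, and Slutsky's theorem yields $\sqrt n(\beta_{n,\gamma}-\beta_{0,\gamma})\gol -H_0^{-1}(z_0+X_0q_0)$; normality of the limit is immediate since $H_0,X_0$ are constants and $(z_0,q_0)$ is Gaussian.

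I expect the main obstacle to be the treatment of the weighting (self-normalization): the CLT for $\sqrt n\,J_n(\beta_{0,\gamma},b_0)$ requires a careful linearization of $V_n$ around $Er=1$ and verification that the resulting influence function is exactly the one whose variance — jointly with that of the behavioral score — produces $z_0$. This is the part not covered by existing TRPO or off-the-shelf $M$-estimation results, and it is where Remarks \ref{rem:Er1}--\ref{rem:Vnconsist} and Lemmas \ref{lem:limDerivsRn}--\ref{lem:limDerivsMn} carry the load. A secondary, more routine difficulty is upgrading the pointwise convergences (Lemmas \ref{lemma:MnConsist} and \ref{lem:limDerivsMn}) to the uniform / intermediate-point versions needed to replace $H_n(\bar\beta,\bar b)$ and $X_n(\bar\beta,\bar b)$ by their population limits.
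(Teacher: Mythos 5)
Your proposal is correct and follows essentially the same route as the paper: part (i) via Wooldridge's two-step $M$-estimator consistency theorem with a uniform law of large numbers obtained by separating the self-normalizing denominator $E_n r$ (using $Er=1$), and part (ii) via a Taylor expansion of the score in $(\beta,b)$, replacement of $H_n,X_n$ by $H_0,X_0$, and a CLT--Slutsky argument for the linearized self-normalized gradient jointly with the behavioral-score influence function. Your explicit appeal to a joint CLT for the stacked influence functions and the mean-value form of the expansion are, if anything, slightly more careful than the paper's treatment, but they do not constitute a different method.
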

\begin{proof}
   For Theorem \ref{thm:betanConsistNorm} (i), see Appendix \ref{app:consistproof}. For Theorem \ref{thm:betanConsistNorm} (ii), see Appendix \ref{app:norprf}.
\end{proof}


\newcommand{\consistproof}{
To show consistency of $\beta_{n,\gamma},$ or that $\beta_{n,\gamma}\gop \beta_{0,\gamma}$, we can apply 
results from \cite{wooldridge2010econometric} on two-step M-estimators. 
By \cite{wooldridge2010econometric}, Section 12.4.1, to show consistency of  $\beta_{n,\gamma}=\arg\max_{\beta}{M}_n(\beta,b_n)$ for $\beta_{0,\gamma},$ we must first have that 
 $b_n\gop b_0,$ which is true by Assumption \ref{as:bconsist}. Given this, we then
must show  the following conditions:
\begin{condition}
\label{cond:UWLLN}
We have that ${M}_n(\beta,b)$ satisfies the Uniform Weak Law of Large Numbers over $B\times B',$ where $\beta\in B$ and $b\in B',$ or that $\sup_{\beta,b}|{M}_n(\beta,b)-M_0(\beta,b)|\gop 0,$
\end{condition}
\begin{condition}
\label{cond:Ident}
We have that $M_0(\beta_{0,\gamma},b_0)>M_0(\beta,b_0)$ for all $\beta \in B$ such that $\beta\neq \beta_{0,\gamma}$.
\end{condition}
We will start by showing Condition \ref{cond:UWLLN}.
Note that $V_n$ and ${M}_n$ are the weighted objectives in (\ref{eq:vnestw}) and (\ref{eq:mn}), and define $\tilde{V}_n=E_n v$ and $\tilde{M}_n=E_n m$ as their unweighted counterparts, where $v$ and $m$ are defined in (\ref{eq:v}) and (\ref{eq:m}). 
To show Condition \ref{cond:UWLLN}, note that
\begin{align*}
\sup_{\beta,b}|{M}_n(\beta,b)-M_0(\beta,b)|
&=\sup_{\beta,b}|{V}_n(\beta,b)-\gamma KL_0-M_0(\beta,b)|\\
&=\sup_{\beta,b}\left|{V}_n(\beta,b)-\tilde{V}_n(\beta,b)+\tilde{V}_n(\beta,b)-\gamma KL_0-M_0(\beta,b)\right|\\
&=\sup_{\beta,b}\left|\frac{E_n v}{E_n r}-\frac{E_n v}{E r}+\frac{E_n r}{E r}-\gamma KL_0-M_0(\beta,b)\right|\\
&=\sup_{\beta,b}\left|\frac{E_n r}{E r} - \frac{E_n v(E_nr -E r)}{E_n rEr }-\gamma KL_0-M_0(\beta,b)\right|\\
&\leq \sup_{\beta,b}\left|\frac{E_n r}{E r} -\gamma KL_0-M_0(\beta,b)\right| + \sup_{\beta,b}\left|\frac{E_n v(E_nr -E r)}{E_n rEr }\right|\\
&= \sup_{\beta,b}\left|\tilde{M}_n(\beta,b)-M_0(\beta,b)\right| + \sup_{\beta,b}\left|\frac{E_n v}{E_nr}(E_nr -E r)\right|\\
&\leq \sup_{\beta,b}\left|\tilde{M}_n(\beta,b)-M_0(\beta,b)\right| + \sup_{\beta,b}\left|\frac{E_n v}{E_nr}\right|\sup_{\beta,b}\left|E_nr -E r\right|\\
&=I+II.
\end{align*}
Where we have used that $E_r=1,$ which was shown in Remark \ref{rem:Er1}. Theorem 12.1 in \cite{wooldridge2010econometric} states that if $\beta\in B$ and $b\in B'$ for  compact subsets of $\mathbb{R}^K,$ $B$ and $B',$  and  if $m(\beta,b)$ is a measurable, real-valued function that is continuous and bounded on $B\times B',$ then $\sup_{\beta,b}|E_n m - E m|\gop 0.$ By Assumptions \ref{as:mContInBetabandcompact} and \ref{as:mbounded}, $m$ satisfies these conditions.  Hence, we have  \[\sup_{\beta,b}\left|\tilde{M}_n(\beta,b)-M_0(\beta,b)\right|=\sup_{\beta,b}\left|E_n m-Em\right|\gop 0.\]  Hence (I) vanishes.
 For (II), Set $R(S_t,A_t,S_{t+1})=1/T,$ where $R$ is defined in (\ref{eq:R}), and $\gamma=0,$  and note that \[m=v-\gamma \log kl = rR(S_t,A_t,S_{t+1})-\gamma \log kl=r,\] for $r$ defined in (\ref{eq:r}). Note then that Assumptions \ref{as:mContInBetabandcompact} and \ref{as:mbounded}, which apply to $m,$ apply equally to $r,$ which gives us that $\sup_{\beta,b}|E_nr - Er|\gop 0$ by Theorem 12.1 in \cite{wooldridge2010econometric}. Note that  $|b|<\infty$ by Assumption \ref{assum:pos} (i) on positivity, and hence $\sup_{\beta,b}|E_nv/E_nr|$ is bounded for all $\beta\in B$ and $b\in B'.$ Hence, we have that (II) vanishes.
Hence, because (I) and (II) vanish, we have uniform convergence of ${M}_n$. 
We further have that Condition \ref{cond:Ident}, the identification condition, holds by Assumption \ref{as:uniqueMaxofM}. 
We therefore conclude, based  on \cite{wooldridge2010econometric}, that
  the two-step M-estimator, $\beta_{n,\gamma},$ is consistent for $\beta_{0,\gamma}$.  
}

\newcommand{\lemErgrad}{
Note that by the Law of Large Numbers, $E_n \fdr{\beta}{r}\gop E \fdr{\beta}{r},$ so we just need to show that $E \fdr{\beta}{r}=(0,\dots,0)^T.$
Note that 
\begin{align*}
E \fdr{\beta}{r}
&=E_b\fdr{\beta}r\\
&=E_b \fdr{\beta}\frac{\prod_{t=0}^T \pi_{\beta}(A_t|S_t)}{\prod_{t=0}^T \pi_{b}(A_t|S_t)}\\
&=E_b\left(\frac{\prod_{t=0}^T \pi_{\beta}(A_t|S_t)}{\prod_{t=0}^T \pi_{b}(A_t|S_t)} \fdr{\beta}{\log \prod_{t=0}^T \pi_{\beta}(A_t|S_t)}\right)\\
&=E_{\beta}\left(\fdr{\beta}{\log \prod_{t=0}^T \pi_{\beta}(A_t|S_t)}\right)\\
&=E_{\beta}\left(\sum_{t=0}^T \fdr{\beta}{\log \pi_{\beta}(A_t|S_t)}\right)\\
&=E_{\beta}\left(\sum_{t=0}^T (A_t - \pi_{\beta}(A_t=1|S_t))S_t\right)\\
&=\sum_{t=0}^T E_{\beta}((A_t - \pi_{\beta}(A_t=1|S_t))S_t)\\
&=\sum_{t=0}^T E(E_{\beta}((A_t - \pi_{\beta}(A_t=1|S_t))S_t)|S_t))\\
&=\sum_{t=0}^T E((E_{\beta}(A_t|S_t) - \pi_{\beta}(A_t=1|S_t))S_t))\\
&=\sum_{t=0}^T E((\pi_{\beta}(A_t=1|S_t) - \pi_{\beta}(A_t=1|S_t))S_t))=0.
\end{align*}
where we have made use of the forms of the gradients that are defined in Appendix \ref{app:gradients}.
 }

\newcommand{\norprf}{
By Theorem \ref{thm:betanConsistNorm} (i) and Assumptions \ref{as:mbounded}-\ref{as:bconsist}, we can perform a multivariate Taylor expansion of the gradient of the objective around the point $(\beta_0,b_0).$ We will now be extending Theorem 5.41 of \cite{van2000asymptotic} to the case where we have a nuisance parameter and a weighting $E_nr.$  We will not write the higher order derivatives, as they will still disappear in the limit, as in the proof of Theorem 5.41. The terms in which the higher order derivatives appear contain either $\beta_{n,\gamma}-\beta_{0,\gamma}$ or $b_n-b_0,$ the first of which is $o_P(1)$ by Theorem \ref{thm:betanConsistNorm} (i) and the second of which is $o_P(1)$ by Assumption \ref{as:bconsist}, and the derivatives themselves are averages (as noted for the use case in \cite{van2000asymptotic} as well), and hence $O_P(1)$ by the Central Limit Theorem, so they terms in which they appear will be $o_P(1)O_P(1)=o_P(1)$.   

We have therefore an expansion just concerning the terms of interest,
\begin{align*}
\expan{\beta}{b}
\end{align*}
We can write this succinctly as 
\begin{equation}
\label{eq:earlyExp}
J_n(\beta,b) \approx J_n(\beta_{0,\gamma},b_0)+H_n{(\beta-\beta_{0,\gamma})} +  X_n{(b-b_0)},
\end{equation}
where $J_n=E_n z,$ for $z$ defined in (\ref{eq:z}), 
$H_n,$ and $X_n$ refer to the empirical gradient, Hessian, and cross derivative, respectively, defined in  (\ref{eq:derivsDefn}), and we only show the arguments for the gradient in (\ref{eq:earlyExp}) to distinguish its form on the left-hand side of the equation from its form on the right-hand side of the equation.

Since our objective function, $E_n \left({v}/{E_n r} - \gamma kl\right),$ given $b_n$, is maximized by $\beta_{n,\gamma},$ we can write (\ref{eq:earlyExp}) as
\[0=J_n(\beta_{n,\gamma},b_n)\approx J_n(\beta_{0,\gamma},b_0)+H_n(\beta_{n,\gamma}-\beta_{0,\gamma}) +  X_n(b_n-b_0).\]
Rearranging, 
\begin{equation*}
\label{eq:anotherExp}
-(J_n(\beta_{0,\gamma},b_0)+ X_n(\beta_{n,\gamma}))(b_n-b_0)\approx H_n(\beta_{n,\gamma}-\beta_{0,\gamma}).
\end{equation*}


By Lemma \ref{lem:limDerivsMn}, we have that $H_n = H_0+o_P(1)$ where $H_0$ is the Hessian of $M_0.$
Note that also by Lemma \ref{lem:limDerivsMn}, $X_n= X_0 + o_P(1),$ where $X_0$ is the cross derivative of the true objective $M_0.$  
Thus we have that 
\[-(J_n + (X_0+o_P(1))(b_n-b_0)) \approx (H_0+o_P(1))(\beta_{n,\gamma}-\beta_{0,\gamma}).\]
Rearranging the terms involving $o_P(1),$ and noting that,  by Assumption  \ref{as:bconsist}, $o_P(1)(b_n-b_0)=o_P(1)o_P(1)=o_P(1),$ we obtain
\[-(J_n + X_0(b_n-b_0)) \approx (H_0+o_P(1))(\beta_{n,\gamma}-\beta_{0,\gamma}) + o_P(1)(b_n-b_0).\]
By Assumption \ref{as:hessianInvert}, the Hessian term becomes invertible in the limit at $H_0$, and we multiply through and take transposes to get
\begin{align*}
-(H_0+o_P(1))^{-1}(J_n +  X_0(b_n-b_0)) \approx (\beta_{n,\gamma}-\beta_{0,\gamma}) + (H_0+o_P(1))^{-1}o_P(1)(b_n-b_0).
\end{align*}
Now, multiply both sides by $\sqrt{n}$ to obtain  
\begin{align}
\label{eq:toShowNormal}
\notag &\sqrt{n}(\beta_{n,\gamma}-\beta_{0,\gamma})\\
&\notag \approx -(H_0+o_P(1))^{-1}(J_n + X_0\sqrt{n}(b_n-b_0)) + (H_0+o_P(1))^{-1} o_P(1)\sqrt{n}(b_n-b_0)\\
&\notag =-(H_0+o_P(1))^{-1}\sqrt{n}(J_n +  X_0(b_n-b_0)) + (H_0+o_P(1))^{-1}O_P(1)o_P(1)\\
& = -(H_0+o_P(1))^{-1}(\sqrt{n} J_n +  X_0\sqrt{n}(b_n-b_0)) + o_P(1),
\end{align}
where $\sqrt{n}(b_n-b_0)=O_P(1)$ by Assumption \ref{as:bconsist} and the fact that $H_0+o_P(1)$ is invertible by Assumption \ref{as:hessianInvert} implies that its inverse is bounded in the limit, which allowed us to write that \[(H_0+o_P(1))^{-1}\sqrt{n}(b_n-b_0)^To_P(1)=(H_0+o_P(1))^{-1}O_P(1)o_P(1)=o_P(1).\]
To show that (\ref{eq:toShowNormal}) is asymptotically normal, we must show that $\sqrt{n}J_n$ is asymptotically normal.
Recall the definition of $J_n=E_n z$ from (\ref{eq:zn}). Note that
\begin{align}
\label{eq:rootnzn}
\sqrt{n} J_n&=\frac{\sqrt{n}E_n \fdr{\beta}{v} }{E_n r}
- \frac{\sqrt{n}E_n v (E_n \fdr{\beta}{r})}{(E_n r)^2}
- \sqrt{n}E_n\fdr{\beta}{\gamma  kl(\beta,b)}\\ 
&\notag \gol  
\frac{z_1}{E r}
- \frac{z_2 (E \fdr{\beta}{r})}{(E r)^2}
- z_3,
\end{align}
where $z_1,z_2,$ and $z_3$ are normally distributed by the Central Limit Theorem. By Slutsky's theorem and Remark \ref{rem:Er1}, $\sqrt{n}J_n$ is therefore asymptotically normal.
By the discussion in Appendix \ref{nuisinfluencefunction}, $\sqrt{n}(b_n-b_0)$ is asymptotically normal.
A final application of Slutsky's theorem to the expression in (\ref{eq:toShowNormal}) proves the result.
}

\subsection{Adaptive Lasso asymptotic normality}
Having shown that $\beta_{n,\gamma},$ the maximizer of ${M}_n,$ is well behaved in the limit, we will now prove a similar result for $\beta_{n,\gamma,\lambda},$ the maximizer of the full, double-penalized, relative sparsity objective, $W_n,$ which is defined in (\ref{eq:Wn}).
We prove this result by extending a result from \cite{zou2006adaptive}.
\newcommand{\thmadapt}{
Assume that, for $\delta>0,$ \[\frac{\lambda_n}{\sqrt{n}}\rightarrow 0 \text{ and }\lambda_n n^{(\delta-1)/2}\rightarrow \infty.\] Note that these conditions are necessary to obtain appropriate limiting behavior (either disappearance or predominance) of the penalty term.
 Define the active set, $\AC$,  to be the indices of the parameters that truly differ from their behavioral counterparts, 
 \begin{equation}
 \label{eq:AC}
 \AC=\{k:{(\beta_{0,\gamma,k}}-b_{0,k})\neq 0\}.
 \end{equation}
 Note that $\AC$ depends on $\gamma,$ since $\gamma$ defines the estimand without the relative sparsity penalty.
Define $\beta_{0,\gamma,\AC}$ as the coefficients indexed by $\AC.$

Then for the coefficients that differ from their behavioral counterparts, \[\sqrt{n}(\beta_{n,\gamma,\AC}-\beta_{0,\gamma,\AC})\gol N(0,(H_{0,\AC\AC})^{-1}\text{var}(r_{v_0}^T)((H_{0,\AC\AC})^{-1})^T).\]
where $r_{v_0}=[z_{0,\AC}^T+u_{0,\AC^C}^TH_{0,\AC^C\AC}+u_{0,\AC^C}^T{(}H_{0,\AC\AC^C}{)^T}+v_{0,\AC}^T(X_0^{T})_{\AC\AC}+v_{0,\AC^C}^T(X_0^{T})_{\AC^C\AC}]^T$ is a normally distributed combination of $z_0,H_0,X_0$ (which were defined in (\ref{eq:derivsDefn})), $u_0,$ and $v_0,$ and the subscripts $\AC,\AC^C$ indicate the indices associated with,  respectively, the non-behavioral and behavioral components of each vector or matrix. The coefficients that do not differ from their behavioral counterparts,  $\beta_{0,\gamma,\AC^C}$, converge to the limiting random variables of the behavioral policy estimators.
}

\begin{theorem}
\label{thmNormalityBetanLambda}
Asymptotic normality of $\beta_{n,\gamma,\lambda}.$ 
\thmadapt
\end{theorem}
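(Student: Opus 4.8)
The plan is to follow the classical adaptive-Lasso argument of \cite{zou2006adaptive}, but carried out for the two-step, weighted-importance-sampling objective $W_n$, so that the nuisance $b_n$ enters both through the weighting $E_n r$ and (in the post-selection representation) through the policy itself. First I would recenter: define $u = \sqrt{n}(\beta - \beta_{0,\gamma})$ and consider the reparameterized criterion $\Psi_n(u) = W_n(\beta_{0,\gamma} + u/\sqrt{n}, \beta_{n,\gamma}, b_n) - W_n(\beta_{0,\gamma}, \beta_{n,\gamma}, b_n)$, whose maximizer is $\hat u_n = \sqrt{n}(\beta_{n,\gamma,\lambda} - \beta_{0,\gamma})$. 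The smooth part $M_n(\beta_{0,\gamma}+u/\sqrt n, b_n) - M_n(\beta_{0,\gamma},b_n)$ is handled by the same Taylor expansion used in the proof of Theorem \ref{thm:betanConsistNorm}(ii): it converges (uniformly on compacta in $u$) to a quadratic form $-\tfrac12 u^T H_0 u + u^T(z_0 + X_0 q_0)$ after absorbing the cross-derivative and nuisance contributions, using Lemma \ref{lem:limDerivsMn}, Lemma \ref{lem:limDerivsRn}, Assumption \ref{as:hessianInvert}, and $\sqrt n(b_n - b_0) = O_P(1)$ from Assumption \ref{as:bconsist}. The novelty relative to \cite{zou2006adaptive} is that here $z_0$ itself already incorporates the importance-weighting correction term $-z_2 (E\,\partial_\beta r)/(E r)^2$ and the nuisance limit $q_0$, so I must be careful to keep track of the joint limiting distribution of $(z_0, q_0)$ and of all the block pieces that appear in $r_{v_0}$.

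Next I would analyze the penalty term $\lambda_n \sum_k w_{n,k}|\beta_k - b_{n,k}|$ componentwise, exactly as in \cite{zou2006adaptive}. For $k \in \AC$, we have $\beta_{0,\gamma,k} - b_{0,k} \neq 0$, so by consistency of $\beta_{n,\gamma}$ (Theorem \ref{thm:betanConsistNorm}(i)) and $b_n$ (Assumption \ref{as:bconsist}) the weight $w_{n,k} = 1/|\beta_{n,\gamma,k} - b_{n,k}|^\delta \gop 1/|\beta_{0,\gamma,k} - b_{0,k}|^\delta = O_P(1)$, and $\lambda_n(|\beta_{0,\gamma,k} + u_k/\sqrt n - b_{n,k}| - |\beta_{0,\gamma,k} - b_{n,k}|)$ is $O_P(\lambda_n/\sqrt n) \to 0$ by the first rate condition $\lambda_n/\sqrt n \to 0$; hence these terms vanish. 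For $k \in \AC^C$, we have $\beta_{0,\gamma,k} - b_{0,k} = 0$, so $|\beta_{0,\gamma,k} + u_k/\sqrt n - b_{n,k}| = |u_k/\sqrt n - (b_{n,k} - b_{0,k})|$, and the scaled weight satisfies $\lambda_n w_{n,k} = \lambda_n / |\beta_{n,\gamma,k} - b_{n,k}|^\delta$; since $\sqrt n(\beta_{n,\gamma,k} - b_{n,k}) = O_P(1)$ (from Theorem \ref{thm:betanConsistNorm}(ii) combined with Assumption \ref{as:bconsist}, as $k \in \AC^C$ forces $\beta_{0,\gamma,k} - b_{0,k} = 0$), one gets $\lambda_n w_{n,k} = \lambda_n n^{\delta/2}/O_P(1) \cdot n^{-\delta/2}\cdot\ldots$; rearranging, $\lambda_n w_{n,k}$ multiplied by the $n^{-1/2}$-scale perturbation behaves like $\lambda_n n^{(\delta-1)/2}/O_P(1) \to \infty$ unless the corresponding $u_k$ takes the value forced by the behavioral estimator. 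This is the standard ``the penalty explodes off the behavioral manifold'' phenomenon that pins $\beta_{n,\gamma,\lambda,k}$ for $k\in\AC^C$ to track $b_{n,k}$, so that $\sqrt n(\beta_{n,\gamma,\lambda,\AC^C} - b_{n,\AC^C}) \gop 0$ and these coordinates converge to the limiting random variables of the behavioral estimators.

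I would then invoke the epi-convergence / argmax-continuous-mapping device (Slutsky plus convexity of $\Psi_n$, or the Kim--Pollard argmax theorem as in \cite{zou2006adaptive}): $\Psi_n$ converges in distribution, as a process in $u$, to $\Psi(u)$ which is $-\tfrac12 u_\AC^T H_{0,\AC\AC} u_\AC + u_\AC^T r_{v_0}$ for $u_\AC$ and $+\infty$ (penalizing) off the behavioral values for $u_{\AC^C}$ — here the precise form of $r_{v_0}$ is obtained by collecting, in the block decomposition, the $z_{0,\AC}$ term, the two $H_0$ cross-block terms $u_{0,\AC^C}^T H_{0,\AC^C\AC}$ and $u_{0,\AC^C}^T H_{0,\AC\AC^C}^T$ arising because $u_{\AC^C}$ is fixed to a nonzero random limit rather than zero, and the two nuisance cross-derivative blocks $v_{0,\AC}^T(X_0^T)_{\AC\AC}$ and $v_{0,\AC^C}^T(X_0^T)_{\AC^C\AC}$. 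Maximizing the limit in $u_\AC$ gives $\hat u_\AC \gol H_{0,\AC\AC}^{-1} r_{v_0}$, hence $\sqrt n(\beta_{n,\gamma,\AC} - \beta_{0,\gamma,\AC}) \gol N(0, H_{0,\AC\AC}^{-1}\,\mathrm{var}(r_{v_0}^T)\,(H_{0,\AC\AC}^{-1})^T)$, as claimed, with normality of $r_{v_0}$ following because it is a fixed linear combination of the jointly-normal limits $z_0$, $u_0$, $v_0$ (the block pieces of the behavioral and TRPO score limits). \textbf{The main obstacle} I anticipate is not the penalty bookkeeping — that is essentially \cite{zou2006adaptive} — but correctly propagating the nuisance-parameter and self-normalization corrections through the joint limiting law: one must verify that the Taylor remainder terms involving products like $o_P(1)(b_n - b_0)$ and $E_n r - 1$ are genuinely negligible at the $\sqrt n$ scale (leaning on Lemma \ref{lem:limDerivsRn}, Remark \ref{rem:Er1}, and Assumption \ref{as:mbounded} for the third-derivative envelope), and that $(z_0, u_0, v_0)$ admit a joint CLT so that $r_{v_0}$ has a well-defined Gaussian law with the covariance implicitly claimed; establishing that joint convergence, rather than the marginal convergences already in hand, is where the real work lies.
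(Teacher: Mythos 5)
Your proposal follows essentially the same route as the paper's proof: recenter with $u=\sqrt{n}(\beta-\beta_{0,\gamma})$, $v=\sqrt{n}(b-b_0)$, Taylor-expand the smooth part using the machinery of Theorem \ref{thm:betanConsistNorm}(ii) (Lemmas \ref{lem:limDerivsRn} and \ref{lem:limDerivsMn}, Remark \ref{rem:Er1}, Assumptions \ref{as:bconsist} and \ref{as:hessianInvert}), treat the adaptive penalty componentwise exactly as in \cite{zou2006adaptive} with the two rate conditions killing the active-set terms and exploding the inactive ones so that $u_j$ is pinned to $v_j$ off $\AC$, and conclude by epi-convergence/argmax continuity with the block decomposition that produces $r_{v_0}$. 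Your explicit flagging of the joint convergence of $(z_0,u_0,v_0)$ is a fair point of care, but the argument and its structure coincide with the paper's.
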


\begin{proof}
See Appendix \ref{app:adaprf}.
\end{proof}



\newcommand{\adaprf}{
In Theorem \ref{thm:betanConsistNorm} (ii), we followed the standard proof of normality for two-step M-estimators that fulfill the classical conditions.  For this double penalized estimator ($\beta_{n,\gamma,\lambda}$) defined in (\ref{eq:betangammalambda}) proof, we instead follow   \cite{zou2006adaptive}.
Define, for ${M}_n$ in (\ref{eq:mn}), \[\Psi_n(\beta,b)=n {M}_n-\lambda_n \sum_{k=1}^K{w_{n,k}}|\beta_k-b_k|\]
and let
\begin{align*}
\Gamma_n(\beta)&=\Psi_n(\beta,b)-\Psi_n(\beta_{0,\gamma},b_0).
\end{align*}
Note  that, similarly to Theorem \ref{thm:betanConsistNorm} (ii), we can write 
\begin{align*}
\Psi_n(\beta,b)-\Psi_n(\beta_{0,\gamma},b_0)&
\approx 
    \sqrt{n}J_n\sqrt{n} (\beta-\beta_{0,\gamma})
    +\sqrt{n}z_{b,n}\sqrt{n}(b-b_0)\\
    &+ \frac{1}{2}\sqrt{n}(\beta-\beta_{0,\gamma})^T H_n\sqrt{n}(\beta-\beta_{0,\gamma})\\
      &
      +\frac{1}{2}\sqrt{n}(b-b_0)^T H_{bb,n} \sqrt{n}(b-b_0)\\
      &+ \sqrt{n}(b-b_0)^T X_{n}^{T}  (\beta-\beta_{0,\gamma}) \\
      &- \lambda_n\sum_{k=1}^K w_{n,k}\{|\beta_k-b_k|-|\beta_{0,\gamma,k}-b_{0,k}|\},
\end{align*}
where \[z_{b,n}=\fd{b}{{M}_n}{\substack{\beta=\beta_{0,\gamma}\\b=b_0}}\] and \[H_{bb,n}=\sds{b}{{M}_n}{\substack{\beta=\beta_{0,\gamma}\\b=b_0}}.\]
Both $\sqrt{n}J_n$ and $\sqrt{n}z_{b,n}$ are $O_P(1)$ by the Central Limit Theorem.

 Defining $u=\sqrt{n}(\beta-\beta_{0,\gamma})$  and $v=\sqrt{n}(b-b_0)$, let us now rewrite the expansion above following the style in \cite{zou2006adaptive} by substituting in $u,v$.
\begin{align*}
n{M}_n(u,v)-n{M}_n(0,0)&\approx \sqrt{n}J_{n} u + \sqrt{n}z_{b,n} v + u^T H_n u + v^T H_{b b, n}v + v^T X_n^{T}u
\end{align*}
Note that the terms that depend only on $v$ (and thus only on $b$) are constant with respect to $u$ (and by extension with respect to $\beta$) and, therefore, we do not need to consider them further in the objective function, which we will be maximizing with respect to $\beta$ only. Therefore, let us focus only on the redefined objective
\[\tilde{\Gamma}_n\approx \sqrt{n}J_nu + u^T H_n u + v^T X_n^{T} u,\] which,  as discussed in Theorem \ref{thm:betanConsistNorm} (ii), converges in probability to 
\[\tilde{\Gamma}_0\approx z_0{^T}u_0 + u_0^T H_0 u_0 + v_0^T X_0^{T} u_0.\] 

We can partition $\tilde{\Gamma}_0$ such that $u_{0,1}=u_{0,\mathcal{A}}, u_{0,2}=u_{0,\mathcal{A}^c}, v_{0,1}=v_{0,\mathcal{A}},v_{0,2}=v_{0,\mathcal{A}^c},$ etc, and rewrite,
 \begin{align}
 \label{eq:parti}
   \notag \tilde{\Gamma}_0\notag \approx \left[ {\begin{array}{cc}
   z_{0,1} & z_{0,2} \\
  \end{array} } \right]
  \left[ {\begin{array}{cc}
   u_{0,1}  \\
   u_{0,2}  \\
  \end{array} } \right]
  &+  \left[ {\begin{array}{cc}
   u_{0,1} & u_{0,2} \\
  \end{array} } \right]  \left[ {\begin{array}{cc}
   {(}H_0{)}_{11} & {(}H_0{)}_{12} \\
   {(}H_0{)}_{21} & {(}H_0{)}_{22} \\
  \end{array} } \right]  \left[ {\begin{array}{cc}
   u_{0,1}  \\
   u_{0,2}  \\
  \end{array} } \right] \\
  & +   \left[ {\begin{array}{cc}
   v_{0,1} & v_{0,2} \\
  \end{array} } \right]  \left[ {\begin{array}{cc}
   {(}X_0^{T}{)}_{11} & {(}X_0^{{T}}{)}_{12} \\
   {(}X_0^{{T}}{)}_{21} & {(}X_0^{{T}}{)}_{22} \\
  \end{array} } \right]  \left[ {\begin{array}{cc}
   u_{0,1}  \\
   u_{0,2}  \\
  \end{array} } \right].
\end{align}

This partitioning, as in  \citep{zou2006adaptive}, helps us pinpoint the parts of the objectives that are relevant to the  non-behavioral coefficients.  As we will see now, the penalty will go to zero for the pieces of the objective that depend on the non-behavioral  parameters and to infinity for the non-behavioral parameters. Note that a major innovation of \cite{zou2006adaptive} is to add dimension-specific weights, $w_{n,j}$, to the penalty term, which ensure this type of limiting behavior.

Let us now consider the penalty. We follow \cite{zou2006adaptive} closely here. 
Note that, by the continuous mapping theorem, Assumption \ref{as:bconsist}, and Theorem \ref{thm:betanConsistNorm} (i), 
\[w_{n,j}=
\frac{1}{|{\beta_{n,\gamma,j}}-b_{n,j}|^{\delta}}\gop \frac{1}{|{\beta_{0,\gamma,j}}-b_{0,j}|^{\delta}}.\]
Our relative sparsity penalty is
\begin{align*}
 &\frac{\lambda_n}{\sqrt{n}}\sum_{k=1}^K w_{n,k}\sqrt{n}\{|\beta_k-b_k|-|\beta_{0,\gamma,j}-b_{0,j}|\}\\
 &=  \frac{\lambda_n}{\sqrt{n}}\sum_{k=1}^K w_{n,k}\sqrt{n}\{|\beta_{0,\gamma,k}+\frac{u_k}{\sqrt{n}}-(b_{0,k}+\frac{v_k}{\sqrt{n}})|-|\beta_{0,\gamma,k}-b_{0,k}|\}\\
 &=
  \frac{\lambda_n}{\sqrt{n}}\sum_{k=1}^K w_{n,k}\sqrt{n}\{|(\beta_{0,\gamma,k}-b_{0,k})+\frac{u_k-v_k}{\sqrt{n}}|-|\beta_{0,\gamma,k}-b_{0,k}|\}
 .
\end{align*}
 We will now use the results in  \citep{zou2006adaptive} directly on the penalty term, but we will treat our shifted coefficient $(\beta-b)$ as they treat their raw coefficient $\beta$. 
Hence, following  \citep{zou2006adaptive}, if $(\beta_{0,\gamma,j}-b_{0j})\neq 0,$ then 
$w_{n,j}\gop |\beta_{0,\gamma,j}-b_{0,j}|^{-\delta}$ and
\begin{align*}
\sqrt{n}\{|(\beta_{0,\gamma,j}-b_{0,j})+\frac{u_j-v_j}{\sqrt{n}}|-|\beta_{0,\gamma,j}-b_{0,j}|\}\gop (u_{0,j}-v_{0,j})sgn(\beta_{0,\gamma,j}-b_{0,j}).
\end{align*}
By Slutsky's theorem, since $\lambda_n/\sqrt{n}\rightarrow 0,$ \[
\frac{\lambda_n}{\sqrt{n}}w_{n,j}\sqrt{n}\{|(\beta_{0,\gamma,j}-b_{0,j})+\frac{u_j-v_j}{\sqrt{n}}|-|\beta_{0,\gamma,j}-b_{0,j}|\}\gop 0.\]
If, on the other hand, $(\beta_{0,\gamma,j}-b_{0,j})=0,$ then
\begin{align*}
\sqrt{n}\{|(\beta_{0,\gamma,j}-b_{0,j})+\frac{u_j-v_j}{\sqrt{n}}|-|\beta_{0,\gamma,j}-b_{0,j}|= |u_j-v_j|.
\end{align*}
Since in this case \[\frac{\lambda_n}{\sqrt{n}}w_{n,j}=\frac{\lambda_n}{\sqrt{n}}n^{\delta/2}(\sqrt{n}(\beta_{n,\gamma}-b_n))^{-\delta},\] where
\[\sqrt{n}(\beta_{n,\gamma,j}-b_{n,j})=O_P(1)\] since  \begin{align*}\sqrt{n}(\beta_{n,\gamma,j}-b_{n,j})&=\sqrt{n}(\beta_{n,\gamma,j}-\beta_{0,\gamma,j}+\beta_{0,\gamma,j}-b_{n,j}-b_{0,j} + b_{0,j})\\&=\sqrt{n}(\beta_{n,\gamma,j}-\beta_{0,\gamma,j}-(b_{n,j}-b_{0,j})+\beta_{0,\gamma,j}-b_{0,j})\\&=\sqrt{n}(\beta_{n,\gamma,j}-\beta_{0,\gamma,j})-\sqrt{n}(b_{n,j}-b_{0,j})\\&=O_P(1)-O_P(1)=O_P(1).\end{align*}
We then have, since $\lambda_n/\sqrt{n}n^{\delta/2}\rightarrow \infty,$ that, by Slutsky's theorem,
\[\frac{\lambda_n}{\sqrt{n}}w_{n,j}\gop \infty.\]
We thus obtain complete penalization to the behavioral coefficient in this case.
Recall the previous partition of $\tilde{\Gamma},$ from (\ref{eq:parti}).
Coupled with the assumption that $\lambda_n n^{(\delta-1)/2}\rightarrow \infty,$ we have that
\[
\tilde{\Gamma}_n\gop \tilde{\Gamma}_0 = \begin{cases} z_0^T u_0 + u_0^T H_0 u_0{/2} + v_0^T {(}X_0^{{T}}{)} u_0 &\mbox{if } u_{0,j}-v_{0,j} =0, \forall j\not\in \mathcal{A} \\ 
-\infty &  otherwise \end{cases}.
\]
We see that even if the condition $u_{0,j} =v_{0,j}\forall j\not\in \mathcal{A}$ is satisfied, we do not have simplification of the objective, as is the case in \cite{zou2006adaptive}, since the terms $u_0$ and $v_0$ are not nonzero.
For the first case, we separate $u_{0,1}$ from $u_{0,2}$ to obtain 
\begin{align*}
z_0^Tu_0+u_0^TH_0u_0{/2}+v_0^T{(}X_0^{{T}}{)}u_0&= [z_{0,1}^T+u_{0,2}^T{(H_0)_{21}}{/2}+u_{0,2}^T{((H_0)_{12})^T}{/2}+v_{0,1}^T{(}X_0^{{T}}{)}_{11}+v_{0,2}^T{(}X_{0}^{{T}}{)}_{21}]^Tu_{0,1}\\&\ \ \ \ +u_{0,1}^T{(H_0)_{11}}u_{0,1}{/2}\\&=r_{v_0}u_{0,1}+u_{0,1}^T{(H_0)_{11}}u_{0,1}{/2},
\end{align*}
where
\[r_{v_0}=[z_{0,1}^T+u_{0,2}^T{(H_0)_{21}}{/2}+u_{0,2}^T({(H_0)_{12})^T}{/2}+v_{0,1}^T{(}X_0^{{T}}{)}_{11}+v_{0,2}^T{(}X_0^{{T}}{)}_{21}].\]
We then differentiate with respect to $u_{0,1}$ to see that the penalized objective, $W_0,$ is maximized at
\[u_{0,\AC}=-{((H_0)_{11})^{-1}}r_{v_0}^{{T}}.\]
If, on the other hand, $\beta_{0,\gamma,j}-b_{0,j}\neq 0,$ which occurs for $ j\not\in \AC,$ then the weight of the penalty goes off to negative infinity.  In this case, we must minimize $|u_j-v_j|,$ which drives $u_j$ to $v_j$.   
Finally, if we let
$u_n=\arg\max_{u} \tilde{\Gamma}_n$ and $u_0=\arg\max_{u} \tilde{\Gamma}_0,$ then, since
$\tilde{\Gamma}_n\gop \tilde{\Gamma}_0,$ we have that, by epi-convergence  \citep{geyer1994asymptotics,knight2000asymptotics},
$u_n\rightarrow u_0,$ where
\[u_0 = \left[u_{0,{\AC}},u_{0,\AC^c}\right]^T=\left[-{((H_0)_{11})^{-1}}r_{v_0}^{{T}},v_{\AC^c}\right]^T.\]
Thus we have proved the theorem.


}
Thus we obtain, for the truly non-behavior coefficients, which are the solutions to the maximization of the base objective, ${M}_n,$ asymptotic normality, because these coefficients are untouched (in the limit) by the adaptive penalty. We simultaneously drive the truly behavioral coefficients to their behavioral counterparts, which are also asymptotically normal, because they maximize  likelihood. 
\section{Inference for policy coefficients}

\subsection{Estimator for the variance of the coefficients}

\label{avar}
We have given expressions for the asymptotic forms of the coefficients $\beta_{n,\gamma}$ and $\beta_{n,\gamma,\lambda}$ based on the theory above, but now we give more detail on how to estimate the variances of these estimators in practice.  To do so, we will need to build on the partial derivatives of ${M}_n,$ the forms of  which are given in Lemma \ref{lem:limDerivsMn} and Appendix \ref{app:gradients}.

\newcommand{\grads}{
We will provide forms for the derivatives below. Note that $s,\beta,b\in \mathbb{R}^K$, $a\in\{0,1\},$
and $\pi(A=1|s)\in\mathbb{R}.$
For post-selection inference, we must define also the nuisance $b$ and $1_{\mathcal{A}}=(I(1\in \mathcal{A}),\dots,I(K\in \mathcal{A}))^T$ where $\mathcal{A}$ is the active set.  We then obtain policies
\begin{align*}
&\pi_{\beta,b}(A=a|s)=\pi_{\beta,b}(A=1|s) (1-\pi_{\beta,b}(A=1|s))^{1-a},\\
&\pi_{\beta,b}(A=1|s)=\expit(\beta^T(s\odot 1_{\mathcal{A}})+b^T(s\odot (1-1_{\mathcal{A}})).
\end{align*}
Note that the derivative of $\pi_{\beta,b}(A=1|s)$ with respect to its linear argument is  \[d\pi_{\beta,b}=\pi_{\beta,b}(1|s)(1-\pi_{\beta,b}(1|s)),\] which will occur below in numerous places.
For post-selection inference, we have 
\begin{align*}
&\fdr{\beta}{\log \pi_{\beta,b}(a|s)}=(a-\pi_{\beta,b}(1|s))(s\odot 1_{\mathcal{A}}),\\
&\fdr{b}{\log \pi_{\beta,b}(a|s)}=(a-\pi_{\beta,b}(1|s))(s\odot (1-1_{\mathcal{A}})),\\
&\sdsr{\beta}{\log \pi_{\beta,b}(a|s)}=-d\pi_{\beta,b}(s\odot 1_{\mathcal{A}})(s\odot 1_{\mathcal{A}})^T,\\
&\sddr{b}{\beta}{\log\pi_{\beta,b}(a|s)}=- d\pi_{\beta,b}(s\odot 1_{\AC}) (s\odot (1-1_{\AC}))^T. 
\end{align*}
Recall as well that for active set $\mathcal{A},$ $1_{\mathcal{A}}=(1_{1\in \mathcal{A}},\dots,1_{K \in \mathcal{A}})^T.$
We can use properties of the derivative of the log function, which is a popular trick in the policy gradient literature  \citep{sutton2018reinforcement}:
\begin{align*}
&\fdr{\beta}{\prod_{t=0}^T \pi_{\beta,b}(A_t|S_t)}=\prod_{t=0}^T \pi_{\beta,b}(A_t|S_t)\fdr{\beta}{\log \prod_{t=0}^T \pi_{\beta,b}(A_t|S_t)},\\
&\fdr{\beta}{\log \prod_{t=0}^T \pi_{\beta,b}(A_t|S_t) }={\sum_{t=0}^T\fdr{\beta} \log \pi_{\beta,b}(a_t|s_t)},\\
&\fdr{b}{\log \prod_{t=0}^T \pi_{\beta,b}(A_t|S_t) }={\sum_{t=0}^T\fdr{b} \log \pi_{\beta,b}(a_t|s_t)},\\
&\sddr{b}{\beta}{\log \prod_{t=0}^T \pi_{\beta,b}(A_t|S_t) }={\sum_{t=0}^T\sddr{b}{\beta} \log \pi_{\beta,b}(a_t|s_t)},\\
&\sdsr{\beta}{\log \prod_{t=0}^T \pi_{\beta,b}(A_t|S_t) }={\sum_{t=0}^T\sdsr{\beta} \log \pi_{\beta,b}(a_t|s_t)}.
\end{align*}

For the behavioral policy, recall that we have the following likelihood for one trajectory
\begin{align*}
&l_T(b)=\log \prod_{t=0}^T \pi_b(a_t|s_t).
\end{align*}
We then have score and hessian of the behavioral likelihood, which can be written as functions of the  derivatives defined above, as
\begin{align*}
&\fdr{b}{l_T(b)}=\sum_{t=0}^T \fdr{b}{\log \pi_b(a_t|s_t)},\\
&\sdsr{b}{l_T(b)}=\sum_{t=0}^T \sdsr{b}{\log \pi_b(a_t|s_t)}.
\end{align*}

}

\newcommand{\varDeriv}{





Let us now derive an estimator for the asymptotic variance of $\beta_{n,\gamma}.$ 
Recall that $z_0,$ $H_0,$ and $X_0$ are the expected gradient with respect to $\beta$, the expected Hessian with respect to $\beta,$ and the expected cross derivative with respect to $b$ and $\beta,$ all of which are defined in (\ref{eq:derivsDefn}). 
Recall that in Theorem \ref{thm:betanConsistNorm} (ii) we used a Taylor expansion in $\beta,b$ to obtain
\[\sqrt{n}(\beta_{n,\gamma}-\beta_{0,\gamma})\approx-H_n^{-1}(\sqrt{n}J_n + X_n\sqrt{n}q_n ),\] where $J_n$ and $X_n$ are defined in (\ref{eq:derivsDefn}), along with their expectations, and $q_n$ is defined in (\ref{eq:q}).
Recall that 
 \[r = \frac{\prod_{t=0}^T \pi_{\beta,b}(A_t|S_t)}{\prod_{t=0}^T \pi_{b}(A_t|S_t)}, v = r\sum_{t=0}^{T}R(S_t,A_t,S_{t+1}).\] 
Recall that by the Law of Large Numbers, Slutsky's theorem, and Remark \ref{rem:Er1}, that $H_n \gop H_0.$  
Recall also the definition of the cross derivative, $X_n,$ in Equation (\ref{eq:Xn}), and recall that by 
the Law of Large Numbers, Slutsky's theorem, and Remark \ref{rem:Er1}, $X_n\gop X_0.$  

We finally  consider $J_n=E_n z,$ where $J_n$ is also defined in  (\ref{eq:derivsDefn}), and where
recall that, as defined in (\ref{eq:z}),
\begin{equation}
z = \fdr{\beta}{(v/E_nr - \gamma kl)}. 
\end{equation}
Recall also that we have that 
\begin{align*}
\sqrt{n}  J_n=\sqrt{n}E_n z &=\frac{\sqrt{n}E_n \fdr{\beta}{v} }{E_n r}
- \frac{\sqrt{n}E_n v (E_n \fdr{\beta}{r})}{(E_n r)^2}
- \sqrt{n}E_n\fdr{\beta}{\gamma  kl(\beta,b)}\\ 
&\gol  
\frac{z_1}{E r}
- \frac{z_2 (E \fdr{\beta}{r})}{(E r)^2}
- z_3,
\end{align*}
where $z_1,z_2,z_3$ are normally distributed random variables by the CLT, Slutsky's theorem, and Remark \ref{rem:Er1}, and hence their scaled sum is also normally distributed. Set their scaled sum to be $z_0=\frac{z_1}{E r}
- \frac{z_2 (E \fdr{\beta}{r})}{(E r)^2}
- z_3.$
Replace the empirical distributions involving $r$ in $z$ to define
\begin{align}
\label{eq:tildez}
\tilde{z}
=
\frac{ \fdr{\beta}{v} }{E r}
- \frac{v (E \fdr{\beta}{r})}{(E r)^2} 
- \fdr{\beta}{\gamma  kl(\beta,b)}.
\end{align}
Note that still
\begin{align*}
\sqrt{n} E_n \tilde{z}&=\frac{\sqrt{n}E_n \fdr{\beta}{v} }{E r}
- \frac{\sqrt{n}E_n v (E \fdr{\beta}{r})}{(E r)^2}
- \sqrt{n}E_n\fdr{\beta}{\gamma  kl(\beta,b)}\\ 
&\gol  
\frac{z_1}{E r}
- \frac{z_2 (E \fdr{\beta}{r})}{(E r)^2}
- z_3 =z_0
\end{align*}
We have then that, showing explicitly  that the derivatives are evaluated at their true parameters,
\begin{align*}
\sqrt{n}(\beta_{n,\gamma}-\beta_{0,\gamma}) &\approx - \sqrt{n} E_n ((H_n(\beta_{0,\gamma},b_0))^{-1} (z(\beta_{0,\gamma},b_0)+X_n(\beta_{0,\gamma},b_0)q(b_0)))\\&\gol  ((H_0(\beta_{0,\gamma},b_0))^{-1} (z_0(\beta_{0,\gamma},b_0)+X_0(\beta_{0,\gamma},b_0)q_0(b_0) )),
\end{align*}
where $z_0$ and $q_0$ are normally distributed random variables and $H_0$ and $X_0$ are constants.  Note that, now 
\begin{align*}
&var((H_0(\beta_{0,\gamma},b_0))^{-1} (z_0(\beta_{0,\gamma},b_0)+X_0(\beta_{0,\gamma},b_0)q_0(b_0) ))\\
&=(H_0(\beta_{0,\gamma},b_0))^{-1} var(z_0(\beta_{0,\gamma},b_0)+X_0(\beta_{0,\gamma},b_0)q_0(b_0)) ((H_0(\beta_{0,\gamma},b_0)^{-1})^T)
\end{align*}
Now, plug in for $z_0$ and $q_0$ with $\sqrt{n}E_n \tilde{z}$ and $\sqrt{n}E_n q$ to get
\begin{align}
\label{eq:removedWeight}
\notag &(H_0(\beta_{0,\gamma},b_0))^{-1} var(\sqrt{n}E_n \tilde{z}(\beta_{0,\gamma},b_0)+X_0(\beta_{0,\gamma},b_0)\sqrt{n}E_n q(b_0)) ((H_0(\beta_{0,\gamma},b_0)^{-1})^T)\\
 &=(H_0(\beta_{0,\gamma},b_0))^{-1} var(\sqrt{n}E_n (\tilde{z}(\beta_{0,\gamma},b_0)+X_0(\beta_{0,\gamma},b_0) q(b_0))) ((H_0(\beta_{0,\gamma},b_0)^{-1})^T)\\
\notag &=(H_0(\beta_{0,\gamma},b_0))^{-1} var( \tilde{z}(\beta_{0,\gamma},b_0)+X_0(\beta_{0,\gamma},b_0) q(b_0)) ((H_0(\beta_{0,\gamma},b_0)^{-1})^T)\\
\notag &=(H_0(\beta_{0,\gamma},b_0))^{-1} E(( \tilde{z}(\beta_{0,\gamma},b_0)+X_0(\beta_{0,\gamma},b_0) q(b_0))^2) ((H_0(\beta_{0,\gamma},b_0)^{-1})^T)
\end{align}
where the variance turns into the expectation of the square because the true derivative evaluated at the true parameters is zero.  
We can replace the true expectation, $E,$ with its empirical version, $E_n:$
\[(H_0(\beta_{0,\gamma},b_0))^{-1} E_n(( \tilde{z}(\beta_{0,\gamma},b_0)+X_0(\beta_{0,\gamma},b_0) q(b_0))^2) ((H_0(\beta_{0,\gamma},b_0)^{-1})^T).\]
Note that $\tilde{z}$, defined in (\ref{eq:tildez}), depends on $Er$. Hence, we can plug in for $Er,$ and hence replace 
$\tilde{z}$ with $z,$ where recall that, as defined in (\ref{eq:z}),
\begin{equation}
z = \fdr{\beta}{(v/E_nr - \gamma kl)}. 
\end{equation}
This allows us to then reintroduce the weighting term, even though we had to remove it earlier in (\ref{eq:removedWeight}) to perform manipulations of the variance operator, which requires independence of the summands, which is broken by the dependence of  $E_nr$ on the joint distribution of the $n$ samples. Hence, replacing $\tilde{z}$ with $z,$ we obtain,
\[(H_0(\beta_{0,\gamma},b_0))^{-1} E_n(( z(\beta_{0,\gamma},b_0)+X_0(\beta_{0,\gamma},b_0) q(b_0))^2) ((H_0(\beta_{0,\gamma},b_0)^{-1})^T).\]
We finally plug in $\beta_{n,\gamma},b_n, H_n,$ and $X_n$  for $\beta_{0,\gamma},b_0, H_0,$ and $X_0,$ to get the final form of our estimator
\[\sigma^2_n = (H_n(\beta_{n,\gamma},b_n))^{-1} E_n(( z(\beta_{n,\gamma},b_n)+X_n(\beta_{n,\gamma},b_n) q(b_n))^2) ((H_n(\beta_{n,\gamma},b_n)^{-1})^T).
\]
}

Recall that $H_n,$ defined in (\ref{eq:derivsDefn}),  is an estimator for the Hessian $H_0.$ Recall that $X_n,$ defined in (\ref{eq:Xn}), is an estimator for the cross derivative, $X_0$.
Define the function over which we take an empirical expectation in the gradient of (\ref{eq:mn}) as
\begin{equation}
\label{eq:z}
z = \frac{\partial}{\partial \beta} \left (\frac{\frac{\prod_{t=0}^T \pi_{\beta}(A_t|S_t)}{\prod_{t=0}^T \pi_{b}(A_t|S_t)}\sum_{t=0}^TR(S_t,A_t,S_{t+1})}{E_n \frac{\prod_{t=0}^T \pi_{\beta}(A_t|S_t)}{\prod_{t=0}^T \pi_{b}(A_t|S_t)}}-\gamma \log\left(\frac{\prod_{t=0}^T \pi_{b}(A_t|S_t)}{\prod_{t=0}^T \pi_{\beta}(A_t|S_t)}\right)\right )\bigg|_{\substack{\beta=\beta_{0,\gamma}\\b=b_0}}. 
\end{equation}
Note that $J_n = E_n z,$ for $J_n$ defined in (\ref{eq:derivsDefn}). For $z$  defined in (\ref{eq:z}), define $z_i$ such that $J_n=\frac{1}{n}\sum_i z_i$. 
Define 
\begin{equation}
\label{eq:q}
q=(E_nl''(b_n))^{-1}l'(b_n) \text{ and } q_n = E_n q.
\end{equation}
For $q$ defined in (\ref{eq:q}), define $q_i$ such that $q_n = \frac{1}{n}\sum_i q_i$. 
 
 We then derive in Appendix \ref{app:avarbeta} the following estimator for $\sigma^2_0,$ the asymptotic variance of $\beta_{n,\gamma},$
\begin{equation}
\label{eq:sigma2n}
\sigma^2_n = (H_n)^{-1} \left(\frac{1}{n}\sum_i( z_{i}+X_n {q_{i}})^2\right) \left(H_n^{-1}\right)^T,
\end{equation}
where all terms are evaluated at plug-in estimates, $\beta_{n,\gamma}$ and $b_n$, of the true parameters, $\beta_{0,\gamma}$ and $b_0$, respectively. 

We use (\ref{eq:sigma2n}) to perform rigorous inference in the post-selection step.  We also use the estimator (\ref{eq:sigma2n}) to obtain a rough visualization of the variability of $\beta_{n,\gamma,\lambda}$ in the training data, which is heuristic, but can be helpful for assessing the impact of $\gamma$ on the stability of $M_n=V_n-\gamma KL_n$, defined in (\ref{eq:mn}). The stability of $M_n$ is important for inference in the post-selection step. To assess this stability, we can also examine the variance of the value, an estimator for which is given below.



\newcommand{\avarbeta}{
We can write
\begin{align*}\sqrt{n}(\beta_{n,\gamma}-\beta_{0,\gamma})&\approx-H^{-1}\left(\sqrt{n}E_n \frac{d}{d\beta}m + X(l'')^{-1}\sqrt{n}E_n l' \right)\\
&=-\sqrt{n}E_nH^{-1}\left(\frac{d}{d\beta}m + X(l'')^{-1}l' \right)\\
&=-\sqrt{n}E_n AB,
\end{align*}
where $A=H^{-1}$ and $B=\frac{d}{d\beta}m + X(l'')^{-1}l'.$
Then,
\begin{align}
\label{eq:sigbetangamma}
\sigma^2=var(\sqrt{n}(\beta_{n,\gamma}-\beta_{0,\gamma}))\approx var(AB)=Avar(B)A^T= AE[BB^T]{A}^T\approx AE_n[BB^T]{A}^T.
\end{align} The second to last equality follows because
\[var(B)=E(BB^T)-E(B)E(B)T\] and 
\[E(B)=E(z+X(l'')^{-1}l')=E(\beta-\beta_0)+X(l'')^{-1}El'=0,\] where the last statement follows because  $\frac{d}{db}l|_{b=b_0}.$
}

\subsection{Estimator for the variance of the value}
To select $\lambda,$ the tuning parameter for the relative sparsity penalty in (\ref{eq:Wn}),
it is useful to estimate the variance of the value function. We use the following plug-in estimator, which is derived in Appendix \ref{app:valuevar},
\begin{equation}
\label{valuevar}
\sigma^2_{n,V,w} = \frac{\frac{1}{n} \sum_{i=1}^n \left(\frac{\prod_{t=0}^T \pi_{\beta}(A_{i,t}|S_{i,t})}{\prod_{t=0}^T \pi_{b}(A_{i,t}|S_{i,t})}\sum_{t=0}^T R(S_{i,t},A_{i,t},S_{i,t+1}) - {V}_n\right)^2}{\left(\frac{1}{n}\sum_{i=1}^n \frac{\prod_{t=0}^T \pi_{\beta}(A_{i,t}|S_{i,t})}{\prod_{t=0}^T \pi_{b}(A_{i,t}|S_{i,t})}\right)^2}.
\end{equation}
This estimator will give us a sense of the uncertainty of the value function.
This estimator is especially important for the selection rule in (\ref{lambda.crit.diff}), where we require a standard error of the value under the behavioral policy.  In other words, we require the variance of $V_n(b).$ When $\beta=b,$ the estimator in  (\ref{valuevar}) reduces to a simple empirical variance.

\newcommand{\valuevar}{
Recall that by Slutsky's theorem  and Remark \ref{rem:Er1},
\begin{equation}
    V_n=\frac{E_n v}{E_n r}\gop \frac{E v}{E r}=E v.
\end{equation}
Now, because $E_n v$ is an average, ${\sqrt{n}}{E_n v} \gol Z_V \rightarrow N(E v, \sigma_V^2)$ for some $\sigma_V^2.$ We would like to estimate $\sigma_V^2.$  In the unweighted case, we can use the estimator derived in {\citerelativesparsity},
\begin{equation}
\label{eq:sigmaVnest}
    \sigma_{V,n}^2(\tilde{V}_n) = var(\sqrt{n}\tilde{V}_n)=\frac{1}{n} \sum_{i=1}^n \left(\frac{\prod_{t=0}^T \pi_{\beta}(A_{i,t}|S_{i,t})}{\prod_{t=0}^T \pi_{b}(A_{i,t}|S_{i,t})}\sum_{t=0}^T R(S_{i,t},A_{i,t},S_{i,t+1}) - \tilde{V}_n\right)^2.
\end{equation}
In the weighted case, we note that 
\begin{equation}
    \sqrt{n}V_n=\frac{\sqrt{n}E_n v}{E_n r}\gop \frac{Z_V}{E r}.
\end{equation}
Hence, we would like to estimate the variance of the weighted importance sampling estimator, 
\[\sigma^2_{0,V,w}=var\left(\frac{Z_V}{Er}\right)=\frac{1}{(E r)^2}var(Z_V)=\frac{1}{(E r)^2} \sigma_V^2,\] where we have used a similar strategy in terms of plugins as we did in step (\ref{eq:removedWeight}) of the derivation of the variance of $\beta_{n,\gamma}$, which is defined in (\ref{eq:sigma2n}). We plug in for $\sigma_V^2$ using (\ref{eq:sigmaVnest}) and we plug in for $E r$ with $E_n r$ to obtain a final estimator for the weighted objective function, 
\[
\sigma^2_{n,V,w} = \frac{\sigma_{V,n}^2(V_n)}{(E_n r)^2}.
\]
}

\newcommand{\scaling}{

}
\subsection{Estimation algorithm}
\begin{algorithm}[H]
\begin{algorithmic}
\STATE First, divide the dataset in half into split 1 (for selection) and split 2 (for post-selection inference).
\bindent
\STATE  Divide split 1 into a split 1 train and split 1 test.
\STATE  In split 1 train, optimize (\ref{eq:mn}) to obtain pilot estimators, $\beta_{n,\gamma},$ and maximize likelihood to obtain $b_n,$  Use these to optimize (\ref{eq:Wn}) to obtain $\beta_{n,\gamma,\lambda}$ for three values of $\gamma,$ three values of $\delta,$ and ten values of $\lambda.$ 
\STATE In split 1 train, visualize each $(\gamma,\delta)$ pair over varying $\lambda.$  
\STATE In split 1 train, choose $(\gamma,\delta)$ based on the visualizing the variability of the estimates $\beta_{n,\gamma,\lambda}$ using estimator (\ref{eq:sigma2n}).  
\STATE In split 1 train, simultaneously, view the value in (\ref{eq:vnestw}) as a function of $\gamma,\lambda,$ and $\delta,$ along with its variance, using the estimator (\ref{valuevar}). 
\STATE Also, using split 1 test, estimate the value. 
\STATE Given $\gamma$ and $\delta$, select $\lambda$ using (\ref{lambda.crit.diff}) in split 1 train. 
 \eindent
\STATE Given a $\gamma,\delta,$ and $\lambda,$ which creates selection $1_{\AC},$ where $\AC$ indexes the selected covariates: 
\bindent
\STATE In split 2 (the post-selection step), optimize (\ref{eq:mn}) to obtain $\beta_{n,\gamma,\AC}$ (where $\AC$ in $\beta_{n,\gamma,\AC}$ indexes the selected components) . 
\STATE In split 2, perform inference using Theorem \ref{thm:betanConsistNorm} and the corresponding estimator (\ref{eq:sigma2n}), which provides confidence intervals for the selected components, $\beta_{0,\gamma,\AC}$.
\eindent
\end{algorithmic}
\end{algorithm}

\section{Simulations}
We perform simulations to better understand the behavior of our estimators in Equations (\ref{eq:sigma2n}) and (\ref{valuevar}) as a function of the tuning parameters $\gamma$, $\lambda$, and $\delta$ (recall that these parameters are discussed in Section \ref{sec:estTuning}).

\subsection{Simulation scenario}
\label{sim.sett}
For our simulations, 
Set $\alpha=P(\text{type 1 error})=0.05$.  
We will fix the sample size to be $n=1000,$ the number of Monte-Carlo repetitions for the selection 
to be $M_S=100,$ the number of Monte-Carlo repetitions for coverage 
to be $M_C=500$,  the trajectory length to be $T=2,$ the state to be $S\in \mathbb{R}^K,$ where $K=2$, and the true behavioral policy parameter to be $b_0=(-0.3,0.2)^T$.  We further define the reward to be
\begin{equation}
\label{eq:rewardspec}
R(s_t,a_t,s_{t+1})=-s_{t,2}a_t.
\end{equation}  
The reward in (\ref{eq:rewardspec}) is convenient, because it illustrates our method and also allows us to derive theoretical results with respect to $\beta_{0,\gamma}.$ These results are discussed in Appendix \ref{supp.section.estimating.beta.gamma} and Appendix \ref{sim.coef} and can be used to evaluate coverage in the simulation studies.

\subsection{Data generation}

For our data generation, we require that the states have  constant variance over time, which is the case with many physiological measurements, and we therefore use the generative model in \cite{ertefaie2018constructing}, which we will describe briefly below.
We set  the standard deviation of $\epsilon,$ to the $K\times K$ identity matrix, $\sigma_{\epsilon}=I_K,$  and we set the treatment effect  to be $\tau_k=0.1$ for all $k$. Let  $\mu_0=1_K,$  where $1_K$ is a $K$-dimensional vector of ones, and draw the first state and action, $S_0\sim N(0_K,I_K)$ and $A_0|S_0 \sim Bern(\expit(b_0^TS_0)).$ For each ensuing time step $t$, for dimension $k$, draw ${\epsilon_k}\sim N(1,\sigma_{\epsilon,k}^2),$ where $\sigma_{\epsilon,k}^2=1$.  Then draw
\[S_{t,k}|S_{t-1,k},A_t = {(S_{t-1,k}-\mu_{t-1,k}+\epsilon_k})/{(1+\sigma^2_{\epsilon,k})^{1/2}}+\mu_{t,k},\] where  $\mu_{t+1,k} = \mu_{t,k}(1+\tau_k A_t),$ and
 $A_t|S_t \sim Bern(\expit(b_0^T S_t)).$
Note that $\sigma_{\epsilon,k}^2$ is constant over time, and that the division by $\sigma_{\epsilon,k}^2$ in the expression for $S_{t,k}$ ensures that the variance of the states are constant over time.
In simulations, because we set $\sigma^2_{\epsilon,k}=1,$ we generate states with unit variance.  We then do not have to scale the states to have equal variance as discussed in \ref{app:scaling}. This is useful because, as described in Section \ref{supp.section.estimating.beta.gamma}, we will be using the fact that we know the functional form of the reward in (\ref{eq:rewardspec})  to analytically derive an expression to estimate $\beta_{0,\gamma}$, but this expression will be impacted by scaling of the state.  When we just observe the reward , as we do in the real data, this is not an issue, and we do scale the states.

\newcommand{\IPTW}{
\[E(R(1))-E(R(0))\approx \frac{1}{n}\sum_{i=1}^n \frac{R_i A_i}{\pi_{b_n}(A_i=1|S_i)}-\frac{1}{n}\sum_{i=1}^n \frac{R_i (1-A_i)}{1-\pi_{b_n}(A_i=1|S_i)}\]
}

\subsection{Selection diagrams}
\label{sec:selectionDiagramsSim}
\begin{figure}[htp!]
\centering
\includegraphics[width=0.98\textwidth]{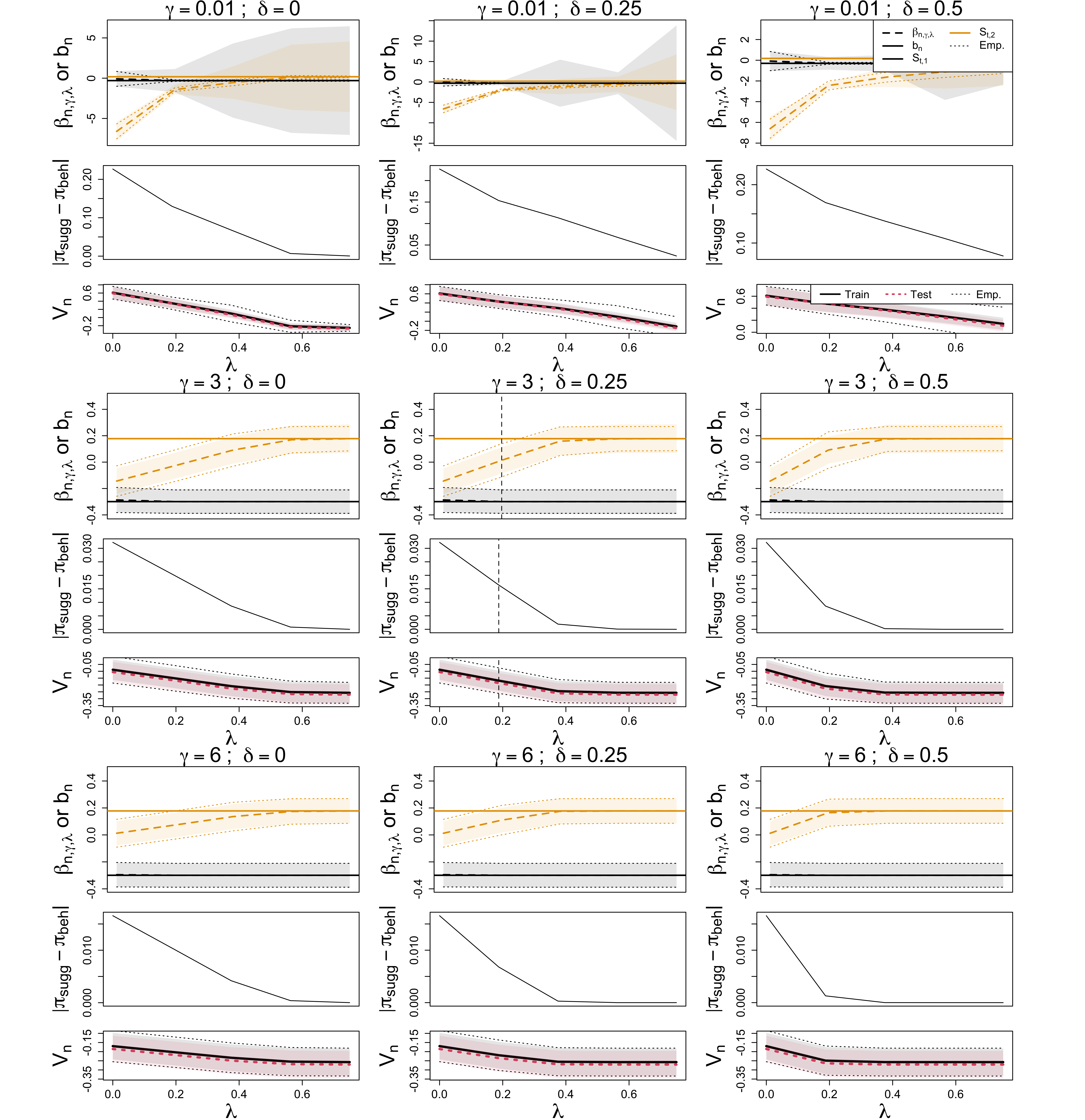}
	\caption[Selection diagrams for the simulated data]{
\textbf{Selection diagrams for the simulated data ($n_{train}=250,n_{test}=250$).}
	 Over increasing $\gamma$ (going down) and $\delta$ (going right), we show the average coefficients in the suggested ($\beta_{n,\gamma,\lambda})$ and behavioral ($b_n$) policies, the average difference in probability of treatment between the two policies, and the average value ($V_n$) for the suggested policy, all of which were computed in the first split of the data . 
The dotted vertical line indicates $\lambda_n$, a choice of $\lambda$  based on (\ref{lambda.crit.diff}). Note that the average suggested policy probability of treatment is $\pi_{sugg}=(1/nT)\sum_i\sum_t{\pi}_{\beta_{n,\gamma,\lambda}}(A_{i,t}=1|s_{i,t})$ and vice versa for $\pi_{beh}$.
The shaded regions in the coefficient ($\beta_{n,\gamma,\lambda}$) panels correspond to (\ref{eq:sigma2n}), and the dotted lines show one standard error estimated empirically.
The shaded regions in the value panels show one standard error based on (\ref{valuevar}), which was used to select $\lambda_n$  using (\ref{lambda.crit.diff}), and the dotted lines show one standard error estimated empirically. 
 }
\label{fig:sim}
\end{figure}
  Figure \ref{fig:sim} shows results for $M_S=100$ Monte-Carlo datasets, and panels are arranged in triplets indexed by $\gamma$ and $\delta$, where $\gamma$ and $\delta$ increase as we descend the plot or go to the right, respectively. 
We see that some variables approach their behavioral values less rapidly, giving us  relative sparsity, as observed in {\citerelativesparsity}.  Under the assumption that (\ref{eq:thetapolicyInf}) holds, and recalling that $\beta_0=\arg\max V_0,$ we expect to see results that align with the fact that the unconstrained maximizer is $\beta_{0}=(0,-\infty),$ for which a proof is provided in Appendix \ref{sim.coef}. Hence, we expect the sign of the coefficient $\beta_{n,2}$ to be negative and to become larger in magnitude as $\lambda\rightarrow 0,$ which aligns with the results in Figure \ref{fig:sim}. 

When $\gamma$ is large enough, and therefore the objective function is stable enough, we see that the ``empirical'' standard errors of the coefficients, which are shown as dotted black lines, align with variability shown by (\ref{eq:sigma2n}), which are shown as shaded regions.   

In the second panel of each triplet, which shows the difference in the probability of treatment under the behavioral and suggested policies, we see that increasing $\gamma$ gives us some baseline closeness to behavior, and the gap is further closed by increasing $\lambda$.  
The central triplet in Figure \ref{fig:sim} contains vertical dotted lines indicating the selected policy.   For each dataset, we automatically choose $\lambda_n,$ based on (\ref{lambda.crit.diff}), such that the corresponding suggested policy is as sparse as possible but has an increase in value of one standard error above the behavior policy. The dotted vertical  line in the central triplet indicates the choice of $\lambda_n$  when using the coefficients averaged over Monte-Carlo datasets.  
  Note that this selection, using (\ref{lambda.crit.diff}), was conducted only on one split of the data, which was itself split into a training set and a test set. The former is used to estimate coefficients and the latter to assess held-out value.  We see good overall closeness to behavior for the selected policy, and that the second covariate was selected, as expected.

In general, for $V_n$ in Figure \ref{fig:sim}, for large enough $\gamma$, we see that the shaded regions, which correspond to one standard error of $V_n$ according to the theoretical estimator in (\ref{valuevar}), are close to the dotted lines, which correspond to one standard error estimated empirically. 
For both the coefficients and the value,  the objective function becomes more unstable for small $\gamma.$ It is important not to attempt to attain too much value, $V_n$, by making $\gamma$ small, because the more one upweights $V_n,$  the more unstable the estimation and inference become.  However, Figure \ref{fig:sim} exaggerates this instability, because it is based on only half of the data, which is further divided in half (one half is used for training and one half to compute held out value). Since the post-selection inference will be conducted on a larger sample, and the number of covariates will decrease after selection, both of which stabilize the estimation and inference, as discussed in Section \ref{sec:estTuning}, one can select a $\gamma$ that is slightly smaller than what Figure \ref{fig:sim} might suggest.  
  Having made selections of $\gamma$ and $\lambda$,  we then perform inference on a held out, independent split of the dataset, for which we will now provide results in Section \ref{sec:postselect}.

\subsection{Post-selection inference}
\label{sec:postselect}
For each dataset, to avoid issues with post-selection inference that occur when one selects and does inference on the same dataset \citep{potscher2009distribution}, we first perform selection using (\ref{eq:Wn}) in one split of the data, and then we  use a second, independent split for inference. For the latter, we perform inference on the non-behavioral components of $\beta_{n,\gamma}$ within the suggested policy $\pi_{\beta,b},$ which is defined in Section \ref{sam.split}, Equation (\ref{eq:postSelectPolicy}). Recall that  each non-selected (behavioral) coefficient, $\beta_{k}$ in $\pi_{\beta,b},$ which is defined in (\ref{eq:postSelectPolicy}),  is fixed in advance to its behavioral counterpart, $b_{n,k}$. We therefore re-estimate and perform inference on the non-behavioral components of $\beta_{\gamma,n}$.  We show post-selection inference results in Table \ref{tab.post.inf}.

\begin{table}[ht]
\centering
\begin{tabular}{lll}
 & Suggested ($\beta_{n,\gamma})$ & Behavioral ($b_n$) \\ 
  \hline
$S_{t,1}$ & set to $b_n$ & -0.302 (-0.428, -0.176) \\ 
  $S_{t,2}$ & -0.129 (-0.261, 0.003) & 0.189 (0.066, 0.312) \\ 
\end{tabular}
\caption[Simulation post-selection inference]{Estimated coefficients (95\% confidence interval) for held out dataset post-selection inference; $n=$500, $\gamma=$3.}
\label{tab.post.inf}
\end{table}

\newcommand{\otherseltables}{
}

We see that, for the selected second coefficient, the confidence intervals in Table \ref{tab.post.inf} show a significant difference from behavior . 
In the same way that we showed our theoretical estimators for the coefficients and value in Figure \ref{fig:sim} against an empirical reference, we do the same for the confidence intervals in Table \ref{tab.post.inf}. We know that the coefficients that are set to their behavioral counterparts have nominal coverage, because they are derived by maximum likelihood estimation \citep{casella2002statistical,van2000asymptotic}, but we must check the coverage for the non-behavioral coefficient corresponding to $S_{t,2}$.
In Section \ref{sec:simCoverage}, we do so by conducting an additional Monte-Carlo study in which we generate Table \ref{tab.post.inf} $M_C=500$  times, and check whether the confidence interval coverage for $\beta_{0,\gamma,2}$ is nominal. 

\subsection{Coverage of active parameters}
\label{sec:simCoverage}

We assess the results in Table \ref{tab.post.inf} with a Monte-Carlo study. In the process, we assess our estimator, $\sigma_n^2$, for the variance of $\beta_{n,\gamma}$ (Theorem \ref{thm:betanConsistNorm} (ii) and Equation (\ref{eq:sigma2n})).  
Details of the Monte-Carlo summary statistics are given in Appendix \ref{app:simsum}, but we summarize them briefly here.  To assess coverage, we must  know $\beta_{0,\gamma}.$ However, in this policy search setting,  $\beta_{0,\gamma}$ is unknown to us, even when we are simulating the data.  Since we specified the functional form of the reward to be (\ref{eq:rewardspec}), however, we can estimate $\beta_{0,\gamma}$ arbitrarily well by sidestepping the need for importance sampling, as described in Appendix \ref{supp.section.estimating.beta.gamma}. We thus obtain the ``true'' parameter, $\beta_{0,n,\gamma}$, defined in (\ref{eq:mc:beta0ngamma}) of Appendix \ref{supp.section.estimating.beta.gamma}. We index $\beta_{0,n,\gamma}$ with $0$ to designate its role as a reference against which we will check our theory, and we treat  $\beta_{0,n,\gamma}$ as $\beta_{0,\gamma}$ when we evaluate coverage. 

To check consistency (Theorem \ref{thm:betanConsistNorm} (i)), we also compare the ``true'' estimand $\beta_{0,n,\gamma}$ to the estimated coefficients, $\beta_{n,\gamma}=\arg\max {M}_n,$ averaged over Monte-Carlo datasets, which we denote $\bar{\beta}_{n,\gamma},$ and define in (\ref{eq:mc:betanbar}) in Appendix \ref{supp.section.estimating.beta.gamma}.
  We also derive a Monte-Carlo estimator for the standard deviation of the estimator $\beta_{n,\gamma},$ which we call  $\sigma_n(\beta_{0,n,\gamma}),$ and which is just  the standard deviation of the coefficient estimates over Monte-Carlo datasets. More detail on the computation of this quantity is given in Appendix \ref{app:simsum}, Equation (\ref{eq:mc:sigma0n}). We include a subscript 0 in $\sigma_{0,n}(\beta_{0,n,\gamma})$ to designate its role as a ``true reference'' against which we will check our theoretical results. 
 
 Define the estimated standard deviation that we obtain from (\ref{eq:sigma2n}) based on Theorem \ref{thm:betanConsistNorm} (ii), as $\bar{\sigma}_n(\beta_{n,\gamma}),$ where the overbar indicates that this estimated variance was computed for each Monte-Carlo dataset and then averaged (more detail is given in Appendix \ref{app:simsum}, Equation (\ref{eq:mc:barsigman})).  We will compare the estimated standard error, $\bar{\sigma}_n(\beta_{n,\gamma}),$ to  the ``true'' standard error, $\sigma_{0,n}(\beta_{0,n,\gamma}),$ and we will also assess coverage.
 Note that we are simulating ``post-selection'' inference, so we only assess coverage for selected coefficients, where the selection was made in Figure \ref{fig:sim}, and, in this case, concerns only $\beta_{n,\gamma,2}$, which corresponds to the second covariate, $S_{t,2}$.

\newcommand{\simcoef}{
\begin{proof}
Note that 
\begin{align*}
V_0 &= E\left(\sum_{t=0}^T \gamma^t R(S_t,A_t,S_{t+1})\right)\\
&=E\left(\sum_{t=0}^T \gamma^t E(R(S_t,A_t,S_{t+1})|S_t)\right)\\
&=-E\left(\sum_{t=0}^T \gamma^t E(S_{t,2} A_t|S_t)\right) \ \ \ \ \ \ \ (\text{definition of } R)\\
&=-E\left(\sum_{t=0}^T \gamma^t (  S_{t,2} E(A_t|S_t))\right) \ \ \ \ \ \ \ (\text{definition of $S_{t+1}$ })\\
&=-E\left(\sum_{t=0}^T \gamma^t ( S_{t,2} \expit(\beta^TS_t))\right)\\
\end{align*}
This is maximized by $\pi_{\beta_0}(A_t=1|S_t)=I(S_{t,2}<0),$  which occurs if $\beta_{0,2}=-\infty.$   In simulation, we will see a large negative coefficient $\beta_2$, which will shrink in magnitude as we increase $\gamma$ and $\lambda$.
\end{proof}
}
 
\newcommand{\simmetrics}{
Recall that
$\beta_{n,\gamma}=\arg\max_{\beta} {M}_n.$ For Monte-Carlo dataset $m$ let the estimator obtained by maximizing ${M}_n$ be  $\beta_{n,\gamma}^{(m)}.$
Over $M$ Monte-Carlo datasets, define the average to be 
\begin{equation}
\label{eq:mc:betanbar}
\bar{\beta}_{n,\gamma}=\frac{1}{M}\sum_{m=1}^M \beta_{n,\gamma}^{(m)}.
\end{equation} Also recall that $\sigma^2_n$ is our  asymptotic variance estimator, defined in (\ref{eq:sigma2n}), let this estimator derived from Monte-Carlo dataset $m$ be $(\sigma^2_n)^{(m)},$ and define the average over Monte-Carlo datasets as 
\begin{equation}
\label{eq:mc:barsigman}
\bar{\sigma}_n^2(\beta_{n,\gamma})=\frac{1}{M}\sum_{m=1}^{M}(\sigma^2_n)^{(m)}.
\end{equation}
 Further, define the empirical variance of the estimated $\beta_{n,\gamma}$ as 
\begin{equation}
\label{eq:mc:sigma0n}
\sigma^2_{0,n}(\beta_{n,\gamma})=\frac{1}{M-1}\sum_{m=1}^M( \sqrt{n}\beta_{n,\gamma}^{(m)}-\bar{\beta}_{n,\gamma,\sigma})^2,
\end{equation}
where $\beta_{n,\gamma}^{(m)}$ is the solution for Monte-Carlo iteration $m$ and 
\[\bar{\beta}_{n,\gamma,\sigma}=\frac{1}{M}\sum_{m=1}^M \sqrt{n}\beta_{n,\gamma}^{(m)}.\] Note that we multiply $\beta_{n,\gamma}$ by $\sqrt{n}$ so that we can estimate $\sigma$, the standard deviation, not  $\sigma/\sqrt{n},$ the standard error.  When we construct confidence intervals, we then have to divide our estimate of $\sigma$ by $\sqrt{n}.$ We index (\ref{eq:mc:sigma0n}) by $0$ to designate its role as a ``reference'' quantity, against which we will check our theory.
}


\begin{table}[ht]
\centering
\begin{tabular}{llll}
$\gamma$ & 0.01 & 3.00 & 6.00 \\ 
  \hline
  True: $\bar{\beta}_{0,n,\gamma}$ & -6.33 & -0.13 & 0.03 \\ 
  Estimated: $\bar{\beta}_{n,\gamma}$ & -6.35 & -0.13 & 0.03 \\ 
  Bias & -0.03 & 0.00 & 0.00 \\ 
  True: $\sigma_{0,n}(\beta_{n,\gamma})$ & 13.62 & 1.60 & 1.50 \\ 
    Estimated: $\bar{\sigma}_n(\beta_{n,\gamma})$ & 16.08 & 1.53 & 1.41 \\ 
  Coverage & 0.98 & 0.95 & 0.93 \\ 
  Length CI & 2.82 & 0.27 & 0.25 \\ 
\end{tabular}
\caption[Coverage]{ {\bf Bias, standard deviation, and coverage for the coefficient, $\beta_{0,\gamma,2}$, of the selected covariate, $S_{t,2}$.} For simulation settings $n = $ 500, $T = $ 2, 
            $K = $ 2, and $M_C = $ 500, we show these performance measures while varying  $\gamma$. We also show the estimated (indexed by $n$ alone) and ``true'' (indexed by $0$ and $n$, indicating an empirical estimate of the true value) coefficients, $\bar{\beta}_{n,\gamma}$ and $\bar{\beta}_{0,n,\gamma}$, and standard deviations, $\bar{\sigma}(\beta_{n,\gamma})$ and $\sigma_{0,n}(\beta_{n,\gamma})$, where the overbar indicates an average over Monte-Carlo datasets. 
            } 
\label{tab:cov}
\end{table}

\newcommand{\othercovtables}{

}
 Results are shown in Table \ref{tab:cov}, where we see roughly nominal coverage ($0.95$) for the active covariate in this problem, supporting Theorem \ref{thm:betanConsistNorm} (ii) and the corresponding theoretical variance from (\ref{eq:sigma2n}). 
 We see over-coverage for small $\gamma.$  As discussed in Section \ref{sec:selectionDiagramsSim}, the parameter $\gamma$  impacts the stability of the objective.   If $\gamma$ is too small, ${V}_n$, which is unstable, will dominate the objective in (\ref{eq:mn}), and we will gain value but lose stability. If $\gamma$ is larger, $KL_n,$ which is more stable (as discussed in Section \ref{sec:behcon}), will be more prominent, and we  will lose value, but we will gain stability.  However, note that 
 while the coverage for the smallest $\gamma$ is conservative (the variance is over-estimated), it is not as severe as one would expect when viewing the selection diagram for the corresponding $\gamma$ in the top row and middle column of Figure \ref{fig:sim}, in which it appears that the coefficient estimates are quite unstable.  This is because, as discussed in Sections \ref{sec:estTuning} and \ref{sec:estTuning}, there are fewer degrees of freedom and a larger sample after selection, because some coefficients are fixed to their behavioral counterparts, and an entire half of the dataset is devoted to post-selection inference  (rather than one quarter, which is the fraction devoted to selection).  

\section{Real data analysis}
\label{realdata}
We illustrate the proposed methodology and theory on a real dataset generated by patients and their healthcare providers in the intensive care unit, as in {\citerelativesparsity}.  We show that we can derive a relatively sparse policy and perform inference for the coefficients.
\subsection{Decision problem}
\label{real.data.problem.statement.inference}
We consider the same real data decision problem as in {\citerelativesparsity}, but in the multi-stage setting.
There is variability in vasopressor administration in the setting of hypotension \citep{der2020narrative,russell2021vasopressor,lee2012interrogating}.  Although vasopressors can  stabilize blood pressure,  they have a variety of adverse effects, making vasopressor administration an important decision problem. We extend code from {\citerelativesparsity}, which was derived from \cite{futoma2020popcorn,gottesman2020interpretable}, to process the freely available, observational electronic health record dataset, MIMIC III  \citep{johnson2016mimic2, johnson2016mimic1,goldberger2000physiobank}.  We include patients from the medical intensive care unit (MICU), as in {\citerelativesparsity}.
  We illustrate how we can provide inference for a relatively sparse policy in this setting.

As in {\citerelativesparsity}, we begin the trajectory at the beginning of hypotension, which is defined as a mean arterial pressure (MAP) measurement that is less than $60,$ a cutoff used in  \cite{futoma2020popcorn}. We consider the first 45 minutes after hypotension onset, where the first 15 minutes is $S_0$, the second 15 minutes $S_1,$ and the third $S_2.$  Eleven patients left the MICU before 45 minutes, so we excluded those patients.  Actions will be $A_0,$ taken after observing $S_0$ (we take the last measured covariates in that time window), and $A_1,$ taken after observing $S_1.$  We restrict ourselves to two stages, because it is important to stabilize MAP early, and, also, because patients often leave the ICU due to death or discharge (if we instead used e.g., 10 stages, many patients would leave, leading to more missingness). We consider any vasopressor administration to be an action, and different vasopressors are aggregated and normalized as in \cite{futoma2020popcorn,komorowski2018artificial} to their norepinephrine equivalents. As in {\citerelativesparsity}, because the norepinephrine duration of action is short, we assume that a vasopressor administered in each 15 minute interval does not affect the blood pressure at the end of the following 15 minute interval.

We consider the same set of covariates as those in {\citerelativesparsity} and \cite{futoma2020popcorn}, which includes MAP, heart rate (HR), urine output (Urine), lactate, Glasgow coma scale (GCS), serum creatinine, fraction of inspired oxygen (FiO2), total bilirubin, and platelet count.  As in {\citerelativesparsity}, based on \cite{futoma2020popcorn},  extreme, non-physiologically reasonable values of covariates were floored or capped, and missing data was imputed using the median or last observation carry forward.  In terms of the reward, we have that $S_t$ contains MAP as its first component; hence, we define 
$R(S_t,A_t,S_{t+1})=S_{t+1,1}.$
 In other words, we define the reward as 
$R(S_t,A_t,S_{t+1})=R((MAP_t,\dots)^T,A_t,(MAP_{t+1},\dots)^T)=MAP_{t+1},$
where the notation $(MAP_t,\dots)$ indicates that the state depends on other covariates besides $MAP$. 
This reward reflects the short-term goal of increasing blood pressure in the setting of hypotension, and vasopressors should increase this reward.   This reward is imperfect but sensible, and, as discussed in {\citerelativesparsity}, by constraining to the behavioral policy (the standard of care), we are able to derive a suggested policy that improves outcomes with respect to this sensible reward, but does not exclusively maximize this imperfect reward.

After excluding 39 patients who left the intensive care unit within the first 45 minutes of receiving a $MAP < 60$ (which promps entry into the cohort), we had $n=11,715$ patients.
As in {\citerelativesparsity}, we start by checking that the behavioral policy model in (\ref{eq:thetapolicyInf}) is specified correctly.  To this end, we estimate a calibration curve \citep{van2016calibration,niculescu2005predicting}, which is shown in
Figure \ref{calcurveInference} of Appendix \ref{app:calcurve} and suggests that the model specification in (\ref{eq:thetapolicyInf}) is reasonable.

\subsection{Selection diagrams}
In Figure \ref{fig:real.data}, we show selection diagrams, as we did in the simulations.
In Figure \ref{fig:real.data}, we see that, for fixed $\delta$ and $\gamma,$  the suggested policy coefficients, $\beta_{n,\gamma,\lambda},$ approach their behavioral counterparts as $\lambda$ increases.  We see that, like in {\citerelativesparsity}, MAP is isolated, and a relatively sparse policy is derived that still has value that is one standard error above the behavioral value. The selected $\lambda$ of this relatively sparse policy, which we denote $\lambda_n,$ is shown as the dotted, vertical line in the central triplet of Figure \ref{fig:real.data}, and was determined using (\ref{lambda.crit.diff}).  

The shaded regions in the coefficient panels show the variability of $\beta_{n,\gamma,\lambda}$; we see that there is considerable variability, especially with small $\gamma$. As noted in section \ref{sec:selectionDiagramsSim}, this variability is likely exaggerated, since the coefficients are estimated on only one half of the data, which is further split in half.  In the post-selection step, we will have a larger sample and fewer covariates, both of which stabilize the problem, as discussed in Sections \ref{sec:estTuning} and \ref{sec:simCoverage}.
 We also see that the standard error of $V_n$ (shaded) is small for large $\lambda,$ where the suggested and the behavioral policy are the same. It is the standard error in this region that is  used to select $\lambda$ in (\ref{lambda.crit.diff}).  
 Given a selection, $\lambda_n,$ we now perform post-selection inference in a held-out split of the data, results for which are shown in Table \ref{tab:real.data.post.select.coef}. 
\begin{figure}[htp!]
\centering
\includegraphics[width=0.97\textwidth]{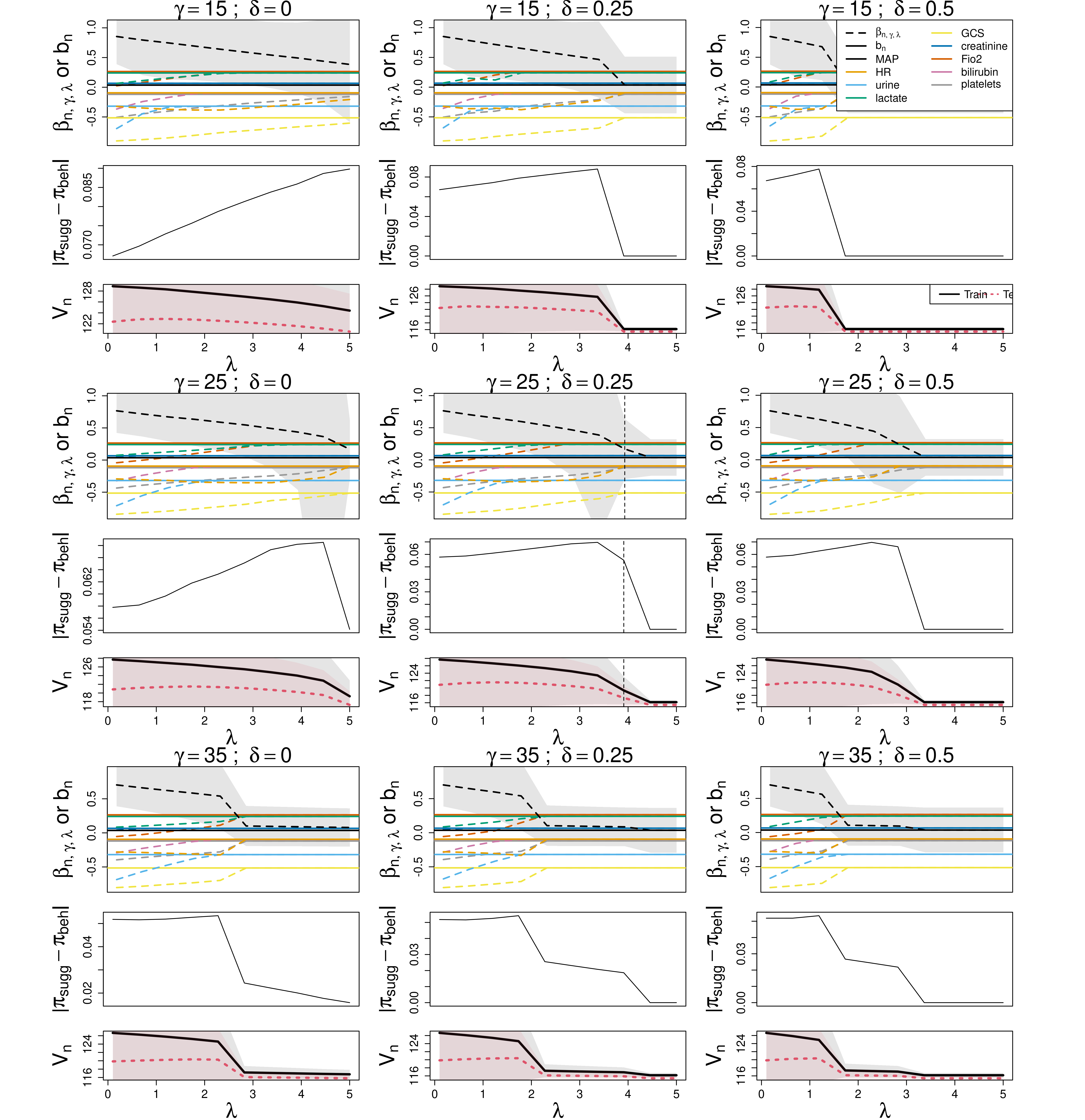}
\caption[Selection diagrams for the real data]{{\bf Selection diagrams for the real data ($n_{train}=1,176,n_{test}=1,176$).}   Over increasing $\gamma$ (going down) and $\delta$ (going right), we show the average coefficients in the suggested ($\beta_{n,\gamma,\lambda})$ and behavioral ($b_n$) policies, the average difference in probability of treatment between the two policies, and the average value ($V_n$) for the suggested policy, all of which were computed in the first split of the data . The dotted vertical line indicates, $\lambda_n$, a choice of $\lambda$  based on (\ref{lambda.crit.diff}). Note that the average suggested policy probability of treatment is $\pi_{sugg}=(1/nT)\sum_i\sum_t{\pi}_{\beta_{n,\gamma,\lambda}}(A_{i,t}=1|s_{i,t})$ and vice versa for $\pi_{beh}$.
The shaded regions in the coefficient ($\beta_{n,\gamma,\lambda}$) panels correspond to (\ref{eq:sigma2n}) (to declutter the plot, and because MAP was the only selected covariate, we show this only for MAP).
The shaded regions in the value panels show one standard error based on (\ref{valuevar}), which was used to select $\lambda_n$ based on (\ref{lambda.crit.diff}).
 } 
\label{fig:real.data}
\end{figure}
\subsection{Post-selection inference}
  As shown in Figure \ref{fig:real.data}, we selected $\gamma,$ $\lambda,$ and $\delta$ in the first split of the real data.  Given this selection, we now  perform post-selection inference on the second split of the data. We report results in Table \ref{tab:real.data.post.select.coef}. 
  In particular, as in {\citerelativesparsity}, we see that all coefficients except one are fixed to their behavioral counterparts,  making it easy to discuss and justify the suggested policy to the patients and providers who may choose to adopt it. Unlike in {\citerelativesparsity}, we now have a 95\% confidence interval for the coefficient for MAP, which was derived from Theorem \ref{thm:betanConsistNorm} (ii) and the corresponding estimator (\ref{eq:sigma2n}). Note that the confidence interval shown in Table \ref{tab:real.data.post.select.coef} is much narrower than the shaded region shown in Figure \ref{fig:real.data}; the former was rigorously derived based on Theorem \ref{thm:betanConsistNorm} (ii), whereas the latter was just a heuristic way to visualize variability. Also note that, because we are in the post-selection setting, the sample size used to derive the confidence interval in Table \ref{tab:real.data.post.select.coef} is twice as large as that used to estimate (\ref{eq:sigma2n}) in Figure \ref{fig:real.data}, and, because all but one parameter in the post-selection step are set to their behavioral counterparts, there is one free parameter in Table \ref{tab:real.data.post.select.coef}, while there are nine free parameters in Figure \ref{fig:real.data}.      

\begin{table}[htp!]
\centering
\begin{tabular}{lll}
 & Suggested ($\beta_{n,\gamma})$ & Behavioral ($b_n$) \\ 
  \hline
MAP & 0.132 (-0.136, 0.400) & -0.070 (-0.159, \ 0.019) \\ 
  HR & set to $b_n$ & \ 0.052 (-0.069, \ 0.173) \\ 
  urine & set to $b_n$ & -0.300 (-0.478, -0.122) \\ 
  lactate & set to $b_n$ & \ 0.426 (\ 0.307, \ 0.546) \\ 
  GCS & set to $b_n$ & -0.585 (-0.679, -0.490) \\ 
  creatinine & set to $b_n$ & \ 0.186 (\ 0.080, \  0.293) \\ 
  Fio2 & set to $b_n$ & \ 0.118 (-0.004, \ 0.241) \\ 
  bilirubin & set to $b_n$ & \ 0.033 (-0.065, \ 0.131) \\ 
  platelets & set to $b_n$ & -0.055 (-0.176, \ 0.065) \\ 
\end{tabular}
\caption[Real data post-selection inference]{Estimated coefficients (95\% confidence interval) for held out dataset post-selection inference; $n=$2,352  and $\gamma=$25.} 
\label{tab:real.data.post.select.coef}
\end{table}

 In Table \ref{tab:real.data.post.select.coef}, we see that the confidence interval for the suggested policy coefficient for MAP is shifted more toward a positive range, and is much wider, with considerable negative and positive margins, than the confidence interval for the behavioral policy coefficient for MAP, which is narrow and almost entirely within a negative range. 

\section{Discussion}
In {\citeinference}, we developed methodology and theory to enable inference when using  the relative sparsity penalty developed in {\citerelativesparsity}. This inference framework allows one to construct confidence intervals for the relative sparsity coefficients, improving the rigor of the method, and ultimately facilitating safe translation into the clinic. To our knowledge, we are the first to fully characterize the difficulties, such as infinite estimands, with performing inference in the policy search setting under a generalized linear model policy.  We created finite, behavior-constrained estimands by repurposing a weighted version of Trust Region Policy Optimization (TRPO) \citep{schulman2015trust} as the ``base'' objective within the relative sparsity framework.  We proved novel theorems for weighted TRPO and operationalized these results toward inference in a sample splitting framework, which is free of issues associated with post-selection inference. Unlike standard sample splitting techniques, our framework required that we set non-selected parameters to some value other than zero, which considerably complicated the partial derivatives of the objective functions, since the nuisance began to appear in non-standard locations such as the numerator of the inverse probability weighting ratio. We then developed an adaptive relative sparsity penalty, which improved the discernment of the penalty.  
We developed all of our methodology and theoretical results for the observational data setting in the multi-stage, generalized linear model framework. Finally, we illustrated our framework for inference for the relative sparsity penalty on an intensive care unit, electronic health record, for which estimation was non-trivial, and inference was even more difficult. In simulations, and in the real data, we rigorously characterized sensitivity of the proposed inference framework to the tuning parameters. We finally presented selection diagrams, which are tools that help to select the tuning parameters using training data. 


There are several opportunities for future work.  For example, although we have developed our method for large observational datasets, the current sample splitting scheme could still be improved to make better use of the data (e.g., a bootstrap could be performed, rather than a single sample split).  Also, as discussed in {\citerelativesparsity},  the real data analysis could be refined by including a reward that takes into account mortality and morbidity.  Further, as discussed in {\citerelativesparsity}, relaxing the linearity assumption of the behavioral policy model (although we also checked the reasonableness of this specification here) would be a good direction for future work. The assumption that vasopressors administered in one time step have a negligible effect on the MAP observed at the end of the next time step is perhaps overly simplified, even though intravenous vasopressors have a short duration of action.  In general,  more proximal administrations might have more of an impact than more distal administrations. Discretizing time is also a considerable simplification.
 
Other assumptions that we make in {\citeinference}, such as the global uniqueness of the maximizer of the objective function, might be overly restrictive; it would be useful to consider a local uniqueness instead. Also, it would be interesting to explore other ``base'' objective functions, such the weighted likelihood objective of \cite{ueno2012weighted}, which may have favorable properties in this regard. 
 The stationarity of the behavioral policy could  be relaxed by indexing the policy parameters by time step. The Markov property of the behavioral policy might be more challenging to relax.  Unlike many other methods, we do not make Markov and stationarity assumptions for the transition probabilities. 
 One could  increase the likelihood that the  no unmeasured confounders assumption holds by adjusting for more covariates, which are often available in large electronic health record datasets.  
 There are always challenges associated with observational data, including unverifiable assumptions.  We emphasize that any suggested treatment from a policy derived with the proposed method should be reviewed by the medical care team. As discussed in \cite{relsparSIM}, the transparency of the relative sparsity framework facilitates this type of review, and the methodology and theory for inference provided in our work increases the rigor of the relative sparsity framework. 

 \section{Acknowledgements}
The authors thank Jeremiah Jones, Ben Baer,  Michael McDermott, Brent Johnson, and Kah Poh Loh for helpful discussions.
This research, which is the sole responsibility of the authors and not the National Institutes of Health (NIH), was supported by the National Institute of Environmental Health Sciences (NIEHS) and the National Institute of General Medical Sciences (NIGMS) under T32ES007271 and T32GM007356.

\section{Conflict of interest}
Authors state no conflict of interest.

\bibliographystyle{plainnat}
\bibliography{referencesnew}

@article{loh2013regularized,
  title={Regularized M-estimators with nonconvexity: Statistical and algorithmic theory for local optima},
  author={Loh, Po-Ling and Wainwright, Martin J},
  journal={Advances in Neural Information Processing Systems},
  volume={26},
  year={2013}
}

@article{eltzner2020testing,
  title={Testing for uniqueness of estimators},
  author={Eltzner, Benjamin},
  journal={arXiv preprint arXiv:2011.14762},
  year={2020}
}

@article{gupta2017physician,
  title={The physician’s experience of changing clinical practice: a struggle to unlearn},
  author={Gupta, Divya M and Boland, Richard J and Aron, David C},
  journal={Implementation Science},
  volume={12},
  pages={1--11},
  year={2017},
  publisher={Springer}
}

@article{relsparSIM,
      title={Relative Sparsity for Medical Decision Problems}, 
      author={Samuel J. Weisenthal and Sally W. Thurston and Ashkan Ertefaie},
      year={2023},
      journal={Statistics in Medicine}
}

@book{casella2002statistical,
  title={Statistical inference},
  author={Casella, George and Berger, Roger L},
  year={2002},
  publisher={Cengage Learning}
}

@article{munoz2012population,
  title={Population intervention causal effects based on stochastic interventions},
  author={Mu{\~n}oz, Iv{\'a}n D{\'\i}az and {van der} Laan, Mark},
  journal={Biometrics},
  volume={68},
  number={2},
  pages={541--549},
  year={2012},
  publisher={Wiley Online Library}
}

@article{cox1975note,
  title={A note on data-splitting for the evaluation of significance levels},
  author={Cox, David R},
  journal={Biometrika},
  volume={62},
  number={2},
  pages={441--444},
  year={1975},
  publisher={Oxford University Press}
}

@book{puterman2014markov,
  title={Markov decision processes: discrete stochastic dynamic programming},
  author={Puterman, Martin L},
  year={2014},
  publisher={John Wiley \& Sons}
}

@article{parbhoo2022generalizing,
  title={Generalizing Off-Policy Evaluation From a Causal Perspective For Sequential Decision-Making},
  author={Parbhoo, Sonali and Joshi, Shalmali and Doshi-Velez, Finale},
  journal={arXiv preprint arXiv:2201.08262},
  year={2022}
}

@article{futoma2015comparison,
  title={A comparison of models for predicting early hospital readmissions},
  author={Futoma, Joseph and Morris, Jonathan and Lucas, Joseph},
  journal={Journal of biomedical informatics},
  volume={56},
  pages={229--238},
  year={2015},
  publisher={Elsevier}
}

@article{bellman1957markovian,
  title={A Markovian decision process},
  author={Bellman, Richard},
  journal={Journal of mathematics and mechanics},
  pages={679--684},
  year={1957},
  publisher={JSTOR}
}

@article{lipton2015learning,
  title={Learning to diagnose with LSTM recurrent neural networks},
  author={Lipton, Zachary C and Kale, David C and Elkan, Charles and Wetzel, Randall},
  journal={arXiv preprint arXiv:1511.03677},
  year={2015}
}

@book{wooldridge2010econometric,
  title={Econometric analysis of cross section and panel data},
  author={Wooldridge, Jeffrey M},
  year={2010},
  publisher={MIT press}
}

@article{kloek1978bayesian,
  title={Bayesian estimates of equation system parameters: an application of integration by Monte Carlo},
  author={Kloek, Teun and Van Dijk, Herman K},
  journal={Econometrica: Journal of the Econometric Society},
  pages={1--19},
  year={1978},
  publisher={JSTOR}
}

@article{kuchibhotla2022post,
  title={Post-selection inference},
  author={Kuchibhotla, Arun K and Kolassa, John E and Kuffner, Todd A},
  journal={Annual Review of Statistics and Its Application},
  volume={9},
  pages={505--527},
  year={2022},
  publisher={Annual Reviews}
}

@article{van2016calibration,
  title={A calibration hierarchy for risk models was defined: from utopia to empirical data},
  author={Van Calster, Ben and Nieboer, Daan and Vergouwe, Yvonne and De Cock, Bavo and Pencina, Michael J and Steyerberg, Ewout W},
  journal={Journal of clinical epidemiology},
  volume={74},
  pages={167--176},
  year={2016},
  publisher={Elsevier}
}

@inproceedings{niculescu2005predicting,
  title={Predicting good probabilities with supervised learning},
  author={Niculescu-Mizil, Alexandru and Caruana, Rich},
  booktitle={Proceedings of the 22nd international conference on Machine learning},
  pages={625--632},
  year={2005}
}

@article{miller2019explanation,
  title={Explanation in artificial intelligence: Insights from the social sciences},
  author={Miller, Tim},
  journal={Artificial intelligence},
  volume={267},
  pages={1--38},
  year={2019},
  publisher={Elsevier}
}

@article{johnson2016mimic1,
  title={MIMIC-III clinical database},
  author={Johnson, Alistair and Pollard, Tom and Mark III, R},
  journal={Physio Net},
  volume={10},
  pages={C2XW26},
  year={2016}
}

@article{potscher2009distribution,
  title={On the distribution of the adaptive LASSO estimator},
  author={P{\"o}tscher, Benedikt M and Schneider, Ulrike},
  journal={Journal of Statistical Planning and Inference},
  volume={139},
  number={8},
  pages={2775--2790},
  year={2009},
  publisher={Elsevier}
}

@article{lee2012interrogating,
  title={Interrogating a clinical database to study treatment of hypotension in the critically ill},
  author={Lee, Joon and Kothari, Rishi and Ladapo, Joseph A and Scott, Daniel J and Celi, Leo A},
  journal={BMJ open},
  volume={2},
  number={3},
  pages={e000916},
  year={2012},
  publisher={British Medical Journal Publishing Group}
}

@article{johnson2016mimic2,
  title={MIMIC-III, a freely accessible critical care database},
  author={Johnson, Alistair EW and Pollard, Tom J and Shen, Lu and Lehman, Li-wei H and Feng, Mengling and Ghassemi, Mohammad and Moody, Benjamin and Szolovits, Peter and Anthony Celi, Leo and Mark, Roger G},
  journal={Scientific data},
  volume={3},
  number={1},
  pages={1--9},
  year={2016},
  publisher={Nature Publishing Group}
}

@article{goldberger2000physiobank,
  title={PhysioBank, PhysioToolkit, and PhysioNet: components of a new research resource for complex physiologic signals},
  author={Goldberger, Ary L and Amaral, Luis AN and Glass, Leon and Hausdorff, Jeffrey M and Ivanov, Plamen Ch and Mark, Roger G and Mietus, Joseph E and Moody, George B and Peng, Chung-Kang and Stanley, H Eugene},
  journal={circulation},
  volume={101},
  number={23},
  pages={e215--e220},
  year={2000},
  publisher={Am Heart Assoc}
}

@article{komorowski2018artificial,
  title={The artificial intelligence clinician learns optimal treatment strategies for sepsis in intensive care},
  author={Komorowski, Matthieu and Celi, Leo A and Badawi, Omar and Gordon, Anthony C and Faisal, A Aldo},
  journal={Nature medicine},
  volume={24},
  number={11},
  pages={1716--1720},
  year={2018},
  publisher={Nature Publishing Group}
}

@article{robins1994estimation,
  title={Estimation of regression coefficients when some regressors are not always observed},
  author={Robins, James M and Rotnitzky, Andrea and Zhao, Lue Ping},
  journal={Journal of the American statistical Association},
  volume={89},
  number={427},
  pages={846--866},
  year={1994},
  publisher={Taylor \& Francis}
}

@article{horvitz1952generalization,
  title={A generalization of sampling without replacement from a finite universe},
  author={Horvitz, Daniel G and Thompson, Donovan J},
  journal={Journal of the American statistical Association},
  volume={47},
  number={260},
  pages={663--685},
  year={1952},
  publisher={Taylor \& Francis}
}

@inproceedings{russell2021vasopressor,
  title={Vasopressor therapy in the intensive care unit},
  author={Russell, James A and Gordon, Anthony C and Williams, Mark D and Boyd, John H and Walley, Keith R and Kissoon, Niranjan},
  booktitle={Seminars in Respiratory and Critical Care Medicine},
  volume={42},
  number={01},
  pages={059--077},
  year={2021},
  organization={Thieme Medical Publishers, Inc.}
}

@article{der2020narrative,
  title={Narrative review of controversies involving vasopressin use in septic shock and practical considerations},
  author={Der-Nigoghossian, Caroline and Hammond, Drayton A and Ammar, Mahmoud A},
  journal={Annals of Pharmacotherapy},
  volume={54},
  number={7},
  pages={706--714},
  year={2020},
  publisher={SAGE Publications Sage CA: Los Angeles, CA}
}

@book{van2018targeted,
  title={Targeted learning in data science: causal inference for complex longitudinal studies},
  author={{van der} Laan, Mark J and Rose, Sherri},
  year={2018},
  publisher={Springer}
}

@article{du2019techniques,
  title={Techniques for interpretable machine learning},
  author={Du, Mengnan and Liu, Ninghao and Hu, Xia},
  journal={Communications of the ACM},
  volume={63},
  number={1},
  pages={68--77},
  year={2019},
  publisher={ACM New York, NY, USA}
}

@inproceedings{gottesman2020interpretable,
  title={Interpretable off-policy evaluation in reinforcement learning by highlighting influential transitions},
  author={Gottesman, Omer and Futoma, Joseph and Liu, Yao and Parbhoo, Sonali and Celi, Leo and Brunskill, Emma and Doshi-Velez, Finale},
  booktitle={International Conference on Machine Learning},
  pages={3658--3667},
  year={2020},
  organization={PMLR}
}

@article{luckett2019estimating,
  title={Estimating dynamic treatment regimes in mobile health using v-learning},
  author={Luckett, Daniel J and Laber, Eric B and Kahkoska, Anna R and Maahs, David M and Mayer-Davis, Elizabeth and Kosorok, Michael R},
  journal={Journal of the American Statistical Association},
  year={2019},
  publisher={Taylor \& Francis}
}

@article{knight2000asymptotics,
  title={Asymptotics for lasso-type estimators},
  author={Knight, Keith and Fu, Wenjiang},
  journal={Annals of statistics},
  pages={1356--1378},
  year={2000},
  publisher={JSTOR}
}

@article{geyer1994asymptotics,
  title={On the asymptotics of constrained M-estimation},
  author={Geyer, Charles J},
  journal={The Annals of statistics},
  pages={1993--2010},
  year={1994},
  publisher={JSTOR}
}

@article{zou2006adaptive,
  title={The adaptive lasso and its oracle properties},
  author={Zou, Hui},
  journal={Journal of the American statistical association},
  volume={101},
  number={476},
  pages={1418--1429},
  year={2006}
}

@article{rudin2019stop,
  title={Stop explaining black box machine learning models for high stakes decisions and use interpretable models instead},
  author={Rudin, Cynthia},
  journal={Nature Machine Intelligence},
  volume={1},
  number={5},
  pages={206--215},
  year={2019},
  publisher={Nature Publishing Group}
}

@inproceedings{hoffman2011regularized,
  title={Regularized least squares temporal difference learning with nested \textit{l} 2 and \textit{l} 1 penalization},
  author={Hoffman, Matthew W and Lazaric, Alessandro and Ghavamzadeh, Mohammad and Munos, R{\'e}mi},
  booktitle={European Workshop on Reinforcement Learning},
  pages={102--114},
  year={2011},
  organization={Springer}
}

@inproceedings{le2019batch,
  title={Batch policy learning under constraints},
  author={Le, Hoang and Voloshin, Cameron and Yue, Yisong},
  booktitle={International Conference on Machine Learning},
  pages={3703--3712},
  year={2019},
  organization={PMLR}
}

@inproceedings{geist2019theory,
  title={A theory of regularized markov decision processes},
  author={Geist, Matthieu and Scherrer, Bruno and Pietquin, Olivier},
  booktitle={International Conference on Machine Learning},
  pages={2160--2169},
  year={2019},
  organization={PMLR}
}

@inproceedings{fujimoto2019off,
  title={Off-policy deep reinforcement learning without exploration},
  author={Fujimoto, Scott and Meger, David and Precup, Doina},
  booktitle={International Conference on Machine Learning},
  pages={2052--2062},
  year={2019},
  organization={PMLR}
}

@inproceedings{ziebart2008maximum,
  title={Maximum entropy inverse reinforcement learning.},
  author={Ziebart, Brian D and Maas, Andrew L and Bagnell, J Andrew and Dey, Anind K},
  booktitle={Aaai},
  volume={8},
  pages={1433--1438},
  year={2008},
  organization={Chicago, IL, USA}
}

@article{kennedy2019nonparametric,
  title={Nonparametric causal effects based on incremental propensity score interventions},
  author={Kennedy, Edward H},
  journal={Journal of the American Statistical Association},
  volume={114},
  number={526},
  pages={645--656},
  year={2019},
  publisher={Taylor \& Francis}
}

@article{lipton2018mythos,
  title={The Mythos of Model Interpretability: In machine learning, the concept of interpretability is both important and slippery.},
  author={Lipton, Zachary C},
  journal={Queue},
  volume={16},
  number={3},
  pages={31--57},
  year={2018},
  publisher={ACM New York, NY, USA}
}

@article{kallus2020efficient,
  title={Efficient Evaluation of Natural Stochastic Policies in Offline Reinforcement Learning},
  author={Kallus, Nathan and Uehara, Masatoshi},
  journal={arXiv preprint arXiv:2006.03886},
  year={2020}
}

@article{wu2019behavior,
  title={Behavior regularized offline reinforcement learning},
  author={Wu, Yifan and Tucker, George and Nachum, Ofir},
  journal={arXiv preprint arXiv:1911.11361},
  year={2019}
}

@inproceedings{haarnoja2017reinforcement,
  title={Reinforcement learning with deep energy-based policies},
  author={Haarnoja, Tuomas and Tang, Haoran and Abbeel, Pieter and Levine, Sergey},
  booktitle={International Conference on Machine Learning},
  pages={1352--1361},
  year={2017},
  organization={PMLR}
}

@article{dayan1997using,
  title={Using expectation-maximization for reinforcement learning},
  author={Dayan, Peter and Hinton, Geoffrey E},
  journal={Neural Computation},
  volume={9},
  number={2},
  pages={271--278},
  year={1997},
  publisher={MIT Press}
}

@inproceedings{levine2014learning,
  title={Learning Neural Network Policies with Guided Policy Search under Unknown Dynamics.},
  author={Levine, Sergey and Abbeel, Pieter},
  booktitle={NIPS},
  volume={27},
  pages={1071--1079},
  year={2014},
  organization={Citeseer}
}

@inproceedings{le2016smooth,
  title={Smooth imitation learning for online sequence prediction},
  author={Le, Hoang and Kang, Andrew and Yue, Yisong and Carr, Peter},
  booktitle={International Conference on Machine Learning},
  pages={680--688},
  year={2016},
  organization={PMLR}
}

@article{schulman2017proximal,
  title={Proximal policy optimization algorithms},
  author={Schulman, John and Wolski, Filip and Dhariwal, Prafulla and Radford, Alec and Klimov, Oleg},
  journal={arXiv preprint arXiv:1707.06347},
  year={2017}
}

@inproceedings{peters2010relative,
  title={Relative entropy policy search},
  author={Peters, Jan and Mulling, Katharina and Altun, Yasemin},
  booktitle={Proceedings of the AAAI Conference on Artificial Intelligence},
  volume={24},
  NUMBERNOTALLOWEDBYBIB={1},
  year={2010}
}

@inproceedings{schulman2015trust,
  title={Trust region policy optimization},
  author={Schulman, John and Levine, Sergey and Abbeel, Pieter and Jordan, Michael and Moritz, Philipp},
  booktitle={International conference on machine learning},
  pages={1889--1897},
  year={2015},
  organization={PMLR}
}

@inproceedings{achiam2017constrained,
  title={Constrained policy optimization},
  author={Achiam, Joshua and Held, David and Tamar, Aviv and Abbeel, Pieter},
  booktitle={International Conference on Machine Learning},
  pages={22--31},
  year={2017},
  organization={PMLR}
}

@Inbook{Chakraborty2013,
author="Chakraborty, Bibhas
and Moodie, Erica E. M.",
title="Statistical Reinforcement Learning",
bookTitle="Statistical Methods for Dynamic Treatment Regimes: Reinforcement Learning, Causal Inference, and Personalized Medicine",
year="2013",
publisher="Springer New York",
address="New York, NY",
pages="31--52",
abstract="Constructing optimal dynamic treatment regimes for chronic disorders based on patient data is a problem of multi-stage decision making about the best sequence of treatments. This problem bears strong resemblance to the problem of reinforcement learning in computer science, a branch of machine learning that deals with the problem of multi-stage, sequential decision making by a learning agent. In this chapter, we review the necessary concepts of reinforcement learning, connect them to the relevant statistical literature, and develop a mathematical framework that will enable us to treat the problem of estimating the optimal dynamic treatment regimes rigorously.",
isbn="978-1-4614-7428-9",
doi="10.1007/978-1-4614-7428-9_3",
url="https://doi.org/10.1007/978-1-4614-7428-9_3"
}

@article{lei2017actor,
  title={An actor-critic contextual bandit algorithm for personalized mobile health interventions},
  author={Lei, Huitian and Lu, Yangyi and Tewari, Ambuj and Murphy, Susan A},
  journal={arXiv preprint arXiv:1706.09090},
  year={2017}
}

@book{pearl2009causality,
  title={Causality},
  author={Pearl, Judea},
  year={2009},
  publisher={Cambridge university press}
}

@article{leamer1974false,
  title={False models and post-data model construction},
  author={Leamer, Edward E},
  journal={Journal of the American Statistical Association},
  volume={69},
  number={345},
  pages={122--131},
  year={1974},
  publisher={Taylor \& Francis}
}

@book{van2000asymptotic,
  title={Asymptotic statistics},
  author={{van der} Vaart, Aad W},
  volume={3},
  year={2000},
  publisher={Cambridge university press}
}

@article{johnson2016mimic,
  title={MIMIC-III, a freely accessible critical care database},
  author={Johnson, Alistair EW and Pollard, Tom J and Shen, Lu and Li-Wei, H Lehman and Feng, Mengling and Ghassemi, Mohammad and Moody, Benjamin and Szolovits, Peter and Celi, Leo Anthony and Mark, Roger G},
  journal={Scientific data},
  volume={3},
  number={1},
  pages={1--9},
  year={2016},
  publisher={Nature Publishing Group}
}

@article{futoma2020popcorn,
  title={POPCORN: Partially Observed Prediction COnstrained ReiNforcement Learning},
  author={Futoma, Joseph and Hughes, Michael C and Doshi-Velez, Finale},
  journal={arXiv preprint arXiv:2001.04032},
  year={2020}
}

@article{weisenthal2018predicting,
  title={Predicting acute kidney injury at hospital re-entry using high-dimensional electronic health record data},
  author={Weisenthal, Samuel J and Quill, Caroline and Farooq, Samir and Kautz, Henry and Zand, Martin S},
  journal={PloS one},
  volume={13},
  number={11},
  pages={e0204920},
  year={2018},
  publisher={Public Library of Science San Francisco, CA USA}
}

@phdthesis{thomas2015safe,
  title={Safe reinforcement learning},
  author={Thomas, Philip S},
  year={2015}
}

@book{sutton2018reinforcement,
  title={Reinforcement learning: An introduction},
  author={Sutton, Richard S and Barto, Andrew G},
  year={2018}
}

@inproceedings{farahmand2017value,
  title={Value-aware loss function for model-based reinforcement learning},
  author={Farahmand, Amir-massoud and Barreto, Andre and Nikovski, Daniel},
  booktitle={Artificial Intelligence and Statistics},
  pages={1486--1494},
  year={2017},
  organization={PMLR}
}

@article{ertefaie2018constructing,
  title={Constructing dynamic treatment regimes over indefinite time horizons},
  author={Ertefaie, Ashkan and Strawderman, Robert L},
  journal={Biometrika},
  volume={105},
  number={4},
  pages={963--977},
  year={2018},
  publisher={Oxford University Press}
}

@article{precup2000eligibility,
  title={Eligibility traces for off-policy policy evaluation},
  author={Precup, Doina},
  journal={Computer Science Department Faculty Publication Series},
  pages={80},
  year={2000}
}

@book{mcbook,
   author = {Art B. Owen},
   year = 2013,
   title = {Monte Carlo theory, methods and examples}
}

@article{uehara2022review,
  title={A Review of Off-Policy Evaluation in Reinforcement Learning},
  author={Uehara, Masatoshi and Shi, Chengchun and Kallus, Nathan},
  journal={arXiv preprint arXiv:2212.06355},
  year={2022}
}

@article{ueno2012weighted,
  title={Weighted likelihood policy search with model selection},
  author={Ueno, Tsuyoshi and Hayashi, Kohei and Washio, Takashi and Kawahara, Yoshinobu},
  journal={Advances in Neural Information Processing Systems},
  volume={25},
  pages={2357--2365},
  year={2012}
}

@article{kullback1951information,
  title={On information and sufficiency},
  author={Kullback, Solomon and Leibler, Richard A},
  journal={The annals of mathematical statistics},
  volume={22},
  number={1},
  pages={79--86},
  year={1951},
  publisher={JSTOR}
}


\appendix
 \renewcommand\thefigure{\thesection.\arabic{figure}}  
 \setcounter{figure}{0}   


\section{Appendix}

\subsection{Research code}
Research code can be found at \url{https://github.com/samuelweisenthal/inference_for_relative_sparsity}. 

\subsection{Data availability statement}
The MIMIC \citep{johnson2016mimic} dataset that supports the findings of this study is openly available at PhysioNet (doi: \url{10.13026/C2XW26}) and can be found online at \url{https://physionet.org/content/mimiciii/1.4/}.
\subsection{Determinism of the reward-maximizing policy}
\label{supp:proverDeterm}
\begin{lemma}
\label{lemma:det}
Assuming (\ref{eq:thetapolicyInf}), which defines the policy as an $\expit,$ and defining the true value maximizing policy $\pi_{\beta_0} = \arg\max_{\pi} V_0,$ let $R_m = R(S_m,A_m,S_{m+1})$  we have that \begin{align*}
\pi_{\beta_0}(A_t=1|s_t)
&= I\left(E\left(\sum_{m=t}^T R_m|S_t,A_t=1\right) -E\left(\sum_{m=t}^T R_m|S_t,A_t=0\right)>0\right).
\end{align*}
\end{lemma}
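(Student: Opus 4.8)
The plan is to prove the claim by backward induction on the stage index $t$, i.e. by a standard dynamic-programming argument, exploiting the fact that, at any fixed state value $s$, the value $V_0$ depends on the treatment probability $\pi_\beta(A_t=1\mid s)$ only affinely. First I would set up the continuation value. For a fixed policy $\pi$, write $R_m=R(S_m,A_m,S_{m+1})$ and let $G_t=\sum_{m=t}^{T}R_m$ denote the reward-to-go from stage $t$. Since the rewards at stages $m\ge t$ are generated by the actions $A_t,\dots,A_T$ drawn from $\pi$ together with the transitions, the tower property gives $E_\pi(G_t\mid S_t=s)=\pi(A_t=1\mid s)\,E(G_t\mid S_t=s,A_t=1)+\bigl(1-\pi(A_t=1\mid s)\bigr)\,E(G_t\mid S_t=s,A_t=0)$, where the inner expectations are taken under $\pi$ for the remaining stages. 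This expression is affine in $\pi(A_t=1\mid s)$ with slope $\Delta_t(s):=E(G_t\mid S_t=s,A_t=1)-E(G_t\mid S_t=s,A_t=0)$, so for each $s$ it is maximized over $\pi(A_t=1\mid s)\in[0,1]$ by the vertex choice $I(\Delta_t(s)>0)$; ties, where $\Delta_t(s)=0$, may be broken arbitrarily since they do not change the value.

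Then I would run the induction. Base case $t=T$: here $G_T=R_T$, the inner expectations involve no further actions, and the display above shows the stage-$T$ optimal choice is $\pi(A_T=1\mid s)=I\bigl(E(R_T\mid S_T=s,A_T=1)>E(R_T\mid S_T=s,A_T=0)\bigr)$, which is the claimed form with $t=T$. Inductive step: suppose the optimal policy at stages $t+1,\dots,T$ is deterministic of the stated form; then the continuation value $E(G_{t+1}\mid\cdot)$ appearing inside $\Delta_t$ is the \emph{optimal} continuation value, and the affine-in-$\pi(A_t=1\mid s)$ structure identified above shows the optimal stage-$t$ choice is again $I(\Delta_t(s)>0)$ with $\Delta_t(s)=E(G_t\mid S_t=s,A_t=1)-E(G_t\mid S_t=s,A_t=0)$ under this optimal continuation, which is exactly the expression in the statement. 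Assembling the stagewise optima yields $\arg\max_\pi V_0$ and completes the induction. Finally I would record the consequence invoked in Remark \ref{rem:beta0determ}: under the model (\ref{eq:thetapolicyInf}) we have $\pi_\theta(A_t=1\mid s)=\expit(\theta^Ts)\in(0,1)$ strictly for every finite $\theta$, so a deterministic, $\{0,1\}$-valued optimum can only be approached by driving $\theta^Ts\to\pm\infty$ on the relevant states, forcing $\|\beta_0\|\to\infty$.

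I expect the main obstacle to be the bookkeeping caused by the transitions being history-dependent rather than Markov (only the policy is assumed Markov and stationary, Assumption \ref{assum:markov}): a priori the reward-to-go is a function of the full history $(S_0,A_0,\dots,S_t)$, not of $S_t$ alone, so the continuation value in the recursion is not literally a function of $S_t$. I would handle this by carrying the full history through the induction, replacing ``$S_t=s$'' by conditioning on the history up to $S_t$ at each step, and then noting that, because the policy is restricted to depend only on the current state and, for continuous $S$, a given state value is visited at most once along a trajectory almost surely, taking the outer expectation over the history given $S_t=s$ preserves the affine dependence on $\pi_\beta(A_t=1\mid s)$; this is precisely why the marginal conditional expectations $E(G_t\mid S_t,A_t=a)$ that appear in the statement, rather than history-conditioned ones, are the correct objects. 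A careful treatment of this marginalization, or, alternatively, an interchange-of-maximization argument via a Gâteaux derivative of $V_0$ in the function $\pi_\beta(1\mid\cdot)$, is the only genuinely delicate point; the remaining steps are routine.
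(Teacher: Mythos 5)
Your proposal is correct and follows essentially the same route as the paper's proof: both exploit the affine dependence of the value on $\pi(A_t=1\mid s)$ to conclude that the optimum sits at a vertex of $[0,1]$, apply this backward from $t=T$ so that the continuation value is the optimal one, and then note that an $\expit$ policy can only approximate an indicator by letting $\|\beta_0\|\to\infty$. Your explicit handling of the history-dependent transitions (the paper conditions only on $S_t,A_t$ and marginalizes the history implicitly) is a welcome extra degree of care, but it does not change the argument.
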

\begin{proof}
We will show that the policy that informs the decision $\pi(A_t=1|S_t)$ is equal to one or zero (i.e., it is deterministic). This has been shown in e.g. \cite{lei2017actor,puterman2014markov}, but we specialize the proof to the binary action, continuous state setting. Consider for now an arbitrary policy $\pi$, which is not parameterized.  We will show that the arbitrary policy must be deterministic, and then we will argue that in order for the model that we use for the policy in this work, defined in (\ref{eq:thetapolicyInf}), to approximate this deterministic policy, at least some of the policy coefficients must approach infinity in magnitude. 

We start by showing that, to maximize value, an arbitrary policy $\pi$ must be deterministic.  Let us consider the value starting at time $t$ and going forward,
\begin{align*}
  V_{0,t:T} &= E\left(\sum_{m=t}^T R(S_m,A_m,S_{m+1})\right)\\
  &= E\left(E\left(\sum_{m=t}^T R(S_m,A_m,S_{m+1})|S_t,A_t\right)\right)\\
 &=\int_{s_t}\sum_{a_t}E\left(\sum_{m=t}^T R(S_m,A_m,S_{m+1})|s_t,A_t\right)\pi(a_t|s_t)p(s_t)ds_t\\
  &=\int_{s_t}E\left(\sum_{m=t}^T R_m|s_t,A_t=1\right)\pi(A_t=1|s_t)p(s_t)ds_t\\&+ E\left(\sum_{m=t}^T R_m|s_t,A_t=0\right)\pi(A_t=0|s_t)p(s_t)ds_t.
      \end{align*}
    Where we have just written out the expectation. Now, the last expression is equal to
    \begin{align*}
    &\int_{s_t}E\left(\sum_{m=t}^T R_m|s_t,A_t=1\right)\pi(A_t=1|s_t)p(s_t)ds_t\\&+ E\left(\sum_{m=t}^T R_m|s_t,A_t=0\right)(1-\pi(A_t=1|s_t))p(s_t)ds_t\\
        &=\int_{s_t}E\left(\sum_{m=t}^T R_m|s_t,A_t=1\right)\pi(A_t=1|s_t)p(s_t)ds_t\\&
        + E\left(\sum_{m=t}^T R_m|s_t,A_t=0\right)p(s_t)ds_t\\
        &-E\left(\sum_{m=t}^T R_m|s_t,A_t=0\right)\pi(A_t=1|s_t)p(s_t)ds_t.
        \end{align*}
   Where we have used the properties of model (\ref{eq:thetapolicyInf}). This is further equal to 
    \begin{align*}
             &\int_{s_t}\left(E\left(\sum_{m=t}^T R_m|s_t,A_t=1\right)  - E\left(\sum_{m=t}^T R_m|s_t,A_t=0\right)\right)\pi(A_t=1|s_t)p(s_t)ds_t\\&
             + \int_{s_t}E\left(\sum_{m=t}^T R_m|s_t,A_t=0\right)p(s_t)ds_t\\
&=\int_{s_t}\left(E\left(\sum_{m=t}^T R_m|S_t,A_t=1\right) -E\left(\sum_{m=t}^T R_m|S_t,A_t=0\right)\right)\pi(A_t=1|s_t)p(s_t)ds_t\\&\ \ +C
\end{align*}
where $C$ does not depend on $\pi(A_t=1|s_t).$
This implies that, since $0\leq \pi_{\beta_0}(A_t=1|s_t)\leq 1$, the maximizing policy is
\begin{align}
\label{indi}
&\pi_{\beta_0}(A_t=1|s_t)= I\left(E\left(\sum_{m=t}^T R_m|S_t,A_t=1\right) -E\left(\sum_{m=t}^T R_m|S_t,A_t=0\right)>0\right).
\end{align}
Note that the expected return at time $t$ depends on the  actions taken  from time $t$ to $T, $ and so if the policy is stationary (Assumption \ref{assum:stationarity}), this action will depend on the policy itself.  We therefore start at time $T$ and go backwards, applying (\ref{indi}) at each time step, so that the policy at time $t$ is an indicator, and depends on a series of indicators that determine actions taken up until time $T$. Hence, at any time step, the optimal policy is deterministic.

We will now show that this determinism has implications for the magnitude of the policy coefficients (to approximate an indicator function, an $\expit$ must have arbitrarily large input, which can only occur if the coefficients of the policy are arbitrarily large).
Note that \[\pi_{\beta_0}(A_t=1|s_t) = \expit(\beta_0^Ts_t)\] and 
\[\expit(\beta_0^Ts_t)=1 \iff \exp (\beta_0^Ts_t)=1+\exp (\beta_0^Ts_t),\] which is satisfied only if $\exp (\beta_0^Ts_t)=\infty,$ which is satisfied if and only if, because $s_t$ is bounded, some entry of $\beta_0$ is infinite (or, in practice, becomes arbitrarily large).  
\end{proof}

\subsection{The expected value of the importance sampling ratio}
\label{app:proveEr}
\proveEr

\subsection{On the adaptive lasso tuning parameter}
\label{app:deltameaning}
\deltameaning

\subsection{Scaling of the state covariates}
\label{app:scaling}
We scale independently of time, because the policy coefficients are constant over time by stationarity (Assumption \ref{assum:stationarity}).  Concretely, for  $i=1,\dots,n,\ t=1,\dots,T,\ k=1\dots,K,$ if $\sigma_{n}(S_k)$ is an estimate for the standard deviation of the dimension $k$ of the state, we scale $S_{itk}$ as
$\tilde{S}_{itk}={S_{itk}}/{\sigma_{n}(S_k)}.$ 
 Since we will be penalizing the coefficients of the suggested policy to the behavioral policy coefficients, and the $K$ covariates that comprise the state $S_t=(S_{t,1},\dots,S_{t,K})^T$ must be scaled to put the coefficients on a level playing field for penalization, we also scale before estimating the behavioral policy $b_n$. 
Scaling is also discussed in \cite{relsparSIM,hoffman2011regularized}. 

\subsection{Additional notation}
We first briefly introduce some notation that was not necessary in the methodology section, but allows for more compact proofs here. Recall that $E$ and $E_n$ are the expectation and the empirical expectation operator, respectively. 
Define the log of the data distribution, which can be written by Assumptions \ref{assum:stationarity} and \ref{assum:markov} as
\begin{equation}
\label{eq:l}
l(b)=\log P_0(S_0)\pi_{b}(A_0|S_0)\prod_{t=1}^{T}\pi_{b}(A_t|S_t)P_{0,t}(S_t|A_{t-1},S_{t-1},\dots,A_0,S_0).
\end{equation}
Note that the behavioral policy parameter estimator is
\begin{equation}
b_n=\arg\max_{b}nE_n l(b).
\end{equation}
Define
\begin{equation}
\label{eq:r}
r = {\prod_{t=0}^T \pi_{\beta}(A_t|S_t)}\biggr/{\prod_{t=0}^T \pi_{b}(A_t|S_t)}. 
\end{equation}
Note that we use lowercase $r$ to refer to the ratio here, which should not be confused with capital $R,$ which refers to the reward in Equation (\ref{eq:R}).
Note that for $V_0$ defined in (\ref{eq:V0}) and rewritten in (\ref{eq:V0times1}), $V_0(\beta)=Ev,$ 
where 
\begin{align}
\label{eq:v}
v=r \sum_{t=0}^{T}R(S_t,A_t,S_{t+1}).
\end{align}
Although $V_0$ as written in (\ref{eq:V0times1}) depends on $b,$ we suppress this dependence on $b,$ because the policies involving $b$ cancel.
Then ${V}_n(\beta,b)={E_n v(\beta,b)}/{E_n r (\beta,b)}.$ 
Define the standard, unweighted importance sampling estimator to be
\begin{align}
\label{eq:vnestInference}
\tilde{V}_n(\beta,b) 
=E_n v(\beta,b).
\end{align}
Note that for $KL_0$ defined in (\ref{eq:KL0}),
$KL_0({\beta},b)= E kl,$
where 
\begin{equation}
\label{eq:kl}
kl=\log\left(1/r\right).
\end{equation}
Note that for $KL_n$ defined in (\ref{eq:KLn}),
$KL_n(\beta,b)=E_n kl$.
Note that if we define 
\begin{equation}
\label{eq:m}
m=v-\gamma kl, 
\end{equation}
then
$M_0(\beta,b)=Em.$
Note that for ${M}_n$ defined in (\ref{eq:mn}),
 ${M}_n(\beta,b_n,\gamma)= {V}_n(\beta,b_n)-\gamma KL_n(\beta,b_n)
 ={E_nv}/{E_n r} - \gamma E_n kl
= E_n \left ({v}/{E_n r} -\gamma  kl\right).$

\subsection{Estimating the behavioral policy}
\label{app:estimatingBehavioral}
Recall that $E_n$ is the empirical expectation operator, where $E_n f(X) = \frac{1}{n}\sum_{i=1}^nf(X_i)$.
Recall the sampling distribution defined in (\ref{eq:l}), which was \[l(b)=\log P_0(S_0)\pi_{b}(A_0|S_0)\prod_{t=1}^{T}\pi_{b}(A_t|S_t)P_{0,t}(S_t|A_{t-1},S_{t-1},\dots,A_0,S_0).\] Define the log likelihood then as 
\begin{equation}
nE_n l = \sum_{i=1}^n\log P_{0,b}(a_{i,0:T},s_{i,0:T+1}).
\end{equation}
If we consider the maximizer,
\begin{equation}
b_n=\arg\max_{b}nE_n l(b),
\end{equation}
then $b_n$ is an estimator for $b_0,$ the true behavioral policy parameter,
and $\pi_{b_n}$ is an estimator for the data-generating ``behavioral'' policy,  $\pi_{b_0}$, where the form of $\pi$ is an $\expit,$ as given in (\ref{eq:thetapolicyInf}).

\subsection{Estimating value with importance sampling}
\label{est.v0.Precup}
\IS


\subsection{Proof of Lemma \ref{lemma:MnConsist}}
\label{app:prf:lemma:MnConsist}
We can by Assumption \ref{as:bconsist} on consistency of the nuisance, perform a Taylor expansion around $b_0,$ so that  
\begin{align*}
{{M}_n}(\beta,b_n)&\approx {{M}_n}(\beta,b_0) + (b_n-b_0)^T \fdr{b}{{M}_n}(\beta_{n,\gamma},b_0)\\
&={{M}_n}(\beta,b_0) + o_P(1) \fdr{b}{{M}_n}(\beta_{n,\gamma},b_0)\\
&={{M}_n}(\beta,b_0) + o_P(1).
\end{align*} 
where the last equality follows due to boundedness of the derivative by Assumption \ref{as:mbounded} and the Central Limit Theorem, since the sample derivative is an average.
Hence, we only consider the convergence of ${M}_n(\beta,b_0).$ 
Note that ${M}_n = {V}_n - \gamma KL_n.$ We have that ${V}_n$ converges to $V_0$ by Remark \ref{rem:Vnconsist},  and $KL_n\gop KL_0$ by the Law of Large Numbers.

\subsection{Proof of Lemma \ref{lem:limDerivsRn}}
\label{app:limDerivs}
Let us show that $\fdr{\beta}{E_n r}\gop 0.$  
 By Assumption \ref{as:mbounded}, if we just set $\gamma=0$ and $R=1$, we have that $|\fdr{\beta}{r}|< C,$ for some constant $C$, which allows us to apply Lebesgue's Dominated Convergence Theorem. Then
\begin{align*}
E_n \fdr{\beta}{r} &\gop E \fdr{\beta}{r}, \text{\ \ \ \ \ \ \ \ \ \      (Law of Large Numbers)}\\
&=\fdr{\beta}{E}{r} \text{ \ \ \ \ \ \ \  \ \ \ \             (Dominated Convergence Theorem) }\\
&= \fdr{\beta}{1} \text{\ \ \ \ \ \ \ \ \ \ \ \ \ \ \ (Remark \ref{rem:Er1})}\\
&=0.
\end{align*}
Identical arguments can be made to show that $E_n \fdr{b}{r}\gop 0$ and $E_n \sddr{b}{\beta}{r}\gop 0.$



\subsection{Proof of Lemma \ref{lem:limDerivsMn}}
\label{app:lem:limDerivsMn:prf}
We first show that $J_n=\fdr{\beta}{{M}_n}\gop \fdr{\beta}{M_0}=J_0.$ 

Note that if we write the arguments explicitly, we have $\fdr{\beta}{{M}_n}(\beta,b_n).$  We can, however, by Assumption \ref{as:bconsist} on consistency of the nuisance, perform a Taylor expansion around $b_0,$ so that  
\begin{align*}
\fdr{\beta}{{M}_n}(\beta,b_n)&\approx \fdr{\beta}{{M}_n}(\beta,b_0) + (b_n-b_0)^T \sddr{b}{\beta}{{M}_n}(\beta_{n,\gamma},b_0)\\
&= \fdr{\beta}{{M}_n}(\beta,b_0) + o_P(1) \sddr{b}{\beta}{{M}_n}(\beta,b_0)\\
&= \fdr{\beta}{{M}_n}(\beta,b_0) + o_P(1),
\end{align*} 
where the last equality follows due to boundedness of the cross derivative by Assumption \ref{as:mbounded} and the Central Limit Theorem, since the sample cross derivative is an average.
Hence, when we write  $\fdr{\beta}{{M}_n}$ below, we implicitly take this Taylor expansion and only focus on showing convergence of the term  $\fdr{\beta}{{M}_n}(\beta,b_0),$ and we will do the same for the arguments concerning $H_n$ and $X_n$ that follow.

Now we will show that  $\fdr{\beta}{{M}_n}$ converges to the desired expectation. Recall that $z$ is defined in (\ref{eq:z}). Note
\begin{align}
\label{eq:zn}
J_n = \fdr{\beta}{{M}_n} = E_n z
=E_n\left(
\frac{ \fdr{\beta}{v} }{E_n r}
- \frac{v (E_n \fdr{\beta}{r})}{(E_n r)^2} 
- \fdr{\beta}{\gamma  kl(\beta,b)}\right).
\end{align}
Define also \[J_0=\fdr{\beta}{M_0}=E\left(
 \fdr{\beta}{v} 
- \fdr{\beta}{\gamma  kl(\beta,b)}\right).\]
We need to therefore show that $J_n\gop J_0.$  By Remark \ref{rem:Er1}, we have that $E_n r\gop Er=1,$ and by Lemma \ref{lem:limDerivsRn} of Appendix \ref{app:limDerivs}, $E_n \fdr{\beta}{r}\gop 0.$ Hence, we can conclude, based on Slutsky's theorem, that $J_n\gop J_0.$  

Let us now consider the Hessian, $H_n.$
Taking derivatives (the building blocks of which
are shown in Appendix \ref{app:gradients}), we have that 
\begin{align}
\label{eq:Hn}
H_n \notag &= 
\frac{E_n\sdsr{\beta}{v}}{E_n r}
- \frac{E_n\fdr{\beta}v}{E_n r}\frac{(E_n\fdr{\beta}{r})^T}{E_n r}
- \frac{E_n\fdr{\beta}{r}}{E_n r} \frac{(E_n \fdr{\beta}{v}  )^T}{E_n r}\\ 
&-\frac{E_n v}{E_n r} \frac{E_n\sdsr{\beta}{r}}{E_n r} 
+2\frac{E_n v}{E_n r} \frac{E_n \fdr{\beta}{r}}{E_nr} \frac{(E_n \fdr{\beta}{r})^T  }{E_n r}
-\sdsr{\beta}{\gamma   KL_n(\beta,b)}\\ &= (I) + (II) + (III) + (IV) + (V) +(VI).  
\end{align}

Note that by Lemma \ref{lem:limDerivsRn} and Slutsky's theorem, the terms $(II-V)$ converge in probability to zero. The term $E_n r \gop 1$ by Remark \ref{rem:Er1}. Note then that by the Law of Large Numbers, $H_n\gop H_0.$ 

We also have that, similarly,
\begin{align}
\label{eq:Xn}
X_n  \notag &= 
\frac{ E_n \sddr{b}{\beta} v }{E_n r} 
- \frac{E_n \fdr{\beta}{v}}{E_n r}\frac{(E_n \fdr{b}{r} )^T}{E_n r}  
- \frac{E_n \fdr{\beta} r}{E_n r} \frac{(E_n \fdr{b}{v} )^T }{ E_n r }\\  
& - \frac{E_n v}{E_n r} \frac{E_n \sddr{b}{\beta}{ r}  }{E_n r}   
+2 \frac{E_n v}{E_n r}  \frac{E_n \fdr{\beta}{r}}{E_n r}\frac{(E_n\fdr{b}{r})^T  }{E_n r}  
- \sddr{b}{\beta}{\gamma KL_n(\beta,b)}\\
&= (I) + (II) + (III) +(IV) + (V) + (VI).
\end{align}
Note that, as with $H_n,$ by Lemma \ref{lem:limDerivsRn} and Slutsky's theorem, the terms $(II-V)$ converge in probability to zero. The term $E_n r \gop 1$ by Remark \ref{rem:Er1}.  Note then that by the law of Large Numbers and Slutsky's theorem,
$X_n\gop X_0.$

\subsection{Proof of Theorem \ref{thm:betanConsistNorm} (i)}
\label{app:consistproof}
\consistproof

\subsection{Proof of Theorem \ref{thm:betanConsistNorm} (ii)}
\label{app:norprf}
\norprf

\subsection{Proof of Theorem \ref{thmNormalityBetanLambda}}
\label{app:adaprf}
\adaprf

\subsection{Behavioral policy (nuisance) influence function}
\label{nuisinfluencefunction}
\newcommand{\nuisinfluencefunction}{
We derive an estimator for $\sqrt{n}(b_n-b_0)$ that does not depend on $b_0.$ Let $\sqrt{n}(b_n-b_0)\gol q_0,$ which is normally distributed \citep{casella2002statistical,van2000asymptotic}.  Observe, by Taylor's theorem, that
\begin{align*}
\sqrt{n}(b_n-b_0)
&\approx-(E_nl''(b_0))^{-1}\sqrt{n}E_nl'(b_0)\\
&=-\sqrt{n}E_n(E_nl''(b_0))^{-1}l'(b_0)\\
&\approx -\sqrt{n}E_n(E_nl''(b_n))^{-1}l'(b_n)\\
&= \sqrt{n}E_n q = \sqrt{n} q_n,
\end{align*}
where $q$ is defined in (\ref{eq:q})
and $\sqrt{n}q_n$
 does not depend on $b_0,$ as desired.
}
\nuisinfluencefunction


\subsection{Variance of the suggested policy coefficients}
\label{app:avarbeta}
\varDeriv



\subsection{Gradients}
\label{app:gradients}
\grads


\subsection{Estimator for the variance of the value}
\label{app:valuevar}
\valuevar

\subsection{Estimating the true behavior-constrained estimand}
\label{supp.section.estimating.beta.gamma}
We would like to assess coverage now using our estimator for $\sigma^2$, the variance of $\beta_n$. Recall that in the policy search setting, $\beta_{0,\gamma}$ is unknown.  
We thus must estimate $\beta_{0,\gamma}.$ 
In simulation,  since we know the functional form of the reward is (\ref{eq:rewardspec}), we can derive an estimator for $V_0,$ which we call $V_{0,n},$ that avoids importance sampling. 
We do so by finding the maximizer of $V_{0,n},$ as defined below. 
\noindent In our simulations, we defined 
\[R(s_t,a_t,s_{t+1})=-s_{t+1,2} a_t.\]  
Hence, in the multidimensional state, multi-stage case
\begin{align*}
V_0=E_{\beta}\sum_{t=0}^{T}R(S_t,A_t,S_{t+1})&=E\sum_{t=0}^T u^{t-1} R_t\\
&=E\left[\sum_{t=0}^T u^{t-1} E\left[R_t|S_1,\dots,S_t\right]\right] \\
&=-E\left[\sum_{t=0}^T u^{t-1} E\left[(S_{t})_2A_t|S_1,\dots,S_t\right]\right]\\
&=-E\left[\sum_{t=0}^T u^{t-1} E\left[(S_{t})_2A_t|S_t\right]\right] \text{ \ \ \     (since } (A_t\perp S_j| S_t), j<t) \\
&=-E\left[\sum_{t=0}^T u^{t-1} E\left[(S_{t})_2A_t|S_{t,1},\dots,S_{t,K}\right]\right]\\ 
&=-E\left[\sum_{t=0}^T u^{t-1} (S_{t})_2 E\left[A_t|S_t\right]\right] \\
&=-E\left[\sum_{t=0}^T u^{t-1} (S_{t})_2 \expit(\beta^T S_t)\right]\\
&\approx -\frac{1}{n}\sum_{i=1}^n \sum_{t=0}^T u^{t-1} (S_{i,t})_2 \expit(\beta^TS_{i,t}) =: V_{0,n}
\end{align*}
Note that, because we are omniscient when we simulate data, and therefore we know $R_t$, this estimator \textit{does not depend on the ratio defined in (\ref{eq:r})}, which is the destabilizing component of $V_n$. Hence, this estimator will converge arbitrarily quickly. We have chosen to designate this pure Monte-Carlo estimator of $V_0$ as $V_{0,n}$, which is different from our importance sampling estimator of $V_0,$ which we called $V_n.$ In practice, we would never be able to estimate $V_{0,n}$ with real data, but we could estimate $V_n.$  We therefore only use $V_{0,n}$ to check our theoretical results with simulated data. We use a derivation as we did because in simulation we know the true reward function ((\ref{eq:rewardspec})), and therefore we can derive an expression for the expected reward that \textit{avoids the importance sampling} ratio present in $V_n$ from Equation (\ref{eq:vnestw}).

 Define $M_{0,n}=V_{0,n}+\gamma KL_n(\beta,b_n)$  and 
 \begin{equation}
 \label{eq:mc:beta0ngamma}
 \beta_{0,n,\gamma}=\arg\max_{\beta}M_{0,n},
 \end{equation}
 which is an estimator for $\beta_{0,\gamma}$ that we take to be ``true'' because it converges much faster than does $\beta_{n,\gamma}.$  

  We can therefore use the same $n$ for estimating $M_{0,n}$ and ${M}_n$ and get much stabler estimates in the former.  This allows us to then in a sense average over our estimates of $\beta_{0,n,\gamma}$ as $\bar{\beta}_{0,n,\gamma}$, making it even more stable.

\subsection{Simulation coefficients}
\label{sim.coef}
\simcoef


\subsection{Coverage summary statistics}
\label{app:simsum}
\simmetrics

\subsection[Behavioral specification]{Calibration curve of the behavioral policy model in the real data (test set)}
\label{app:calcurve}

A calibration curve for the behavioral policy model is shown in Figure \ref{calcurveInference}.
  We see that the behavioral estimates align with the observed probabilities, which serves as evidence that the logistic regression model for the behavioral policy in Equation (\ref{eq:thetapolicyInf}) is reasonable.

\begin{figure}[htp!]
    \centering
\includegraphics[width=0.5\textwidth]{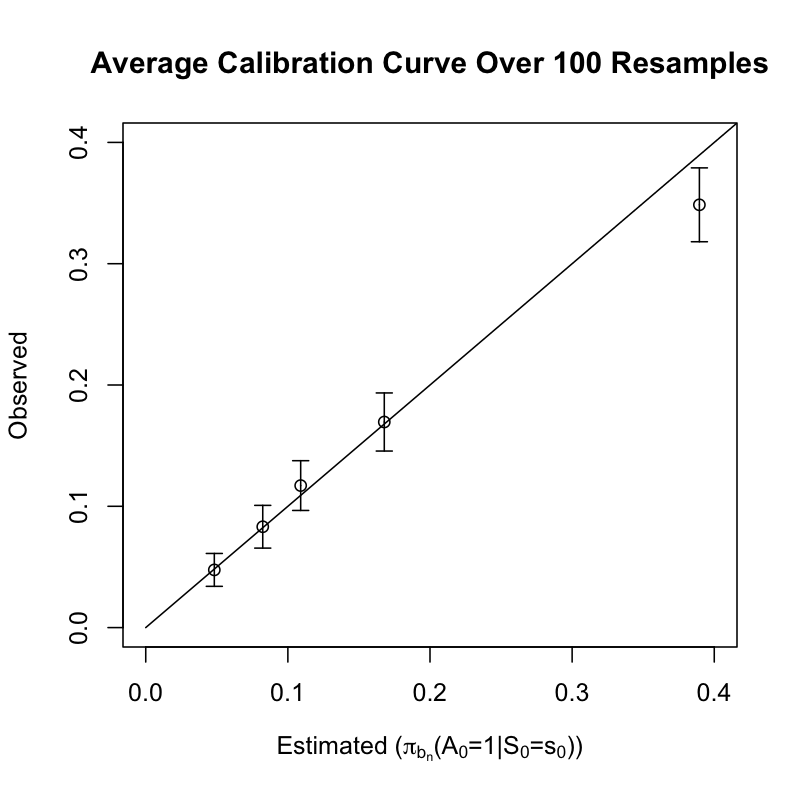}
    \caption[ Real data calibration curve for the behavioral policy]{{\bf  Real data calibration curve for the behavioral policy.} We show a calibration curve for the real data, in which the estimated and observed probabilities are compared. The plot is generated based on the predtools R package. We resample the real data, each time training on one half and then generating a calibration curve for the test data, and then we finally average these test set calibration curves (generated by the R package ``predtools''). The behavioral policy is stationary by Assumption \ref{assum:stationarity}, so we just treat action-state observations at different time steps as independent (the confidence intervals might be too narrow, but mostly we are focused on the fact that the point estimates are approximately near the identity here).
    }
    \label{calcurveInference}
\end{figure}

\end{document}